\documentclass[11pt]{article}
\usepackage{subfigure}
\usepackage{color}
\usepackage{graphicx,amssymb,amsmath}
\usepackage{citesort}

\usepackage{geometry}
\usepackage{fullpage}
\usepackage[final]{hyperref} % adds hyper links inside the generated pdf file
\usepackage{float}

\usepackage{appendix}
\usepackage{graphicx}
\usepackage{amsfonts}
\usepackage{amssymb}
\usepackage{amsmath}
\usepackage{amsthm}
\usepackage{url}
\usepackage{cite}

\hypersetup{
	colorlinks=true,       % false: boxed links; true: colored links
	linkcolor=blue,          % color of internal links
	citecolor=blue,        % color of links to bibliography
	filecolor=magenta,      % color of file links
	urlcolor=blue
}

%\usepackage{color}
%\usepackage{xcolor}
%\usepackage[utf8]{inputenc}
%\usepackage{subfigure}
%\usepackage{CJKutf8}
%\usepackage{amsthm}
%\usepackage{mathalfa}
%\usepackage[mathscr]{eucal}
%\usepackage{eucal}
%\usepackage[mathcal]{eucal}
%\usepackage[T1]{fontenc}
%\usepackage[utf8]{inputenc}
%\usepackage{lmodern}
%\usepackage{tikz}
%\usepackage{setspace}

% math theorem

\newtheorem{lemma}{Lemma}
\newtheorem{theorem}[lemma]{Theorem}
\newtheorem{definition}[lemma]{Definition}

\newtheorem{corollary}[lemma]{Corollary}
\newtheorem{proposition}[lemma]{Proposition}

\newenvironment{proofoflm}[1]{\noindent {\em Proof of Lemma #1: }\ignorespaces}{}

% new command 
% \newcommand{\vector}[1]{${#1}_1,{#1}_2,\cdots,{#1}_n$}

\newcommand{\topic}[1]{\vspace{0.2cm}\noindent{\bf #1:}}

\newcommand{\calA}{\mathcal{A}}
\newcommand{\calG}{\mathcal{G}}
\newcommand{\calH}{\mathcal{H}}
\newcommand{\calS}{\mathcal{S}}
\newcommand{\calP}{\mathcal{P}}

\newcommand{\R}{\mathbb{R}}

\newcommand{\ConstC}{ C }

\newcommand{\OPT}{ \mathsf{OPT}}
\newcommand{\SOL}{ \mathsf{SOL}}

\newcommand{\setcover}{{\sf SC}}

\newcommand{\setDisk}{\mathcal{D} }
\newcommand{\Disk}{\mathsf{D} }
\newcommand{\dcenter}{D}
\newcommand{\point}{P}
\newcommand{\setssquare}{\Xi}
\newcommand{\ssquare}{\Gamma}
\newcommand{\cycle}{\mathsf{Cyc}}

\newcommand{\setPoint}{\mathcal{P} }

\newcommand{\baseline}{\mathsf{b}}
\newcommand{\substructure}{\mathsf{St}}
\newcommand{\activeR}{\mathsf{Ar}}

\newcommand{\UDC}{\textsf{WUDC}}
\newcommand{\MWCDS}{\textsf{MWCDS}}
\newcommand{\MWDS}{\textsf{MWDS}}
\newcommand{\NWST}{\textsf{NWST}}
\newcommand{\MLC}{\textsf{MLC}}
\newcommand{\MSC}{\textsf{MSC}}

\newcommand{\gadget}{\mathsf{Gg}}
\newcommand{\auxgraph}{\mathfrak{S}}
\newcommand{\diskarcgraph}{\mathfrak{B}}
\newcommand{\planargraph}{\mathfrak{D}}

\newcommand{\eat}[1]{}
\newcommand{\Center}{\mathfrak{C}}
\newcommand{\coreCenter}{\mathfrak{C}_{o}}
\newcommand{\apex}{ \theta_{\coreCenter} }
\newcommand{\Path}{ \mathsf{Path} }
% path : cannot use \mathcal{P} which is the same as the point set.

\newcommand{\Region} {\mathbb{R}}
\newcommand{\Reg}{\mathbb{R}}

\newcommand{\Uncover}{\mathbb{U}(\calH)}
%\newcommand{\Cover}{\mathbb{R}(\calH)}

%\newcommand{\Label}{ a1}

% use for dp
\newcommand{\toparcP}{{t_P}}
\newcommand{\basearcP}{{b_P}}

\newcommand{\basearcQ}{{b_Q}}

\newcommand{\calPco}{\mathcal{P}_{\text{co}}}

\newcommand{\ReadyDisk}{\mathcal{R} }
\newcommand{\Neighbor}{N_{\text{p}}(\Disk) }
\newcommand{\state}{\Phi}
\newcommand{\Pathset}{\mathsf{PS}}

\newcommand{\match}{\pi}
\newcommand{\event}{\mathcal{E}}
\newcommand{\poly}{\mathrm{poly}}

\newcommand{\true}{\textrm{True}}

\newcommand{\kset}[1]{ \{#1 \}_{k\in[m]}}

% choose H
\newcommand{\Domain}{ \mathsf{Dom} }

% use for CDS

 % central area of block
 % boundary area of block

\title{A PTAS for the Weighted Unit Disk Cover Problem}
\author{Jian Li \footnote{Email:lijian83@mail.tsinghua.edu.cn}
\quad\quad\quad\quad\quad\quad
Yifei Jin \footnote{Email: jin-yf13@mails.tsinghua.edu.cn} \\
Institute for Interdisciplinary Information Sciences\\
Tsinghua University, China
}

% \setcounter{tocdepth}{3}
% \urldef{\mailsa}\path|{lijian83@mail.tsinghua.edu.cn,|
% \urldef{\mailsb}\path|jin-yf13@mails.tsinghua.edu.cn.}|
% \newcommand{\keywords}[1]{\par\addvspace\baselineskip
% \noindent\keywordname\enspace\ignorespaces#1}

\begin{document}

% \mainmatter  % start of an individual contribution

% % first the title is needed
% \title{A PTAS for the Weighted Unit Disk Cover Problem}
% \titlerunning{A PTAS for the Weighted Unit Disk Cover Problem}
%   \author{Jian Li
% \quad\quad\quad\quad\quad\quad
% Yifei Jin}

% \authorrunning{A PTAS for the Weighted Unit Disk Cover Problem}
% \institute{Institute for Interdisciplinary Information Sciences \\
% Tsinghua University, China \\
% \mailsa\\
% \mailsb\\}

\maketitle

\begin{abstract}
We are given a set of weighted unit disks and a set of points in Euclidean plane.  The minimum
weight unit disk cover (\UDC) problem asks for a subset of disks of minimum total weight that covers
all given points.  \UDC\ is one of the geometric set cover problems, which have been
studied extensively for the past two decades (for many different geometric range spaces, such as
(unit) disks, halfspaces, 
rectangles, triangles).  
It is known that the unweighted \UDC\ problem
is NP-hard and admits a polynomial-time approximation scheme (PTAS).  
For the weighted
\UDC\ problem, several constant approximations have been developed. 
However, whether the problem admits a PTAS has been an open question.
In this paper, we answer this question affirmatively 
by presenting the first PTAS for \UDC.
Our result implies the first PTAS for the minimum weight dominating set problem 
in unit disk graphs.
Combining with existing ideas, our result can also be used to obtain
the first PTAS for the maximum lifetime coverage problem and an improved 
4.475-approximation
for the connected dominating set problem in unit disk graphs.
\end{abstract}

\newpage

\section{Introduction}
\label{sec:intro}

The set cover problem is a central problem in theoretical computer science 
and combinatorial optimization.
A set cover consists of a ground set $U$ and collection $\mathcal{S}$ of subsets of $U$. Each set $S\in \mathcal{S}$ has a non-negative weight $w_S$. The goal 
is to find a subcollection $\mathcal{C} \subseteq \mathcal{S}$ of minimum total weight 
such that $\bigcup\mathcal{C}$ covers all elements of $U$. 
The approximibility of the general \setcover\ problem is rather well understood:
it is well known that the greedy algorithm is an $H_n$-approximation ($H_n=\sum_{i=1}^n 1/i$)
and obtaining a $(1-\epsilon)\ln n$-approximation for any constant $\epsilon>0$ 
is NP-hard \cite{feige1998threshold,dinur2014analytical}.
In the \emph{geometric set cover problem},
$U$ is a set of points in some Euclidean space $\R^d$,
and $\calS$ consists of geometric objects (e.g., disks, squares, triangles).
In the geometric setting, we can hope for better-than-logarithmic approximations
due to the special structure of $\calS$. 
Most geometric
set cover problems are still NP-hard, even for the very simple classes of objects such as unit
disks \cite{Hochbaum1987305, Clark91} (see \cite{Chan2014112,Peled12} for more examples
and exceptions).
Approximation algorithms for geometric set cover
have been studied extensively for the past two decades,
% \cite{Chan2012 , Chan2014112, Clarkson, Varadarajan2009, Erlebach10,
%  Badanidiyuru,  ErlebachSquare}, (NEED MORE REF)
not only because of the importance of the problem per se,
but also its rich connections to other important notions and problems, such as
VC-dimension~\cite{Bronnimann, Clarkson, even2005hitting}, $\epsilon$-net, union
complexity~\cite{Varadarajan2009, varadarajan2010weighted, Chan2012},  
planar separators~\cite{mustafa2009ptas, gibson2010algorithms}, even machine scheduling
problems~\cite{bansal2014geometry}.  
%The goal of much of the research is to obtain
%approximation algorithms and improve the approximation factor. 

In this work, we study the geometric set cover problem with one of the 
simplest class of objects $-$ unit disks.
The formal definition of our problem is as follows:
\begin{definition}
Weighted Unit Disk Cover (\UDC):
  Given a set $\setDisk=\{\Disk_1,\ldots,\Disk_n\}$ of $n$ unit disks
  and a set $\setPoint=\{\point_1,\ldots,\point_m\}$ of $m$ points in Euclidean plane $\mathbb{R}^2$.
  Each disk $\Disk_i$ has a weight $w(\Disk_i)$. 
  Our goal is to choose a subset of disks to cover all points in
  $\setPoint$, and the total weight of the chosen disks is minimized.
\end{definition}

\UDC\ is the general version of  the following minimum weight dominating set problem in unit disk graphs (UDG).
%In fact, several previous results on \UDC\ were stated in the context of the dominating set problem. 

\begin{definition}
\label{df:DS}
Minimum Weight Dominating Set (\MWDS) in UDG:
We are given a unit disk graph $G(V,E)$, where $V$ is a set of weighted points in $\R^2$
and $(u,v)\in E$ iff $\|u-v\|\leq 1$ for any $u,v\in V$. 
A dominating set $S$ is a subset of $V$ such that 
for any node $v\not\in S$, there is some $u\in S$ with $(u,v)\in E$.
The goal of the minimum weight dominating set problem
is to find a dominating set with the minimum total weight.
\end{definition}
	
Given a dominating set instance with point set $V$, we create a \UDC\ instance by placing, 
for each point in $v\in V$, a point (to be covered) co-located with $v$ and a unit disk centered at
$v$, with weight equal to the weight of $v$.  Thus, \UDC\ is a general version of problem \MWDS\ in
UDG. 
% On the other hand, given a \UDC\ instance $(\setDisk,\setPoint)$, 
% we create a dominating set instance by placing
% a point at each disk center, with weight equal to the weight of the disk, 
% and a point co-located with each point in $\setPoint$,
% with weight infinity. 
In this paper, we only state our algorithms and results in the context of \UDC. 

\subsection{Previous Results and Our Contribution}
\label{sec:ourresult}

We first recall that a polynomial time approximation scheme (PTAS) for a minimization problem is an algorithm $\calA$
that takes an input instance, a constant $\epsilon > 0$, returns a solution $\SOL$ such that
$\SOL \leq (1+\epsilon)\OPT$, where $\OPT$ is the optimal value,
and the running time of $\calA$ is polynomially in the size of the input for any fixed constant $\epsilon$.

\UDC\ is NP-hard, even for the unweighted version 
(i.e., $w(\Disk_i)=1$) \cite{Clark91}.
For unweighted dominating set in unit disk graphs, 
Hunt et al.~\cite{Hunt9} obtained the first PTAS
in unit disk graphs. 
For the more general disk graphs, based on the connection between geometric set cover problem and $\epsilon$-nets, 
developed in \cite{Bronnimann, even2005hitting,  Clarkson}, and the existence of $\epsilon$-net of
size $O(1/\epsilon)$ for halfspaces in $\R^3$~\cite{pyrga2008new} (see also \cite{har2014epsilon}), 
it is possible to achieve a constant factor approximation. As estimated in \cite{mustafa2009ptas},
these constants are at best 20 (A recent result~\cite{Norbert15} shows that the constant is at most
13). Moreover, there exists a PTAS for unweighted disk 
cover and minimum dominating set   
via the local search technique~\cite{mustafa2009ptas, gibson2010algorithms}.

For the general weighted \UDC\ problem, the story is longer. Amb\"{u}hl et al.~\cite{Ambuhl06}
obtained the first approximation for \UDC\ 
with a concrete constant 72.
Applying the shifting technique of \cite{hochbaum1985approximation},
Huang et al.~\cite{Huang09} obtained a $(6+\epsilon)$-approximation algorithm for \UDC. 
The approximation factor was later improved to $(5+\epsilon)$ \cite{dai20095},
and to $(4+\epsilon)$ by several groups  \cite{Erlebach10, zou2011new, ding2012constant}.
The current best ratio is 3.63. 
\footnote{
	The algorithm can be found in Du and Wan
	\cite{du2012connected}, who attributed the result to
	a manuscript by Willson et al. 
	} 
Besides, the \emph{quasi-uniform sampling method}~\cite{varadarajan2010weighted, Chan2012} provides
another approach to  achieve a constant 
factor approximation for \UDC\ (even in disk graphs). 
However, the constant  depends on several other constants from rounding LPs and the size of 
\emph{the union complexity}.  Very recently, based on
the separator framework of Adamaszek and 
Wiese~\cite{adamaszek2013approximation}, Mustafa and Raman~\cite{mustafa2014qptas} 
obtained a QPTAS
(Quasi-polynomial time approximation scheme)
 for weighted disks in $\R^2$ (in fact, weighted halfspaces in $\R^3$), thus ruling
out the APX-hardness of \UDC. 
%many constant approximation algorithms have been found for
%geometric set cover problems of low VC-dimension objects \cite{Bronnimann}. In order to get %a better
%approximation factor, an effective way is to find smaller $\epsilon$-nets for the geometric %objects
%\cite{Aronov, Clarkson}. But usually, the small $\epsilon$-net is not easy to find and it is %limited
%by the property of the geometric object such as \emph{low union complexity}. So far, %$\epsilon$-net
%cannot achieve any algorithm better than constant approximation. 

Another closely related work is by Erlebach and van Leeuwen~\cite{ErlebachSquare},
who obtained a PTAS for set cover on weighted unit squares, which is the first
PTAS for weighted geometric set cover on any planar objects (except those poly-time solvable cases~\cite{Chan2014112,Peled12}).
Although it may seem that their result is quite close to a PTAS for weighted \UDC,
as admitted in their paper, their technique is insufficient for handling
unit disks and ``completely different insight is required''.

In light of all the aforementioned results, 
it seems that we should expect a PTAS for \UDC, but it 
remains to be an open question 
(explicitly mentioned as an open problem in a number of 
previous papers, e.g., 
\cite{Ambuhl06, Erlebach10, ErlebachSquare, erlebach2011maximising, du2012connected, van2009optimization}
). 
Our main contribution in this paper is to settle this question affirmatively
by presenting the first PTAS for \UDC.

\begin{theorem}
	\label{thm:UDC}
	There is a polynomial time approximation scheme for the \UDC\ problem.
	The running time is 
    $n^{O(1/\epsilon^9)}$.
\end{theorem}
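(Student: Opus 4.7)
The plan is to combine the Hochbaum--Maass shifting technique with a dynamic program on a carefully designed planar substructure of the disk arrangement. The shifting step partitions $\R^2$ by an axis-aligned grid of side $L = \Theta(1/\epsilon)$ and averages over $L^2$ shifted copies; standard analysis shows that a $(1+\epsilon)$-approximation to the problem restricted independently to each grid block $B$ yields an overall $(1+O(\epsilon))$-approximation, because the total weight of disks whose boundaries cross grid lines is at most an $\epsilon$-fraction of $\OPT$ for a suitable shift. It therefore suffices to approximate the per-block problem: cover all points of $\setPoint \cap B$ using a minimum-weight subset of disks whose closed regions meet $B$.

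For a fixed block $B$, the candidate disks split naturally into \emph{inner} disks (centers in $B$) and \emph{boundary} disks (centers outside $B$ but whose disks intersect $B$). Inner disks live in a region of constant diameter, so their interaction with $\setPoint \cap B$ can be sketched by a polynomially bounded enumeration of candidate subcollections and associated ``covered'' footprints. The real difficulty is the boundary disks: there can be $\Omega(n)$ of them with centers lying in an unbounded corona around $B$, yet $\OPT$ may select arbitrarily many.

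The key geometric observation is that each boundary disk meets $B$ along a single connected arc, which I would decompose into a \emph{top arc} $\toparcP$ and a \emph{base arc} $\basearcP$ relative to the boundary of $B$. These arcs admit an essentially one-dimensional left-to-right ordering along $\partial B$, and pairs of unit-disk boundaries can cross at most twice, which severely restricts the combinatorial structure of their arrangement inside $B$. I would build an auxiliary planar graph $\auxgraph$ whose nodes are the arcs (or boundary disks themselves) and whose edges capture $\partial B$-adjacency plus arc-arc containment; I expect $\auxgraph$ to have bounded treewidth after a further subdivision of $B$ into $O(1/\epsilon)$ vertical strips, so that a DP on $\auxgraph$ can guess, at each separator, which boundary disks are ``open'', which inner-disk sketch is in use, and which points of $\setPoint \cap B$ remain uncovered. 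The resulting state space is polynomial in $n$ and blows up by at most $n^{\poly(1/\epsilon)}$ in the DP exponent.

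The main obstacle, and precisely the place where ``completely different insight'' beyond the Erlebach--van~Leeuwen square result is needed, is the interaction between inner and boundary disks near $\partial B$: the same point may be coverable by an inner disk barely touching $\partial B$ or by a boundary disk reaching deep into $B$, and arbitrary such choices could make the DP state uncontrollable. I expect the fix to involve (i) a structural lemma showing that in an almost-optimal solution the selected boundary disks, restricted to $B$, form a bounded number of ``envelope'' paths in $\auxgraph$ whose top/base arcs interleave in a controlled way, and (ii) a guessing step that enumerates a constant-size sketch (paths, apex witnesses, and cycle structure given by \cycle/\apex) of this boundary behaviour per block, charging any mismatch to the $\epsilon$-slack afforded by the shifting. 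Combining the $L^2 = O(1/\epsilon^2)$ shifts, the per-block sketch enumeration, and the DP on $\auxgraph$ accounts for the stated $n^{O(1/\epsilon^{9})}$ running time.
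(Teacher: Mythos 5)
Your outline aligns with the paper only at the coarsest level (shifting to blocks, then a dynamic program over arcs), but it is missing the central mechanism that makes the per-block problem tractable, and the structural claims you rely on are not established by what you describe.

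The first gap is quantitative. You propose to ``charge any mismatch to the $\epsilon$-slack afforded by the shifting,'' but shifting does not give you slack to pay for extra disks inside a block: Lemma~\ref{lemma:02:01} tells you that a $\rho$-approximation per block yields a $(\rho+O(1/L))$-approximation overall, so you still need a genuine $(1+\epsilon)$-approximation \emph{within} each block. The paper achieves this by first guessing the set $\calG$ of the $C$ most expensive disks of $\OPT$ in the block (so that $\OPT \geq C w_t$ for $w_t$ the cheapest of them), and then showing that the structuring disks $\calH$ number at most $\epsilon C$ and each costs at most $w_t$, giving $w(\calH) \leq \epsilon\OPT$. Without a step of this kind you have no budget that pays for whatever disks your ``sketch'' forces you to take, and your inner/boundary split does not supply one --- the block optimum can legitimately have $\Theta(n)$ disks, weighted arbitrarily.

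The second gap is structural. You assert that after subdividing $B$ into $O(1/\epsilon)$ vertical strips the auxiliary graph $\auxgraph$ has bounded treewidth, and that a separator DP then works. Nothing in the proposal substantiates this, and it is not the route the paper takes. The paper instead tiles the block by $O(\epsilon)$-side small squares and picks, for each, the \emph{farthest pair} of disks (the square gadget); the gadgets cover almost the whole block, leaving thin strip-like ``substructures'' along the gadget boundaries. The real difficulty, which your treewidth claim sidesteps, is making the partial orders of arcs in different substructures \emph{globally consistent} (so one DP can sweep all of them simultaneously) while avoiding double-counting disks whose two arcs fall into R-correlated substructures. The paper handles this by explicitly cutting $\auxgraph$ into an acyclic union of alternating blue/red matchings (Lemma~\ref{lm:calH}), not by a generic separator argument, and it needs the geometry of unit disks at several points (e.g.\ Lemma~\ref{lemma:singleIntersection}, Lemma~\ref{smallangle}, Lemma~\ref{interangle}) to prove those cuts suffice. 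Your proposal names the phenomenon (``envelope paths whose top/base arcs interleave in a controlled way'') but gives no mechanism to obtain it or to bound its cost, which is exactly where the problem's difficulty lives.
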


%Due to the equivalence between \UDC\ and minimum weight dominating set in unit disk graphs, 
Because \UDC\  is more general than \MWDS, 
we immediately have the following corollary.

\begin{corollary}
	\label{cor:DS}
	There is a polynomial time approximation scheme for the minimum weight dominating set problem
    in unit disk graphs. 
\end{corollary}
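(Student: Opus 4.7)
The plan is to reduce \MWDS\ in UDG to \UDC\ in a weight- and solution-preserving way, and then invoke Theorem~\ref{thm:UDC} as a black box. The reduction has essentially been sketched in the paragraph following Definition~\ref{df:DS}, and I would formalize it as follows. Given an instance of \MWDS\ on a unit disk graph $G(V,E)$ with vertex set $V\subseteq\R^2$ and vertex weights $w(v)$, I construct a \UDC\ instance by placing, for every $v\in V$, a point $p_v$ to be covered at the location of $v$, together with a unit disk $\Disk_v$ centered at $v$ of weight $w(\Disk_v):=w(v)$.

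Next I would verify that feasible solutions correspond bijectively with identical total weight. For any $u,v\in V$, the UDG edge condition and the unit-radius choice give that $p_v\in\Disk_u$ if and only if $\|u-v\|\le 1$, i.e., if and only if $(u,v)\in E$ or $u=v$. Consequently, a subset $S\subseteq V$ is a dominating set of $G$ exactly when every point $p_v$ is covered by some disk $\Disk_u$ with $u\in S$; and the two solutions have the same total weight by construction. Hence any $(1+\epsilon)$-approximate \UDC\ solution on the constructed instance immediately yields a $(1+\epsilon)$-approximate dominating set in $G$, and conversely a $(1+\epsilon)$-approximate dominating set gives a $(1+\epsilon)$-approximate \UDC\ solution. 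In particular, the two optima are equal.

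Invoking Theorem~\ref{thm:UDC} on this instance, which contains $|V|$ points and $|V|$ unit disks, then produces a $(1+\epsilon)$-approximate minimum weight dominating set in time $|V|^{O(1/\epsilon^9)}$. There is essentially no obstacle beyond matching the distance conventions of the two problems: both use threshold $1$, so a unit disk centered at $v$ covers exactly the closed neighborhood of $v$ in $G$. The PTAS for \UDC\ therefore transfers to \MWDS\ in UDG with the same approximation ratio and the same asymptotic running time, establishing the corollary.
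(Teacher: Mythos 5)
Your proof is correct and follows exactly the reduction the paper describes in the paragraph after Definition~\ref{df:DS} (co-locate a point and a unit disk of the same weight at each vertex, so dominating sets and disk covers coincide with equal weight), then applies Theorem~\ref{thm:UDC}. This is precisely the paper's intended argument for Corollary~\ref{cor:DS}.
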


We note that the running time $n^{\poly(1/\epsilon)}$ is nearly optimal in light of the negative result by Marx~\cite{marx2007optimality}, who showed that an EPTAS
(i.e., 
Efficient PTAS,  with running time $f(1/\epsilon)\poly(n)$
) even for the unweighted 
dominating set in UDG would contradict the exponential time hypothesis.

Finally, in Section~\ref{sec:app},
we show that our PTAS for \UDC\ can be used to obtain improved approximation algorithms
for two important problems in wireless sensor networks,
the connected dominating set problem and the maximum lifetime coverage problem in UDG.

\section{Our Approach - A High Level Overview}
\label{sec:overview}

By the standard shifting technique\cite{Du11}, it suffices to provide a PTAS for \UDC\ when
all disks lies
in a square of constant size (we call it a block, and the constant depends on $1/\epsilon$).
This idea is formalized in Huang et al.~\cite{Huang09}, as follows.

\begin{lemma}[Huang ta al.~\cite{Huang09}]
	\label{lemma:02:01}
	Suppose there exists a $\rho$-approximation for \UDC\ in a fixed $L\times L$ block,
	with running time $f(n, L)$. Then there exists a
	$(\rho+O(1/L))$-approximation with running time $O(L\cdot n \cdot f(n, L))$ for \UDC.
	In particular, setting $L=1/\epsilon$,
	there exists a $(\rho+\epsilon)$-approximation for \UDC,
	with running time $O\left(\frac{1}{\epsilon}\cdot n\cdot f(n, \frac{1}{\epsilon})\right)$.
\end{lemma}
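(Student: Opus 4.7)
The plan is to apply the standard Hochbaum--Maass shifting technique. For each integer shift $s \in \{0, 1, \ldots, L-1\}$, I would partition $\R^2$ into $L \times L$ blocks using the grid lines $\{x = s + iL\}_{i \in \mathbb{Z}}$ and $\{y = s + iL\}_{i \in \mathbb{Z}}$, and in each block $B$ invoke the hypothesized $\rho$-approximation on the sub-instance whose point set is $\setPoint \cap B$ and whose candidate disk set consists of all disks of $\setDisk$ whose centers lie within distance $1$ of $B$ (the enlargement is necessary because a disk centered just outside $B$ can still cover points in $B$; the enlarged region is still $O(L) \times O(L)$, so the hypothesis applies after at most a constant blow-up of $L$). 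Let $\SOL_s$ be the union of the per-block solutions; the algorithm returns the cheapest $\SOL_s$ across the $L$ shifts. The resulting cover is clearly feasible for the original instance, and the running time is $O(L \cdot n \cdot f(n,L))$ since there are $L$ shifts, at most $O(n)$ non-empty blocks per shift, and each sub-instance takes $f(n,L)$.

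The approximation analysis is a charging argument against $\OPT$. Fix a shift $s$, and for each block $B$ let $\OPT_s^B \subseteq \OPT$ be the subset of disks that intersect $B$; this is a feasible cover for the block-$B$ sub-instance, so the $\rho$-approximation returns a cover of weight at most $\rho \cdot w(\OPT_s^B)$. Summing, $w(\SOL_s) \leq \rho \sum_B w(\OPT_s^B)$, and the right-hand side overcounts $\OPT$ only by the number of blocks each disk meets. Writing $h_s(D), v_s(D) \in \{0,1\}$ for indicators of whether the center of $D$ is within distance $1$ of a horizontal (resp.\ vertical) grid line of the shift-$s$ grid, the number of blocks met by $D$ is exactly $(1+h_s(D))(1+v_s(D))$.

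The crux of the argument is the combinatorial observation that, for any fixed disk $D$, at most two values of $s \in \{0,\ldots,L-1\}$ satisfy $h_s(D) = 1$: this condition reduces to $s \bmod L$ lying in a length-$2$ window around $c_y(D) \bmod L$, and such a window contains at most two integers. The same bound applies to $v_s(D)$, and trivially to $h_s(D) \cdot v_s(D)$. Expanding $(1+h)(1+v) = 1 + h + v + hv$ and summing,
\[
    \sum_{s=0}^{L-1} (1+h_s(D))(1+v_s(D)) \;\leq\; L + 2 + 2 + 2 \;=\; L + 6,
\]
so averaging over shifts yields some $s^\ast$ with $\sum_B w(\OPT_{s^\ast}^B) \leq (1 + 6/L)\,w(\OPT)$ and therefore $w(\SOL_{s^\ast}) \leq (\rho + O(1/L))\,w(\OPT)$ when $\rho = O(1)$. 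I expect the only real nuisance in writing this out in full is the boundary bookkeeping of the partition (open vs.\ closed grid lines, centers exactly on a grid line), which is routine; the combinatorial averaging itself is short.
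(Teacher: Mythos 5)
The paper states this lemma with a citation to Huang et al.\ and does not reproduce a proof, so there is nothing internal to compare against; your reconstruction is the standard Hochbaum--Maass shifting argument and is correct as an independent proof. Your use of a single diagonal shift (the same offset $s$ for both coordinates, hence $L$ shifts rather than the $L^2$ of a two-dimensional shift) is precisely what matches the lemma's stated $O(L\cdot n\cdot f(n,L))$ running time, and the accounting via $(1+h_s(D))(1+v_s(D))$ and the length-$2$ window is the right way to bound the overcount. Two small caveats, neither of which affects the conclusion in context: a length-$2$ window can contain three integers when its endpoints are integral, so ``at most two'' should really be ``at most three'' (or you should fix half-open conventions, which is part of the boundary bookkeeping you already flag); and your bound is $\rho(1+O(1/L)) = \rho + O(\rho/L)$, which equals the stated $\rho + O(1/L)$ only for $\rho=O(1)$, as you note and as holds in every application here. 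Finally, the cleanest way to dispose of the ``constant blow-up of $L$'' is simply to invoke the hypothesis with block side $L+2$ (equivalently, tile by $(L-2)\times(L-2)$ cells and treat each $1$-enlarged neighborhood as the $L\times L$ block), so that each sub-instance literally sits inside a square of the required size.
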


In fact, almost all previous constant factor approximation algorithms for \UDC\
were obtained by developing constant approximations for a single block of a constant size
(which is the main difficulty
\footnote{
	For the unweighted dominating set problem in a single block, it is easy to see the optimal number of disks
	is bounded by a constant, which implies that we can compute the optimum in poly-time.
	However, for the weighted dominating set problem or \UDC\, the optimal solution in a single
    block may consist of $O(n)$ disks.
}).
The main contribution of the paper is to improve on the previous work~\cite{Ambuhl06,Huang09,dai20095,Erlebach10} for a single block,
as in the following lemma.

\begin{lemma}
	\label{thm:02}
	There exists a PTAS for \UDC\ in a fixed block of size
	$L\times L$ for $L=1/\epsilon$.
	The running time of the PTAS is $n^{O(1/\epsilon^9)}$
\end{lemma}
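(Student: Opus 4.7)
The plan is to reduce \UDC\ on a constant-size $L \times L$ block to a dynamic program over a geometric decomposition of the disk arrangement. The main difficulty, as highlighted in the footnote, is that an optimal cover inside a single block may still contain $\Omega(n)$ disks, so we cannot enumerate \OPT\ directly. Instead, I would identify a small-weight, small-cardinality ``separator'' structure inside \OPT, guess it by enumeration, and recurse on the subproblems it induces.

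First I would analyze the boundary $\partial U$ of the union $U = \bigcup_{\Disk \in \OPT} \Disk$. Because every disk is a unit disk and the block has side length $O(1/\epsilon)$, the arrangement of disks in \OPT\ has polynomial combinatorial complexity, and the boundary of each disk decomposes canonically into an upper arc and a lower arc, both $y$-monotone. The central structural claim I would aim to prove is a geometric separator theorem of the following form: after an appropriate shift, there exists a chain of \OPT\ disks crossing the block from left to right whose total weight is at most $\epsilon \cdot \OPT$, whose cardinality is at most $\poly(1/\epsilon)$, and whose removal splits the remaining uncovered points into two subinstances of strictly smaller \OPT\ cost.

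Given such a separator, I would run a divide-and-conquer algorithm. At each level, enumerate all $n^{\poly(1/\epsilon)}$ candidate separators; for each candidate, solve the two subproblems recursively, where the interface is fully described by the guessed chain (so the residual region and the covering requirement restricted to it are well-defined). Combining this with a trivial base case (when a subproblem contains $O(1/\epsilon)$ disks), together with the outer shifting argument of Lemma~\ref{lemma:02:01}, should yield the claimed $n^{O(1/\epsilon^9)}$ running time.

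The main obstacle is establishing the separator theorem for unit disks. In contrast to Erlebach--van Leeuwen's PTAS for unit squares~\cite{ErlebachSquare}, where a horizontal line crosses only a bounded number of squares and can serve directly as the interface, a horizontal line can cross arbitrarily many unit disks, so the one-dimensional interface argument breaks down. The approach I would pursue is to build an auxiliary planar graph whose vertices are arcs of the \OPT-arrangement and whose edges link arcs that are consecutive along a disk boundary or along the upper/lower envelope of $U$; a cheapest left-to-right path in this graph, averaged over a shifted family of thin horizontal strips, should provide a separator of weight at most $\epsilon \cdot \OPT$ by a telescoping/averaging argument over ``levels'' in the arrangement. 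Turning this existential statement into an algorithmic procedure that (i) enumerates only $n^{\poly(1/\epsilon)}$ candidate chains, (ii) respects $y$-monotonicity and planarity so the two sides of the chain are well-separated, and (iii) gives a clean recursive subproblem definition, is where most of the technical work will lie.
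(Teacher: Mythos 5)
Your proposal takes a genuinely different route from the paper, and in its current form it has a gap that the paper's authors were clearly aware of and specifically engineered around. You propose a recursive balanced-separator scheme: find a cheap chain of $\OPT$ disks, guess it by enumeration, split into two subinstances, and recurse. This is structurally the same framework that Adamaszek--Wiese and Mustafa--Raman use, and the paper notes that Mustafa--Raman's separator-based algorithm for weighted disks yields only a \emph{quasi}-polynomial-time approximation scheme. The sticking point is the recursion depth: unless you can guarantee that each split reduces the subproblem size (or the number of $\OPT$ disks, or some other measure) by a constant \emph{factor}, the recursion tree has superconstant depth, and guessing an $n^{\poly(1/\epsilon)}$-size interface at each level multiplies to quasi-polynomial time. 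Your statement that the separator ``splits the remaining uncovered points into two subinstances of strictly smaller $\OPT$ cost'' is far too weak for this; and even with a balanced split, the depth is $\Theta(\log(\OPT/w_{\min}))$, which is not bounded by a function of $\epsilon$ alone. You would also need to account for the total weight of the separator chains summed over all recursion levels, which the single ``$\leq \epsilon\cdot\OPT$'' bound does not control without an additional charging scheme. Finally, the existential separator theorem you flag as ``the main obstacle'' is indeed the crux, and the ``telescoping/averaging over levels'' sketch is exactly the kind of step that is known to work for the QPTAS but has resisted polynomial-time implementation.

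The paper sidesteps recursion entirely. It guesses the $C$ most expensive disks $\calG$ of $\OPT$ (a single $n^{O(C)}$ enumeration), which upper-bounds the weight of every remaining useful disk by $w_t = \min_{\Disk\in\calG} w(\Disk) \le \OPT/C$. It then adds a set $\calH$ of at most $\epsilon C$ carefully chosen extra disks (the ``square gadgets'' plus a constant number of cut disks), paying at most $\epsilon C w_t \le \epsilon\,\OPT$ in total, so that the residual instance decomposes along $\partial(\bigcup\calH)$ into ``substructures'' whose interaction graph is an acyclic $2$-matching. The residual is then solved \emph{exactly} by a single global dynamic program over all substructures simultaneously. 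The crucial point is that this is a one-shot, constant-depth decomposition with a single additive $\epsilon\,\OPT$ loss, and the combinatorial difficulty is pushed into making the DP globally consistent (arc orders, point orders, sibling arcs from R-correlated substructures), not into a multi-level separator recursion. If you want to make your approach work, you would need a genuinely new idea for bounding the recursion depth by a function of $\epsilon$ only, or for amortizing the separator costs so that they sum to $O(\epsilon\,\OPT)$ over all levels; absent that, the proposal lands in QPTAS territory, not PTAS.
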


%We only need to focus on a single block of size $K\times K$ for $K=O(\frac{1}{\epsilon^2})$.
From now on, the approximation error guarantee $\epsilon>0$ is a fixed constant.
Whenever we say a quantity is a constant, the constant may depend on $\epsilon$.
We use $\OPT$ to represent the optimal solution (and the optimal value) in this block.
We use capital letters $A,B,C,\ldots$ to denote points, and small letters $a,b,c,\ldots$
to denote arcs.
For two points $A$ and $B$, we use $|AB|$ to denote the line segment connecting $A$ and $B$
(and its length).
We use $\Disk_i$ to denote a disk and $\dcenter_i$ to denote its center.
For a point $A$ and a real $r>0$, let $\Disk(A,r)$ be the disk centered at $A$ with radius $r$.
For a disk $\Disk_i$, we use $\partial \Disk_i$ to denote its boundary. We call a segment of $\partial \Disk_i$ {\em an arc}.

First, we guess that whether $\OPT$ contains more than $\ConstC$ disks or not for some constant $C$.
If $\OPT$ contains no more than $\ConstC$ disks, we
enumerate all possible combinations and choose the one which covers all points and has the minimum weight.
This takes $O\left( \sum_{i=1}^{C} {n\choose i}\right) = O(n^C)$ time, which is polynomial.

The more challenging case is whether $\OPT$ contains more than $\ConstC$ disks.  In this case, we
guess (i.e., enumerate all possibilities) the set $\calG$ of the $\ConstC$ most expensive disks in
$\OPT$.  There are at most a polynomial number (i.e., $O(n^C)$) possible guesses.  Suppose our guess
is correct.  Then, we delete all disks in $\calG$ and all points that are covered by $\calG$.  Let
$\Disk_t$ (with weight $w_t$) be the cheapest disk in $\calG$.  We can see that
$\OPT \geq \ConstC
w_t$.  Moreover, we can also safely ignore all disks with weight larger than $w_t$ (assuming that
our guess is correct).  Now, our task is to cover the remaining points with the remaining disks,
each having weight at most $w_t$.  We use $\setDisk'=\setDisk\setminus \calG$ and $\setPoint' =
\setPoint\setminus \setPoint(\calG) $ to denote the set of the remaining disks and the set of
remaining points respectively, where $\setPoint(\calG)$ denote the set of points covered by some
disk in $\calG$.

Next, we carefully choose to include in our solution a set $\calH\subseteq \setDisk'$ of at most $\epsilon C$ disks.
The purpose of $\calH$ is to break the whole instance into many (still a constant)
small pieces (substructures), such that each substructure can be solved optimally, via dynamic programming.
\footnote{
An individual substructure can be solved using a dynamic program similar to \cite{Ambuhl06, Peled12}.
}
One difficulty is that the substructures are not independent and
may interact with each other (i.e., a disk may appear in more than one substructure).
In order to apply the dynamic programming technique to all substructures simultaneously,
we have to ensure the orders of the disks in different substructures
are consistent with each other.
Choosing $\calH$ to ensure a globally consistent order of disks is in fact
the main technical challenge of the paper.

Suppose we have a set $\calH$ which suits our need (i.e., the remaining instance
$(\setDisk'\setminus \calH, \setPoint'\setminus \setPoint(\calH))$ can be solved optimally in polynomial time
by dynamic programming).
Let $\calS$ be the optimal solution of the remaining instance.
Our final solution is $\SOL=\calG\cup\calH\cup\calS$.
First, we can see that
$$
w(\calS) \leq w(\OPT-\calG-\calH) \leq \OPT-w(\calG),
$$
since $\OPT-\calG-\calH$ is a feasible solution for
the instance
$(\setDisk'\setminus \calH, \setPoint'\setminus \setPoint(\calH))$.
Hence, we have that
$$
\SOL=w(\calG)+w(\calH)+w(\calS) \leq \OPT + \epsilon \ConstC w_t \leq (1+\epsilon)\OPT,
$$
where
 the 2nd to last inequality holds because $|\calH|\leq \epsilon C$, and
 the last inequality uses the fact that
$\OPT\geq w(\calG)\geq \ConstC w_t$.

\topic{Constructing $\calH$}
Now, we provide a high level sketch for how to construct $\calH\subseteq \setDisk'$.  First, we
partition the block into {\em small squares} with side length $\mu = O(\epsilon)$ such that  any disk centered
in a square can cover the whole square and the disks in the same square are close enough. Let the
set of small squares be
$\setssquare=\{\ssquare_{ij}\}_{1\leq i,j \leq K}$ where $K=L/\mu$.
For a small square $\ssquare$, let
$\Disk_{s_{\ssquare}}\in \ssquare$ and $\Disk_{t_{\ssquare}}\in \ssquare$ be the furthest pair of
disks (i.e., $|\dcenter_{s_{\ssquare}}\dcenter_{t_{\ssquare}}|$ is maximized).
We include the pair
$\Disk_{s_{\ssquare}}$ and $\Disk_{t_{\ssquare}}$ in $\calH$, for every small square $\ssquare\in
\setssquare$, and call the pair {\em the square gadget} for $\ssquare$.
See Figure~\ref{fig:gadget} for an example.
We only need to focus on covering the remaining points
in the \emph{uncovered region} $\Uncover$.
%Intuitively, the uncovered region is several narrow gaps among the square gadgets. Consider one %of
%them. The boundary is a closed curve which consists  of the arcs lying on the boundary of %$\calH$. There are several
%disks which can cover some points rounded by the closed curve. Usually, the disk has a part of %arc
%in the uncovered region and the endpoints of the arc lie on the curve. If we set the weight of %the arc
%the same as its disk, then, our problem turn to the
%form that selecting minimum weight arc set to cover all remaining points in $\setPoint'$.

We consider all disks in a small square $\ssquare$.
The uncovered portion of those disks defines two disjoint connected regions
(See right hand side of Figure~\ref{fig:gadget}, the two shaded regions).
We call such a region, together with all relevant arcs, a {\em substructure} (formal definition in Section~\ref{sec:substructure}).	
In fact, we can solve the disk covering problem for a single substructure
optimally using dynamic programming (which is similar to the dynamic program
in\cite{Ambuhl06, Peled12}).
It appears that we are almost done, since
(``intuitively'') all square gadgets have already covered much area of the entire block,
and we should be able to use similar
dynamic program to handle all such substructures as well.
However, the situation is more complicated (than we
initially expected) since the arcs are dependent. See
Figure~\ref{fig:04:03} for a ``not-so-complicated'' example.
Firstly, there may exist two arcs (called {\em sibling arcs}) which belong to the
same disk when the disk is centered in the \emph{core-central area}, as show in
Figure~\ref{fig:gadget}).
The dynamic program has to make decisions for two sibling arcs, which belong
to two different substructures (called R(emotely)-correlated substructures), together.
Second, in order to carry out dynamic program, we need
a suitable order of all arcs. To ensure such an order exists,
we need all substructures interact with each other ``nicely".

In particular, besides all square gadgets, we need to add into $\calH$
a constant number of extra disks. This is done by a series of ``cut" operations.
A cut can either break a cycle, or break one substructure into
two substructures.
To capture how substructures interact,
we define an auxiliary graph, call substructure relation graph $\auxgraph$,
in which each substructure is a node.
The aforementioned R-correlations define a set of blue edges,
and geometrically overlapping relations define a set of red edges.
Though the cut operations, we can make blue edges form a matching, and red edges also
form a matching, and $\auxgraph$ acyclic (we call $\auxgraph$ an acyclic 2-matching).
The special structure of $\auxgraph$ allows us to define an ordering of all arcs easily.
Together with some other simple properties,
we can generalize the dynamic program from one substructure to all substructures simultaneously.

\section{Square Gadgets}
\label{sec:04}

We discuss the structure of a square gadget $\gadget(\ssquare)$ associated with the small square
$\ssquare$.  Recall that the square gadget $\gadget(\ssquare)=\Disk_s\cup\Disk_t$, where $\Disk_s$
and $\Disk_t$ are the furthest pair of disks in $\ssquare$.  We can see that for any disk $\Disk_i$
in $\ssquare$, there are either one or two arcs of $\partial\Disk_i$ which are not covered by
$\gadget(\ssquare)$.  Without loss of generality, assume that $\dcenter_s\dcenter_t$ is horizontal.
The line $\dcenter_s\dcenter_t$ divides the whole plane into two
half-planes which are denoted by  $H^{+}$ (the upper half-plane) and $H^{-}$ (the lower half-plane).
$\partial\Disk_s$ and $\partial\Disk_t$ intersect at two points $P$ and $Q$.
%If we extend the
%segment $\dcenter_s\dcenter_t$ to a line, the line has two intersections $A_s$ and $A_t$ with the
%boundary of the gadget $\partial \gadget(\ssquare)$.
We need a few definitions which are useful
throughout the paper.
Figure~\ref{fig:gadget} shows an example of a square gadget.

\begin{figure}[t]
  \centering
  \includegraphics[width=0.8\textwidth]{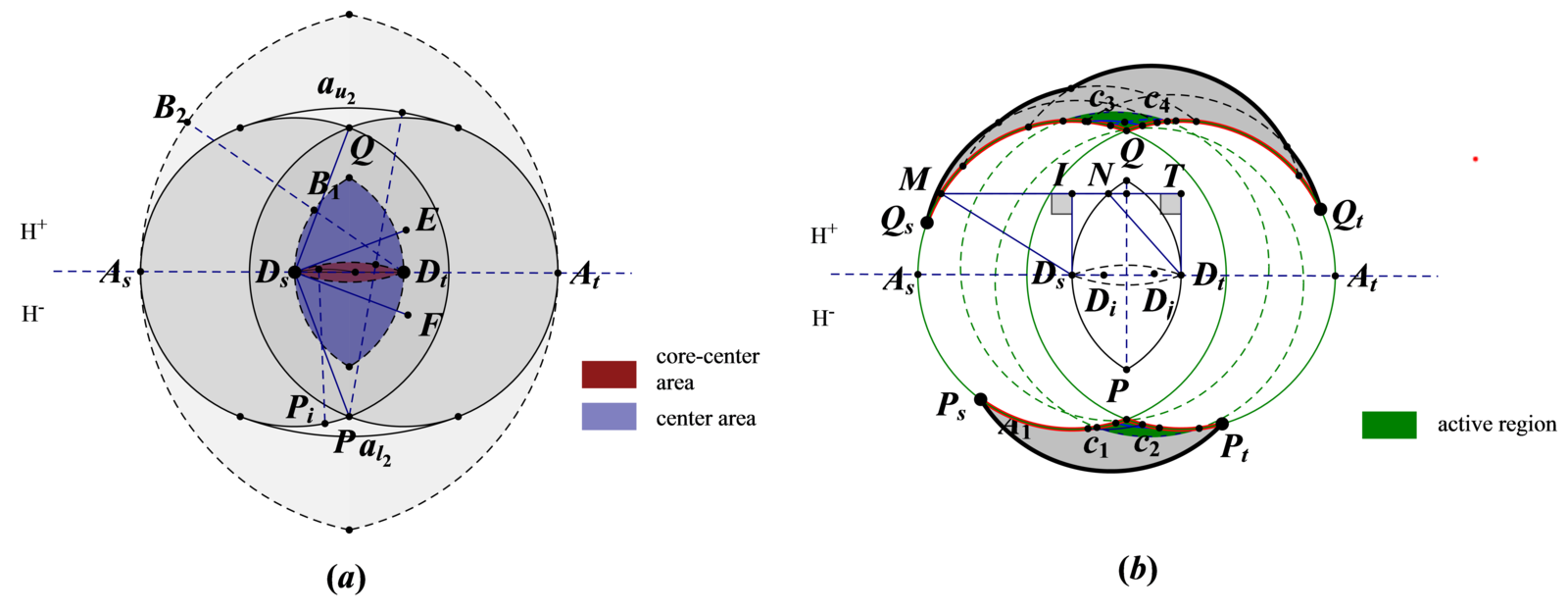}
  \caption{A square gadget. $\Disk_s$ and $\Disk_t$ are the furthest pair of disks in square
$\ssquare$ whose centers are $\dcenter_s$ and $\dcenter_t$.
On the left hand side, the blue region is the central area
$\Center=\Disk(\dcenter_s, r_{st}) \cap \Disk(\dcenter_t, r_{st})$,
where $r_{st} = |D_sD_t|$.
The brown region is the core-central area $\coreCenter=\Disk(P, 1)\cap\Disk(Q,1)$.
On the right hand side, the green area is the active regions,
defined as $\left(\bigcup_{i \in\coreCenter} \Disk_i - ( \Disk_s \cup \Disk_t)\right) \cap H^{+}$
and $\left(\bigcup_{i \in\coreCenter} \Disk_i - ( \Disk_s \cup \Disk_t)\right) \cap H^{-}$. }
  \label{fig:gadget}
\end{figure}

\begin{enumerate}
\item (Central Area and Core-Central Area)
Define the \emph{central area} of $\gadget(\ssquare)$ as the intersection of the two disks
$\Disk(\dcenter_s, r_{st})$ and $\Disk(\dcenter_t, r_{st})$ in the square $\ssquare$, where
$r_{st}=|\dcenter_s\dcenter_t|$. We use $\Center$ to denote it.
Since $\Disk_s$ and $\Disk_t$ are the furthest pair, we can
see that every other disk in $\ssquare$ is centered in the central area $\Center$.

We define the \emph{core-central area} of $\gadget(\ssquare)$ is the intersection of two unit
disks centered at $P, Q$ respectively.
Essentially, any unit disk centered in the core-central area has four
intersections with the boundary of gadget. Let us denote the area by $\coreCenter$.

% \item (Interesting Region)
%   Consider the regions $ ( \bigcup_{i:\dcenter_i \in \Center} \Disk_i - ( \Disk_s
%   \cup \Disk_t) )\cap H^{+}$  and $ ( \bigcup_{i:\dcenter_i \in \Center} \Disk_i - ( \Disk_s
%   \cup \Disk_t) )\cap H^{-}$ of  $\gadget_{\ssquare}$. We call each of them an \emph{interesting region} of
%   $\gadget_{\ssquare}$.

\item (Active Region) Consider the regions
$$
\left( \bigcup_{\dcenter_i \in \coreCenter} \Disk_i - ( \Disk_s \cup
\Disk_t) \right) \cap H^{+}
\quad\text{ and }\quad
\left( \bigcup_{\dcenter_i \in \coreCenter} \Disk_i - ( \Disk_s \cup
\Disk_t) \right) \cap H^{-}.
$$
We call each of them an
\emph{active region} associated with square $\ssquare$.
An active region can be covered by disks centered in the core-central area.
We use $\activeR$ to denote an active region.
%More concretely, the active region covers a set $\calA$ of arcs  centered in
%$\ssquare$ with endpoints on the boundary of $\gadget_{\ssquare}$.

\end{enumerate}

\section{Substructures}
\label{sec:substructure}

Initially, $\calH$ includes all square gadgets.
In Section~\ref{sec:calH}, we will include in $\calH$ a constant number of extra disks.
For a set $S$ of disk, we use $\Reg(S)$ to denote the region covered by disks
in $S$ (i.e., $\cup_{\Disk_i\in S} \Disk_i$).
Assuming a fixed $\calH$,
we now describe the basic structure of the uncovered region
$\Reg(\setDisk')-\Reg(\calH)$.
\footnote{
Recall that $\setDisk'=\setDisk\setminus \calG$ where $\calG$ is the $\ConstC$ most expensive disks in $\OPT$.
}
For ease of notation, we use $\Uncover$ to denote the uncovered region $\Reg(\setDisk')-\Reg(\calH)$.
Figure~\ref{fig:04:03} shows an example. Intuitively, the
region consists of several ``strips'' along the boundary of $\calH$.
Now, we define some notions to describe the structure of those strips.

\begin{figure}[t]
  \centering
  \includegraphics[width=0.5\textwidth]{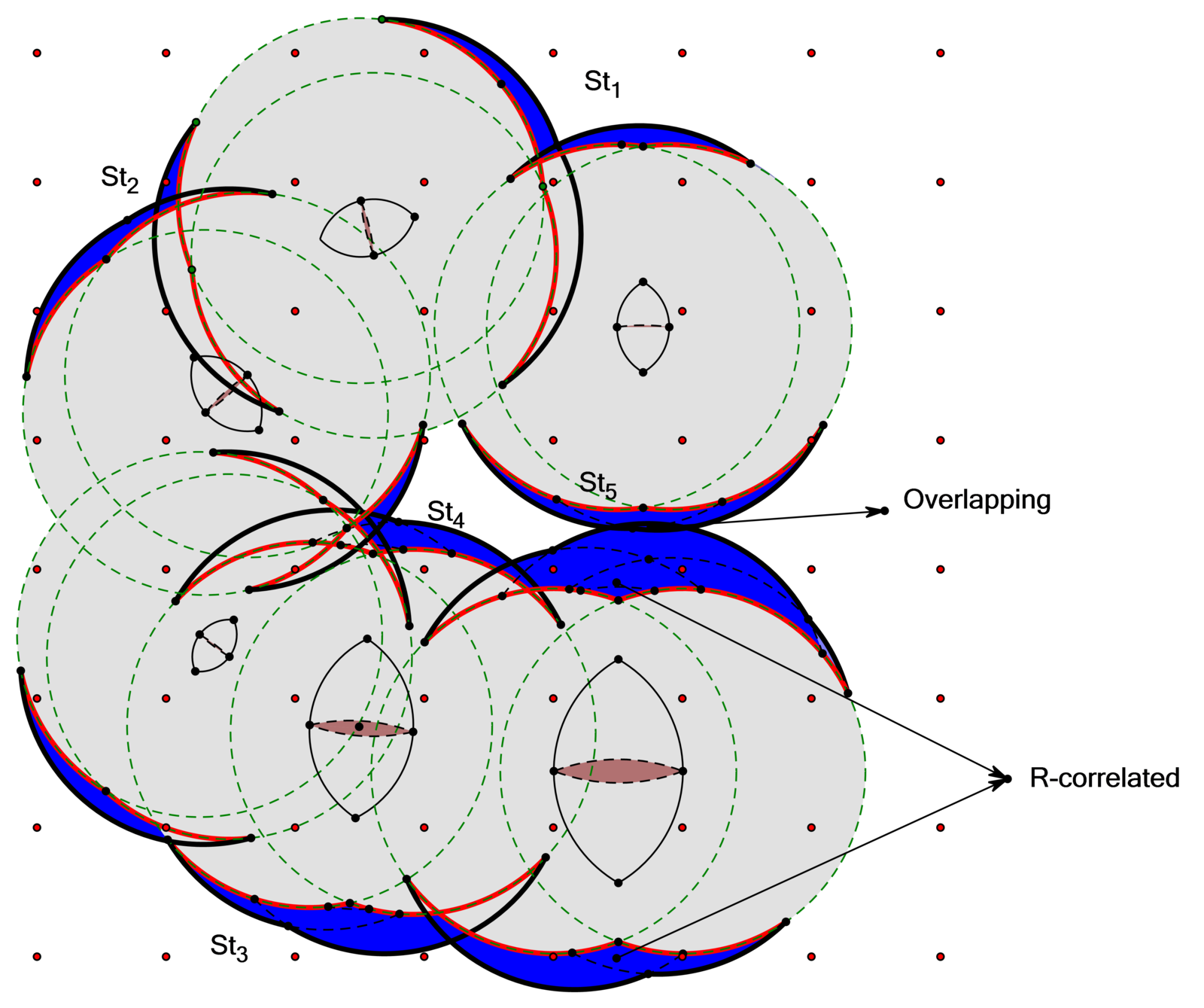}
  \caption{The general picture of the substructures in a block.
  %The size of the block is $7\times 7$.
  The red points are the grid points of small squares. Dash green disks are
  what we have selected in
    $\calH$. There are five substructures in the block.
    %$\substructure_4$ is overlapping with
    %$\substructure_5$. $\substructure_1$ is R-correlated to $\substructure_2$ and
    %$\substructure_5$. $\substructure_4$ is R-correlated to $\substructure_2$ and %$\substructure_3$.
    }
  \label{fig:04:03}
\end{figure}

\begin{enumerate}
%\item(Uncovered Region) We define \emph{uncovered region} to be the region in the block but outside
%  $\calH$. Note that the uncovered region is monotone decreasing with the cardinality of $\calH$ which
%  means the more disks in $\calH$, the smaller region covered by $\calH$. We
%  use $\Uncover$ to denote the uncovered region.
%  Meanwhile we use $\Cover$ to denote the region  covered by $\calH$.

\item(Arcs)
%For a disk $\Disk_i$, we use $\partial \Disk_i$ to denote its boundary.
%We use $\Ball
%(P)$ to denote the $\epsilon$-neighborhood of any point $P$.
Consider a disk $\Disk\in \setDisk'$ and suppose the center of $\Disk$ is in the square $\ssquare$.
Let $\Disk_s\Disk_t$ be the square gadget $\gadget(\ssquare)$, and
without loss of generality assume the line $\dcenter_s\dcenter_t$ is horizontal and
divides the plane into two halfplanes $H^+$ and $H^-$.
$\Disk$ may contribute at most two
\emph{uncovered arcs}, one in $H^+$ and one in $H^-$.
Let us first focus on $H^+$. $\partial \Disk$ intersects $\partial \calH$
at several points \footnote{
The number must be even.
}
in $H^+$. The uncovered arc is the segment of $\partial \Disk$ starting from the first intersection point
and ending at the last intersection point.
\footnote{
Note that an uncovered arc may not entirely lie in
the uncovered region $\Uncover$
(some portion may be covered by some disks in $\calH$).
}
We can define the uncovered arc for $H^-$ in the same way (if $|\partial \Disk \cap \partial \calH|\ne 0$).
Figure~\ref{fig:arc} illustrates why we need so many words to define an
arc. Essentially, some portions of an arc may be covered by some other
disks in $\calH$, and the arc is broken into several pieces.
Our definition says those pieces should be treated as a whole.
%$ with endpoints $(P_1, P_2)$ on the boundary of $\calH$
%which satisfies $\Ball(P_1)\cap a$ and $\Ball(P_2)\cap a $ are in the uncovered region.
In this paper, when we mention an \emph{arc}, we mean an entire uncovered arc
(w.r.t. the current $\calH$).
Note that both endpoints of an arc lie on the boundary of $\calH$.

\item(Subarcs)
For an arc $a$, we use $a[A,B]$ to
denote the closed subarc of arc $a$ from point $A$ to point $B$.
Similarly, we only write $a(A,B)$ to denote the corresponding open subarc
(with endpoints $A$ and $B$ excluded).

\item(Central Angle)
Suppose arc $a$ is part of $\partial \Disk $ for some disk $\Disk$ with center $\dcenter$.
The central angle of $a$, denoted as $\angle(a)$ is the angle whose apex (vertex) is $\dcenter$
and both legs (sides) are the radii connecting $\dcenter$ and the endpoints of $a$.
We can show that $\angle(a)<\pi$ for any arc $a$ (See Lemma~\ref{lm:pi} in Appendix~\ref{apd:substructure})

\item(Baseline) We use $\partial \calH$ to denote to be the boundary of $\calH$.
Consider an arc
$a$ whose endpoints $P_1, P_2$ are on $\partial \calH$. We say the arc $a$ cover a point $P\in \partial \calH$,
if $P$ lies in the segment between $P_1$ and $P_2$ along $\partial \calH$.
We say a point $P\in \partial \calH$ {\em can be covered} if some arc in $\setDisk'$ covers $P$.
A baseline is a consecutive maximal segment of $\partial \calH$ that can be covered.
We usually use $\baseline$ to denote a baseline.

\item(Substructure)  A substructure $\substructure(\baseline,\calA)$ consists of
a baseline $\baseline$ and the collection $\calA$ of arcs which can cover some point in $\baseline$.
%The baseline is a consecutive segment of
%$\partial \calH$.
The two endpoints of each arc $a\in \calA$ are on
$\baseline$ and $\angle(a)$ is less than $\pi$.
Note that every point of $\baseline$ is covered by some arc in $\calA$.
Figure~\ref{fig:substructure} illustrates the components of an substructure.

Occasionally, we need a slightly generalized notion of substructure.
For a set $\calA$ of uncovered arcs,
if they cover a consecutive segment of the boundary of $\calH$,
$\calA$ also induces a substructure denoted as $\substructure[\calA]$.

%If the arcs in $\calA$ are centered in the
%same core central area, we say $\calA$ induces an active region $\activeR(\calA)$.

\end{enumerate}

\begin{figure}[t]
  \centering
  \includegraphics[width=0.3\textwidth]{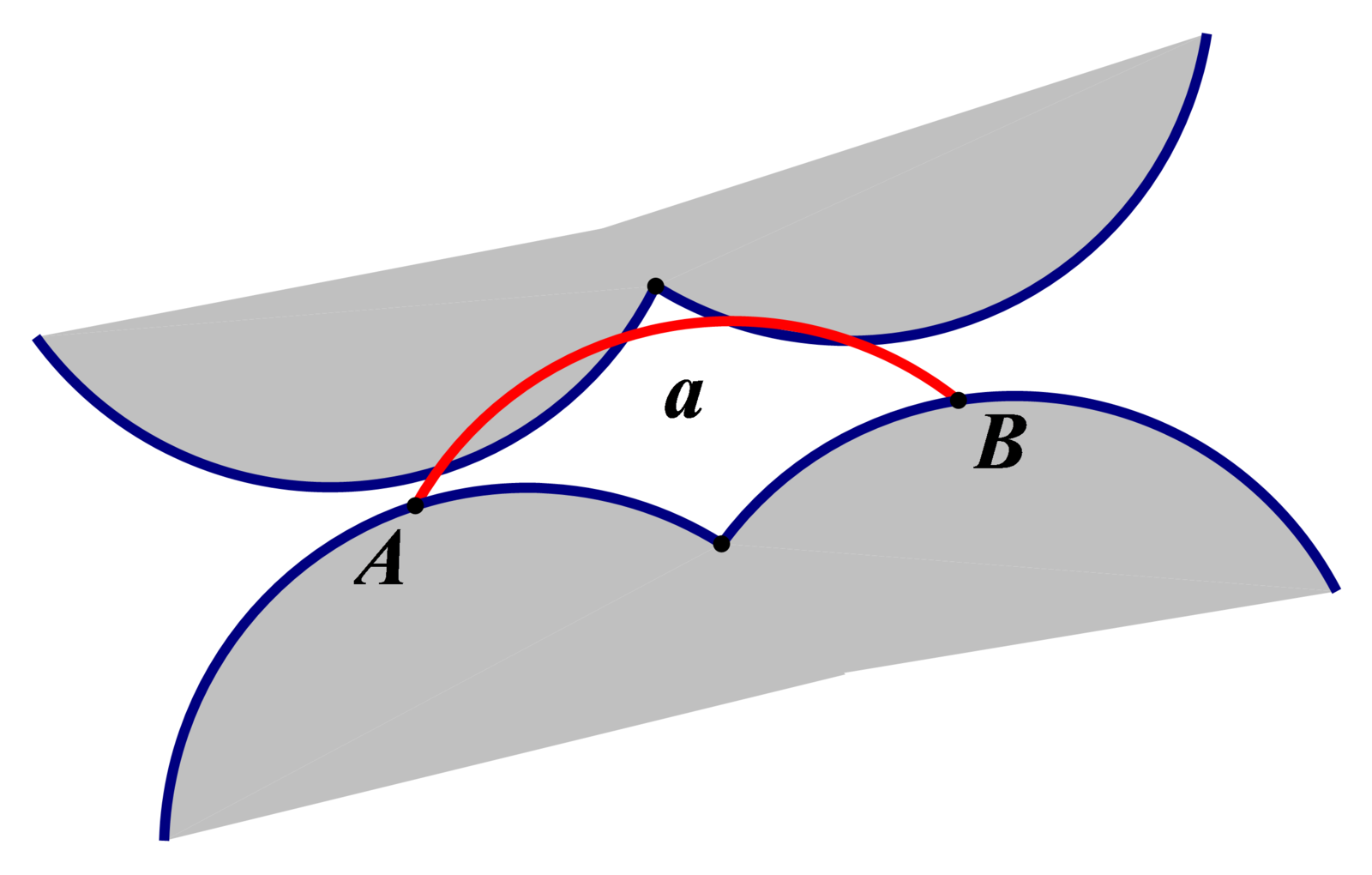}
  \caption{The figure gives an example of an arc. The blue curves are part of the boundary of
    $\calH$. The red curve is an uncovered arc.}
  \label{fig:arc}
\end{figure}

\begin{figure}[t]
  \centering
  \includegraphics[width=0.45\textwidth]{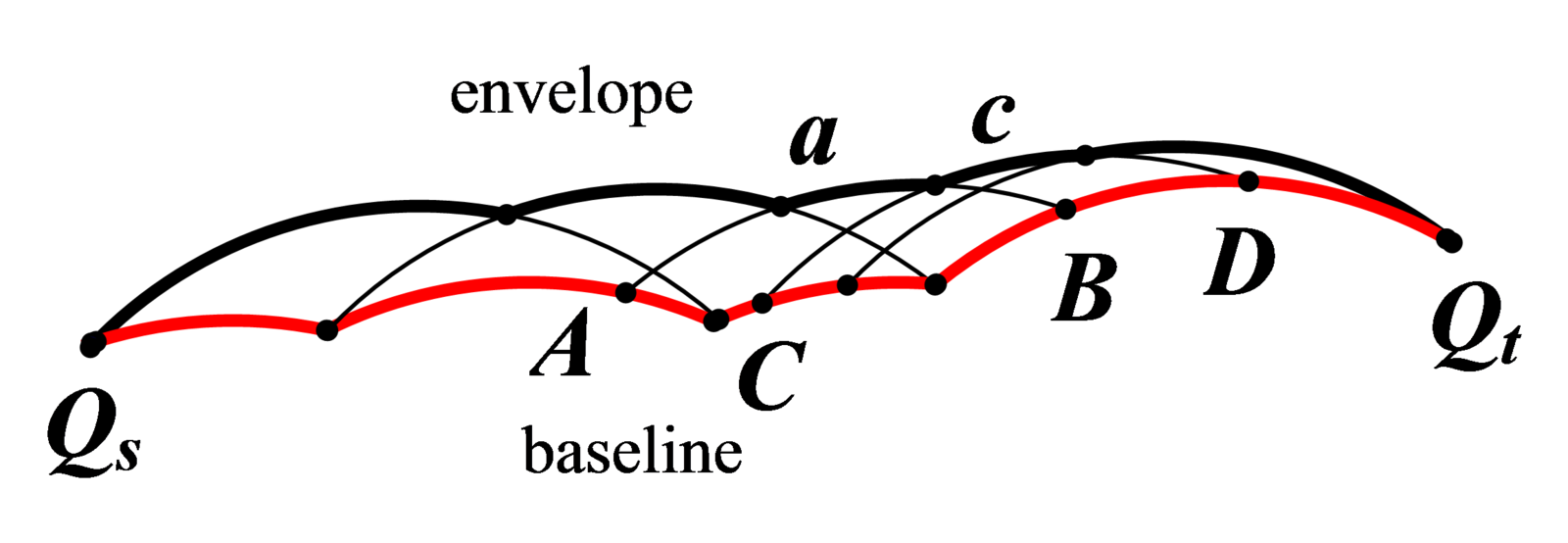}
  \caption{A substructure. The  baseline $\baseline$ consists of the red arcs which are the part of
    consecutive boundary of  $\partial \calH$. $Q_s, Q_t$ are the endpoints of
    $\baseline$. The black curves are are uncovered arcs.  The bold black arcs form the
    envelope. The arc $a \prec c$ because $A \prec C$ and  $B \prec D$.}
  \label{fig:substructure}
\end{figure}

\topic{Arc Order}
Now we switch our attention to the order of the arcs in a substructure $\substructure(\baseline,
\calA)$. Suppose the baseline $\baseline$ starts at point $Q_s$ and ends up at point $Q_t$. Consider
any two points $P_1$ and $P_2$ on the baseline $\baseline$. If $P_1$ is closer to $Q_s$ than
$P_2$ along the baseline $\baseline$, we say that $P_1$ \emph{appears earlier} than $P_2$
(denoted as $P_1 \prec P_2$).
Consider any two arcs $a$ and $c$ in $\calA$. The endpoints of arc $a$
are $A$ and $B$, and the endpoints of arc $c$ are $C$ and $D$. All of points $A, B, C, D$ are on the
baseline $\baseline$. Without loss of generality, we assume that $A \prec B$, $C \prec D$ and $A \prec
C$. If $B \prec D$, We say arc $a$ \emph{appears earlier} than  arc $c$
(denoted as $a \prec c$).
Otherwise, we say $a$ and $c$ are incomparable.
See Figure~\ref{fig:substructure} for an example.
It is easy to see that $\prec$ defines a partial order.

%Besides to compare the order
%of two arcs, we also need to judge their adjacent relation.
%Formally speaking,
\topic{Adjacency}
Consider two arcs $a$ (with endpoints $A\prec B$) and $c$ (with endpoints $C\prec D$).
If $a \prec c$ and $C \prec B$,
we say that $a$ and $b$ are \emph{adjacent} (we can see that they must intersect exactly once),
and $c$ is the \emph{adjacent successor} of $a$.
Similarly, we can define the adjacent successor of subarc $a[P_1,P_2]$. If $c$ is the adjacent
successor of $a$, meanwhile $c$ intersects with
subarc $a[P_1, P_2]$, we say that $c$ is the \emph{adjacent successor} of subarc $a[P_1, P_2]$.
Among all adjacent successors of $a[P_1, P_2]$, we call the one whose
intersection with $a[P_1, P_2]$ is closest to $P_1$ the \emph{first adjacent successor} of
$a[P_1, P_2]$.

In order to carry out the dynamic program in Section~\ref{sec:DP},
we need to properly orient the baseline for each substructure so that the (partial) order or the arcs is well defined.
%Based on the order, we design a dynamic programming.
Our final solution in each substructure can be represented as a path
(which is a segment of the boundary of the union of chosen disks).
%Intuitively, we can use these arcs to construct a path connecting the two endpoints. Each of the arcs
%of solution appears once and only once in the path
%\footnote{Maybe, only a successive subarc but
%  not the whole arc appear on the path
%}.
Our dynamic program essentially needs to determine such a path.
To be precise, we provide a formal
definition of a \emph{valid path}, as follows.

\begin{definition} [A Valid Path]
Consider a substructure $\substructure(\baseline, \calA)$.
Suppose the baseline $\baseline$ is
oriented from $Q_s$ to $Q_t$.  A valid path is a path from $Q_s$ to $Q_t$ which consists of a
sequence of subarcs $\{a_1[Q_s, Q_1]$, $a_2[Q_1, Q_2]$, $\ldots$, $a_k[Q_{k-1},Q_t] \}$
(baseline segments are considered as subarcs as well).
%Here, arcs $a_1, a_2,
%\ldots, a_k \in \calA$ satisfies that $a_i\prec a_j$ for all $i<j$.
For any $a_i$, $a_{i+1}$ is its adjacent
successor (so $a_i\prec a_{i+1}$).
$Q_i$ is the intersection points of the arcs $a_i$ and $a_{i+1}$.
\end{definition}

Note that the baseline from $Q_s$ to $Q_t$ is a trivial valid path
(we do not consider any coverage requirement yet).
Among all the valid paths in a substructure, there is one that
is maximal in terms of the coverage ability, which we call {\em the envelope} of the substructure.

\begin{definition}[Envelope of a Substructure]
Consider a substructure $\substructure(\baseline, \calA)$.
Suppose the baseline $\baseline$ is
oriented from $Q_s$ to $Q_t$.
The envelope of $\substructure $ is the valid path
$\{a_1[Q_s, Q_1]$, $a_2[Q_1, Q_2]$, $\ldots$, $a_k[Q_{k-1},Q_t] \}$ where
$a_{i+1}$ is the first adjacent successor of $a_{i}$ for all $i \in [k]$.
\end{definition}

%Then, we can discuss the covered region of any arc, path, or substructure.

\topic{Coverage}
Consider a substructure
$\substructure(\baseline, \calA)$.
%Suppose $P_1$ and $P_2$ are two points on the baseline
%$\baseline$. We use $\baseline[P_1, P_2]$ to denote the sub-baseline from $P_1$ to $P_2$ of
%$\baseline$.
Consider an arc $a$ with endpoints $A$ and $B$ on baseline $\baseline$.
We use $\baseline[A,B]$ to denote the segment of $\baseline$ that is covered by $a$.
We say that the region surrounded by the arc $a$ and $\baseline[A,B]$ is \emph{covered} by arc $a$
and use $\Reg(a)$ to denote the region.
We note that the covered region $\Reg(a)$
is with respected to the current $\calH$.
Similarly, consider a valid path $\Path$. The region covered by $\Path$
is $\cup_{a\in \Path} \Reg(a)$ (the union is over all arcs in $\Path$)
and is denoted by $\Reg(\Path)$.
Finally, we define the region covered by the substructure $\substructure$,
denoted by $\Reg(\substructure)$, to be the region covered by the envelope of $\substructure$.

\section{Simplifying the Problem}
\label{sec:04:02}

The substructures may overlap in a variety of ways. As we mentioned in Section~\ref{sec:overview},
we need to include in $\calH$ more disks in order to make the substructures amenable to the dynamic
programming technique.  However, this step is somewhat involved and we decide to postpone it to the
end of the paper (Section~\ref{sec:calH}).  Instead, we present in this section what the
organization of the substructures  and what properties we
need  {\em after} including more disks in $\calH$ for the final dynamic program.

\topic{Self-Intersections}
%We already know the order of any two arcs in a substructure. We also need discuss how to keep all
%arcs order consistent in one substructure. Besides, we should discuss the arc order of
%the case that two substructures contain two different active regions of same gadget. In addition, the
%overlapping between substructures also make influence on the order consistence. In the end, we
%give the goal of orienting all substructures.
In a substructure $\substructure$, suppose there are two arcs $a$ and $c$ in $\calA$ with endpoints
$A, B$ and $C, D$ respectively.
If $A \prec B \prec C \prec D$ and $a$ and $ c $ cover at least
one and the same point in $\calP$, we say
the substructure is \emph{self-intersecting}.
In other words, there exists at least one point covered by two non-adjacent arcs in a
self-intersecting substructure.
See Figure~\ref{fig:point-cut-off} for an example.
Self-intersections are troublesome obstacles for the dynamic programming approach.
So we will eliminate all
self-intersections in Section~\ref{sec:calH}.
In the rest of the section, we assume all
substructures are \emph{non-self-intersecting} and discuss their properties.

\begin{lemma}[Single Intersection Property]
  \label{lemma:singleIntersection}
  For any two arcs in a non-self-intersecting substructure, they have at most one intersection in the interior of the
  substructure.
\end{lemma}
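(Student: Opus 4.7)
I would argue by contradiction. Suppose two arcs $a$ (with endpoints $A \prec B$) and $c$ (with endpoints $C \prec D$) in a non-self-intersecting substructure $\substructure(\baseline, \calA)$ meet at two distinct points $P_1, P_2$ strictly in the interior of the substructure. Since $a$ and $c$ are arcs of two distinct unit circles, which intersect in at most two points, $P_1, P_2$ are the only crossings of the underlying circles; outside these two points the arcs are disjoint.

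By symmetry I may assume $A \prec C$ along the oriented baseline. The cyclic order of $A, B, C, D$ on $\baseline$ then falls into three cases: disjoint ($A \prec B \prec C \prec D$), adjacent ($A \prec C \prec B \prec D$), or nested ($A \prec C \prec D \prec B$). The driving geometric observation for all three is that the sub-arcs $a[P_1,P_2]$ and $c[P_1,P_2]$ together bound a lens-shaped region $L$ sitting on the substructure side of $\baseline$, and every point of $L$ lies simultaneously in $\Disk_a$ and $\Disk_c$. Hence $L \subseteq \Reg(a) \cap \Reg(c)$, so arcs $a$ and $c$ cover a common geometric point.

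For the disjoint case this directly witnesses exactly the forbidden ordering $A \prec B \prec C \prec D$ together with a jointly covered point, contradicting non-self-intersection. For the adjacent case I would instead use a topological parity argument: arc $c$ starts at $C \in \baseline[A,B] \subset \partial \Reg(a)$ and ends at $D$ strictly outside $\overline{\Reg(a)}$, so traversing $c$ the number of transitions between the inside and outside of $\Reg(a)$ is odd. Since each such transition must be a crossing with the other boundary piece $a$ (the baseline pieces being handled separately), and unit circles admit at most two crossings in total, exactly one interior intersection can occur.

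The main obstacle will be the nested case. Geometrically the lens argument still produces a point jointly covered by $a$ and $c$, yet the ordering $A \prec C \prec D \prec B$ does not literally match the form $A \prec B \prec C \prec D$ in the stated definition of self-intersection. I expect to handle this either by reading the self-intersection notion in its intended sense (``some point is covered by two non-adjacent arcs,'' and nested arcs are incomparable under $\prec$ and hence non-adjacent, since the adjacency definition requires $a \prec c$), or by invoking structural guarantees produced by the construction of $\calH$ in Section~\ref{sec:calH} that preclude nested pairs within a single substructure. Either route collapses the nested case to the same lens contradiction and closes the proof.
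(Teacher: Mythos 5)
The proposal takes a genuinely different route from the paper, and it is not complete. The paper's own proof is essentially a one-line angle argument: assuming two arcs $a_i,a_j$ meet at two points, the two subarcs between the intersection points are arcs of two equal-radius circles spanning the same chord, and (after pinning down which side of the chord they sit on, which is where non-self-intersection enters) their central angles sum to $2\pi$; hence at least one is $\geq\pi$, contradicting Lemma~\ref{lm:pi}, which says every uncovered arc has central angle strictly less than $\pi$. You instead set up a lens argument plus a three-way case split on the baseline ordering of the four endpoints, with a separate topological parity argument for the ``adjacent'' case.

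There are several concrete gaps in your version. First, the jump from $L\subseteq\Disk_a\cap\Disk_c$ to $L\subseteq\Reg(a)\cap\Reg(c)$ is not justified: $\Reg(a)$ is defined as the region bounded by arc $a$ and the baseline segment $\baseline[A,B]$, not all of $\Disk_a$, so you still owe an argument that the lens lies on the baseline side of both arcs. Second, the paper's definition of self-intersection requires a jointly covered point \emph{of} $\calP$; a jointly covered geometric point in a lens does not suffice, and you would need to either find a $\calP$-point in the lens or, as the paper does, bypass coverage entirely by appealing to the angle bound. Third, your nested case ($A\prec C\prec D\prec B$) is explicitly left open, and the two patches you float (``read the definition in its intended sense'' or ``invoke structural guarantees from Section~\ref{sec:calH}'') are acknowledgments that the argument is not closed, not proofs. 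Finally, you never invoke Lemma~\ref{lm:pi}, even though it is exactly what makes the paper's argument go through in one step, and even though your lens picture silently depends on it (without the bound $\angle(a)<\pi$, the subarcs between the two intersection points need not both be minor arcs, and the lens need not be bounded by them).
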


\begin{proof}
  We prove by contradiction. Suppose $a_i$ and $a_j$ belong to the same substructure. $a_i$ and
  $a_j$ intersect at point $A$ and $B$. Since the substructure is non-self-intersecting, $a_i$ and
  $a_j$ lie on the same side of line $A$ and $B$. Because the two radii of $a_i$ and $a_j$ are
  equivalent, the sum of central angles of $a_i$ and $a_j$ equals $2\pi$. Thus at least one central
  angle of $a_i$  and $a_j$ is no less than $\pi$, rendering a contradiction to the fact that the
  central angle (defined in Sec.~\ref{sec:substructure}) of any arc is less than $\pi$.
 \end{proof}

Base on the single intersection property, we can easily get the following property.

\begin{lemma}
  In a non-self-intersecting substructure, if a point is covered by two arcs $a$ and $b$,
  $a$ is adjacent to $b$.
\end{lemma}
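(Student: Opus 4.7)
\emph{Proof plan.} My approach is a case analysis on the relative order of the four baseline endpoints of arcs $a$ and $b$. Write the endpoints of $a$ as $A\prec B$ and of $b$ as $C\prec D$; by swapping $a$ and $b$ if necessary I may assume $A\preceq C$. Exactly three configurations then arise: (i) \emph{interleaved}, $A\prec C\prec B\prec D$; (ii) \emph{nested}, $A\prec C\prec D\prec B$; and (iii) \emph{disjoint}, $A\prec B\prec C\prec D$.

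Case (i) unwinds the definition of adjacency directly: $A\prec C$ together with $B\prec D$ gives $a\prec b$ in the arc partial order, and $C\prec B$ supplies the remaining condition, so $a$ and $b$ are adjacent and we are done. Case (iii) is excluded by the standing non-self-intersection hypothesis, since in the endpoint order $A\prec B\prec C\prec D$, the presence of a point of $\calP$ covered by both arcs is exactly the configuration that Section~\ref{sec:04:02} labels self-intersecting.

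The substantive step, and the one I expect to be the main obstacle, is case (ii). My plan is to appeal to Lemma~\ref{lemma:singleIntersection} via a topological parity argument. Since $C$ and $D$ both lie in the open segment $(A,B)$ of the baseline, just past each endpoint the arc $b$ enters the interior of $\overline{\Reg(a)}$; every time $b$ leaves $\Reg(a)$ it must cross arc $a$, and by parity it must re-enter before returning to its other endpoint inside $[A,B]$, so interior crossings of $a$ with $b$ come in pairs. Lemma~\ref{lemma:singleIntersection} caps this count at one, forcing zero crossings, and hence $b\subseteq\overline{\Reg(a)}$ with $\Reg(b)\subseteq\Reg(a)$. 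The final step is to rule out this nested containment, which I expect to do by exploiting that both arcs come from unit disks with centers on the opposite side of the baseline (a consequence of the central-angle-less-than-$\pi$ property established for any arc): a chord-and-sagitta analysis of the two unit circles $\partial\Disk_a$ and $\partial\Disk_b$ should either directly contradict $\Reg(b)\subseteq\Reg(a)$ or force a second interior crossing of $\partial\Disk_a$ and $\partial\Disk_b$ that lies on both $a$ and $b$, violating Lemma~\ref{lemma:singleIntersection} a second time. This last geometric step is where I anticipate the chief technical difficulty.
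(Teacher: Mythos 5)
The paper does not actually write out a proof of this lemma; it merely asserts that it follows "easily" from the single intersection property, so there is no paper argument to line your attempt against, only the surrounding definitions. Your three-way split on the endpoint order is the right framework, and cases (i) and (iii) are handled correctly: interleaving supplies both $a\prec b$ and $C\prec B$, which is precisely the paper's adjacency definition, and the disjoint order $A\prec B\prec C\prec D$ with a commonly covered point is verbatim the formal clause defining self-intersection. The parity reduction in case (ii) is also sound (the intermediate crossings of $b$ with the baseline come in pairs, so the parity of crossings with $a$ alone is still even, and Lemma~\ref{lemma:singleIntersection} then forces zero of them), and it correctly pins down the only remaining possibility as $\Reg(b)\subseteq\Reg(a)$.

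Where the proposal has a real gap is the final step of case (ii), and I think the plan you sketch cannot be made to work. For two arcs of \emph{equal} radius sitting over the same (locally nearly straight) baseline with $[C,D]\subsetneq[A,B]$, the chord-and-sagitta comparison goes the opposite way from what you need: the smaller chord gives the \emph{smaller} sagitta, so arc $b$ is a shallower cap than arc $a$. Taking $\dcenter_a$ and $\dcenter_b$ on a common perpendicular through the baseline at slightly different depths produces exactly the nested containment $\Reg(b)\subseteq\Reg(a)$ with zero interior crossings, so the geometry is \emph{consistent} with nesting rather than contradicting it, and "forcing a second interior crossing" will never happen in that configuration. In other words, neither Lemma~\ref{lemma:singleIntersection} nor a sagitta estimate rules out case (ii) on its own.

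What rescues the statement is the paper's own reformulation of self-intersection immediately after the formal clause: \emph{"In other words, there exists at least one point covered by two non-adjacent arcs in a self-intersecting substructure."} Taking that as the operative characterization, your nested case is excluded without any geometry at all: with $A\prec C\prec D\prec B$ the arcs $a$ and $b$ are \emph{incomparable} (since $D\prec B$ fails the $B\prec D$ condition), hence not adjacent, so a common covered point would make the substructure self-intersecting by that characterization, and case (ii) collapses into case (iii). You should make this explicit rather than attempt a geometric exclusion: the nested configuration is being ruled out by the definition being used, not by the circles, and the lemma as intended is essentially a definitional unfolding once the parity argument tells you that interleaving is the unique ordered configuration compatible with a single interior crossing.
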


\topic{Order Consistency}
There are two types of relations between substructures which affect how the
orientations should be done.
One is the \emph{overlapping relation} and the other is \emph{remote-correlation.}
See Figure~\ref{fig:04:03}
for some examples.

As we alluded in Section~\ref{sec:overview}, the two substructures  which contain different related active
regions of the same gadget interact with each other, i.e.,
\begin{definition}[Remote correlation]
  Consider two substructures $\substructure_u$ and $\substructure_l$ which  are not
  overlapping. They contain two different active regions of the same gadget respectively
  (recalling that one gadget may have two different active regions, one in $H^+$, one in $H^-$). We say that they are
  \emph{remotely correlated} or \emph{R-correlated}.
  See Figure~\ref{fig:04:03}.
\end{definition}

There are two possible
baseline orientations for each substructure (clockwise or anticlockwise around the center of the
arc), which gives rise to four possible ways to orient both $\substructure_u$ and $\substructure_l$.
However, there are only two (out of four) of them are consistent (thus we can do dynamic programming
on them).  More formally, we need the following definition:

\begin{definition}[Local Order Consistency]
Consider two substructures $\substructure_u(\baseline_u, \calA_u)$ and $\substructure_l(\baseline_l,
\calA_l)$. There is an orientation for each substructure, such that the
partial orders of the disks for both substructures are consistent in the following sense: It can not
happen that $a_i \prec b_i$ in substructure $\substructure_u$ but $a_j \prec b_j$ in
$\substructure_l$, where $a_i, a_j \in \partial \Disk_a, b_i, b_j \in \partial \Disk_b$ and $ a_i,
b_i \in \calA_u, a_j, b_j \in \calA_l$.
\end{definition}

We show in the following simple lemma that the local order consistency
can be easily achieved for the two substructures containing different active regions of the same
gadget.

\begin{lemma}[Local Order Consistency]
  \label{lemma:04:03}
Consider two substructures $\substructure_u$ and $\substructure_l$ which are R-correlated. Each of
them contains only one active region. Then the local order consistency holds for the two
substructures $\substructure_u$ and $ \substructure_l$.
\end{lemma}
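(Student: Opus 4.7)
The plan is to exhibit an explicit orientation for both baselines and verify consistency by a direct geometric analysis. First I would set up coordinates so that $\dcenter_s\dcenter_t$ is horizontal with $\dcenter_s$ to the left of $\dcenter_t$, and orient $\baseline_u$ and $\baseline_l$ in the same sense along $\partial\gadget(\ssquare)$: both from the leftmost point of the gadget (on $\partial\Disk_s$) to the rightmost point (on $\partial\Disk_t$), so that $\baseline_u$ traverses $H^+$ left-to-right and $\baseline_l$ traverses $H^-$ left-to-right. Since each of $\substructure_u$ and $\substructure_l$ contains only one active region of this gadget, each baseline is a connected segment of $\partial\gadget(\ssquare)$ lying entirely within its respective half-plane.

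Next, I would observe that the only disks $\Disk_a$ contributing arcs to both substructures are those with $\dcenter_a\in\coreCenter$: only then do both $P$ and $Q$ lie inside $\Disk_a$, forcing $\partial\Disk_a$ to cross $\partial\gadget(\ssquare)$ in both half-planes. For such a disk I label the four crossings $A_s^u, A_t^u$ (in $H^+$) and $A_s^l, A_t^l$ (in $H^-$). Parametrising $\partial\Disk_s$ by angle around $\dcenter_s$, the two crossings on $\partial\Disk_s$ sit at angles $\phi_a^s \pm \alpha_a^s$, where $\phi_a^s$ is the angle of $\dcenter_a$ viewed from $\dcenter_s$ and $\alpha_a^s = \arccos(|\dcenter_s\dcenter_a|/2)$; the angle $\phi_a^s + \alpha_a^s$ corresponds to $A_s^u$ and $\phi_a^s - \alpha_a^s$ to $A_s^l$. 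Analogous formulas describe $A_t^u, A_t^l$ in terms of $\phi_a^t, \alpha_a^t$. Under the chosen orientation, position along $\baseline_u$ increases as the angular coordinate decreases on both $\partial\Disk_s$ and $\partial\Disk_t$, while along $\baseline_l$ position increases as the coordinate becomes more negative. Substituting the explicit angles, each of $a_u \prec b_u$ and $a_l \prec b_l$ becomes a conjunction of two inequalities in the differences $\Delta\phi^s, \Delta\alpha^s, \Delta\phi^t, \Delta\alpha^t$ (with $\Delta := (\cdot)_a - (\cdot)_b$).

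The last step is to rule out the simultaneous occurrence of $a_u \prec b_u$ and $b_l \prec a_l$. Combining the four encoding inequalities collapses the obstruction to the simultaneous bounds $|\Delta\phi^s| > |\Delta\alpha^s|$ and $|\Delta\psi^t| > |\Delta\alpha^t|$ (with $\psi^t := \pi - \phi^t$) with specific matching signs. The hard part will be to show that these two bounds cannot both hold for points in $\coreCenter$. I plan to argue via local ratios: a direct computation of the derivatives with respect to an infinitesimal displacement $dv$ shows that $|d\phi^s/dv|/|d\alpha^s/dv| > 1$ only at points of $\coreCenter$ sufficiently close to $\dcenter_s$, while the analogous ratio on the $\Disk_t$ side exceeds $1$ only at points close to $\dcenter_t$. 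Since $\coreCenter = \Disk(P,1) \cap \Disk(Q,1)$ is convex and is contained in the disk of radius $|\dcenter_s\dcenter_t|/2$ about the midpoint of $\dcenter_s\dcenter_t$, no single point of $\coreCenter$ is simultaneously close to both $\dcenter_s$ and $\dcenter_t$. Integrating the local bounds along the segment $\dcenter_a\dcenter_b$ (which stays inside $\coreCenter$ by convexity) then forces $|\Delta\phi^s| \leq |\Delta\alpha^s|$ or $|\Delta\psi^t| \leq |\Delta\alpha^t|$, contradicting the assumed inconsistency; the symmetric case $b_u \prec a_u$ and $a_l \prec b_l$ is ruled out identically by swapping the roles of $a$ and $b$.
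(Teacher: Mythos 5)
Your proposal leaves the crux unproved, and the heuristic you intend to use there does not hold up. The whole argument funnels into the claim that the two bounds $|\Delta\phi^s| > |\Delta\alpha^s|$ and $|\Delta\psi^t| > |\Delta\alpha^t|$ (with the right signs) cannot both hold for $\dcenter_a,\dcenter_b \in \coreCenter$; you defer this to a ``local ratios'' argument based on whether the base point is close to $\dcenter_s$ or to $\dcenter_t$. But the ratios $|d\phi^s/dv| / |d\alpha^s/dv|$ and $|d\phi^t/dv| / |d\alpha^t/dv|$ depend on the \emph{direction} of the displacement $dv$, not only on the position: a displacement of $\dcenter_b$ perpendicular to $\dcenter_s\dcenter_a$ leaves $\alpha^s$ unchanged to first order while moving $\phi^s$, driving the first ratio to infinity at any point of $\coreCenter$, however far from $\dcenter_s$. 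When $\dcenter_a$ sits near the midpoint of $\dcenter_s\dcenter_t$, a vertical displacement is simultaneously near-perpendicular to both $\dcenter_s\dcenter_a$ and $\dcenter_t\dcenter_a$, so both $|\Delta\phi^s| \gg |\Delta\alpha^s|$ and $|\Delta\psi^t| \gg |\Delta\alpha^t|$ at the same base point. Your convexity-plus-distance argument therefore cannot deliver the dichotomy you need, and the proposal as written does not close.

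You should also know the paper's route is far more elementary and avoids this entire calculation. It takes the inconsistent ordering $c_4 \prec c_3$ in $\substructure_u$ and $c_1 \prec c_2$ in $\substructure_l$ (where $c_1,c_3 \subset \partial\Disk_u$ and $c_2,c_4 \subset \partial\Disk_v$) and observes that this pattern forces $\partial\Disk_u$ and $\partial\Disk_v$ to cross on \emph{both} sides of the chord through the crossings of $(c_1,c_2)$ and $(c_3,c_4)$, hence at least four times --- impossible for two unit circles. If you want to salvage a first-principles argument along your lines, you would need to track exactly which signs of $\Delta\phi^s,\Delta\alpha^s,\Delta\phi^t,\Delta\alpha^t$ encode an actual order reversal (several combinations you allow encode incomparability or already-consistent orders, not an inconsistency); once the sign constraints are nailed down, the remaining case is most cleanly killed by exactly the two-circles-meet-twice observation, at which point the analytic setup is doing no work.
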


\begin{proof}
 We consider two substructures $\substructure_u$ and $\substructure_l$.
 The arcs not in the active regions have no influence on the order consistency
 since each of them only appears in one substructure.
 So, we only need to consider the order of the arcs in two active regions.
 We use the same notations as those on the RHS of Figure~\ref{fig:gadget}.
 We orient the upper baseline from $Q_s$ to  $Q_t$,
 and the lower baseline from $P_s$ to $P_t$.
Suppose two arcs $c_1, c_3$ belong to disk $\Disk_u$, and two arcs $c_2, c_4$
belong to the disk $\Disk_v$.
Assume $c_4 \prec c_3$ in substructure $\substructure_u$ and $c_1
\prec c_2 $ in substructure $\substructure_l$.
There must exist another intersection point on
each side of the line connecting the two intersections of $(c_1,c_2), (c_3,c_4)$.
This contradicts the fact that two unit disks have at most two intersections.
\end{proof}

Then, we discuss the situation where  two substructures overlap. We first need a few notations. For
an arc $a$, we use $\Disk(a)$ to denote the disk associated with $a$.  For a substructure
$\substructure(\baseline, \calA)$, we let $\Disk(\baseline)$ be the set of disks that contributes an
arc to the baseline $\baseline$.
%We use $\Region(\Disk)$ to denote the region covered by the disk $\Disk$.
% For an arc $a$, we use $\Disk(a)$ to denote the disk associated with $a$.

\begin{definition}[Overlapping Relation]
\label{def:04:02}
  Consider two substructures $\substructure_1(\baseline_1,\calA_1)$ and
$\substructure_2(\baseline_2, \calA_2)$ and the point set $\calP$. We say $\substructure_1 $ and
$\substructure_2$ overlap when there are two arcs $a \in \calA_1$ and $ b \in \calA_2$ such
that both $a$ and $b$ can cover at least one and the same point of $\calP$ .
\end{definition}

%\begin{lemma}
%Consider two substructures $\substructure_1(\baseline_1,\calA_1)$ and
%$\substructure_2(\baseline_2, \calA_2)$ and the point set $\calP$. When they  overlap, at least one
%of $\substructure_1$ and $\substructure_2$ satisfy the following property. W.l.o.g., we describe
%based on $\substructure_1$.
%
%There exists an arc $a \in \calA_1$ (1) with endpoints $A$ and $B$ such that $A$ and $B$ are outside the
%region $\Disk(\baseline_2)$, (2) which has nonempty intersection with the region
%$\Region(\substructure_2)$, i.e., $\Region(a) \cap \Region(\substructure_2) \neq \emptyset$.
%Local order consistency requires that
%the orientations of the two substructures should be  different (i.e.,  if one is clockwise, the
%other should be  anticlockwise).
%
%\end{lemma}

%There are several properties of two overlapping substructures.
%
%\begin{proposition}
%Consider two substructures $\substructure_1(\baseline_1,\calA_1)$ and
%$\substructure_2(\baseline_2, \calA_2)$ and the point set $\calP$. When they  overlap, at least one
%of $\substructure_1$ and $\substructure_2$ satisfy the following property. W.l.o.g., we describe
%based on $\substructure_1$.
%
%There exists an arc $a \in \calA_1$ (1) with endpoints $A$ and $B$ such that $A$ and $B$ are outside the
%region $\Disk(\baseline_2)$, (2) which has nonempty intersection with the region
%$\Region(\substructure_2)$, i.e., $\Region(a) \cap \Region(\substructure_2) \neq \emptyset$.
%\end{proposition}

The dynamic program need the following property. 
\begin{proposition}
The orientations of the two overlapping substructures should be  different (i.e.,  if one is clockwise, the
other should be  anticlockwise).
\end{proposition}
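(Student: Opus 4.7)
The plan is to argue by contradiction: assume both overlapping substructures are oriented in the same rotational sense along $\partial\calH$, and then exhibit a violation of local order consistency that forces one of them to be flipped.

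First, I will pin down the geometric picture of an overlap. By Definition~\ref{def:04:02}, there are arcs $a\in\calA_1$ and $b\in\calA_2$ whose covered regions share a point $p\in\calP$, so both $\Reg(a)$ and $\Reg(b)$ contain $p$. Recall that $\Reg(a)$ is the bulge bounded by arc $a$ together with the baseline segment $\baseline_1[A,B]$, and lies on the side of $\baseline_1$ opposite to $\calH$; analogously for $\Reg(b)$. Since $\baseline_1$ and $\baseline_2$ are distinct maximal coverable segments of $\partial\calH$ and their two bulges meet at $p$, the two baselines must sit on opposite sides of the overlap zone. Hence a clockwise traversal of $\partial\calH$ hits $\baseline_1$ in one direction relative to the overlap zone, and hits $\baseline_2$ in the opposite direction relative to the same zone.

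Next, I will produce the witnesses of inconsistency. For local order consistency to impose a nontrivial constraint between the two substructures, there must exist two disks $\Disk_a, \Disk_b$ each contributing an arc to both substructures; denote these $a_1,b_1\in\calA_1$ and $a_2,b_2\in\calA_2$. Under the hypothesized common orientation, walking along $\baseline_1$ in its oriented direction meets the footprints of $\Disk_a$ and $\Disk_b$ in some order, say $a_1\prec b_1$. By the facing picture from the previous paragraph, walking along $\baseline_2$ in its oriented direction meets the footprints of $\Disk_a$ and $\Disk_b$ in the reverse order, yielding $b_2\prec a_2$ in $\substructure_2$. This is precisely the forbidden configuration in the definition of local order consistency, a contradiction; therefore one of the two orientations must be reversed.

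The hard part will be making the ``facing'' picture rigorous in every overlap configuration and ensuring that shared disks exist whenever the consistency condition is nonvacuous. Lemma~\ref{lemma:singleIntersection} together with the fact that every arc has central angle less than $\pi$ should restrict how $a$ and $b$ can cross and force the overlap zone to be a single connected region sandwiched between $\baseline_1$ and $\baseline_2$, so the global opposite-sides picture follows without case analysis. If no pair of disks is shared between the two substructures, the order-consistency condition is vacuous, but the proposition still enforces the opposite-orientation convention uniformly so that the downstream dynamic program in Section~\ref{sec:DP} can process all substructures with a single coherent rule.
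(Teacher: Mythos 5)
The paper itself provides no proof of this proposition: it is stated as a requirement that the dynamic program needs (note the immediately preceding sentence, ``The dynamic program need the following property''), and later in the same section it is made into a convention enforced along red edges of the substructure relation graph $\auxgraph$ (``If baseline $\baseline_1$ is oriented clockwise \dots{} then $\baseline_2$ should be oriented counterclockwise''). So there is no paper-side argument to compare your proof against; you have tried to prove something the paper treats as a stipulation.

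That said, your sketch does offer a reasonable account of \emph{why} that stipulation is forced whenever the two substructures actually share two disks. But the load-bearing claim --- that $\baseline_1$ and $\baseline_2$ must sit on opposite sides of the overlap zone, so that a single rotational sense (clockwise around the respective arc centers) induces reversed footprint orders along the two baselines --- is asserted rather than proven, and you acknowledge this. It is not obvious a priori: $\partial\calH$ has several connected components, and two distinct baselines need not face each other across the overlap. Lemma~\ref{lemma:singleIntersection} alone (arcs in a non-self-intersecting substructure cross at most once) does not rule out the bad configurations; you would need something closer to Lemma~\ref{pi/4} and Lemma~\ref{oneByone}, which constrain how arcs from distinct substructures can intersect and force their gadgets to overlap, to pin the picture down. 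Your closing observation is correct that when no disk is shared, local order consistency is vacuous and the opposite-orientation rule cannot be derived from it; in that regime the proposition really is only a convention, which is precisely how the paper uses it throughout.
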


As different substructures may interact with each other,
we need a dynamic program which can run over all substructures simultaneously.
Hence, we need to define a globally consistent ordering of all arcs.

\begin{definition}[Global Order Consistency]
We have global order consistency if there is a way to orient the baseline of each substructure, such
that the partial orders of the disks for all substructures are consistent in the following sense: It
can not happen that $a_i \prec b_i$ in substructure $\substructure_i(\baseline_i, \calA_i)$ but $a_j
\prec b_j$ in $\substructure_j(\baseline_j, \calA_j)$, where $a_i, a_j \in \partial \Disk_a, b_i,
b_j \in \partial \Disk_b$ and $ a_i, b_i \in \calA_i, a_j, b_j \in \calA_j$.
\end{definition}

\topic{Substructure Relation Graph  $\auxgraph$}
we construct an auxiliary graph $\auxgraph$, called the \emph{substructure relation graph},
to capture all R-correlations and overlapping relations.
Each node in $\auxgraph$
represents a substructure.
If two substructures are
R-correlated, we add a blue edge between the
two substructures.  If two substructures overlap,
we add a red edge.

Consider a red edge between $\substructure_1(\baseline_1, \calA_1)$ and
$\substructure_2(\baseline_2, \calA_2)$. If baseline $\baseline_1$ is oriented clockwise (around
the center of any of its arc), then $\baseline_2$ should be oriented counterclockwise, and vise
versa.  The blue edge represents the same orientation relation, i.e., if $\substructure_1(\baseline_1, \calA_1)$ and
$\substructure_2(\baseline_2, \calA_2)$ are R-correlated, $\baseline_1$ and $\baseline_2$ should
be oriented differently.

It is unclear how to orient all baselines if $\auxgraph$ is an arbitrary graph.
So we need to ensure that $\auxgraph$ has a nice structure.
\begin{definition}  [Acyclic 2-Matching]
  \label{auxgraph}
  We say the substructure relation graph $\auxgraph$ is
  an acyclic 2-matching, if $\auxgraph$ is acyclic and is composed by
  a blue matching and a red matching.
  In other words, $\auxgraph$ contains only paths, and the red edges and blue
  edges appear alternately in each path.
\end{definition}

If $\auxgraph$ is a acyclic 2-matching, we can easily
assign each substructure a global arc order consistent orientations.

\topic{Point-Order Consistency}
Similarly to the arc order consistency, we also need define the \emph{point-order
  consistency}, which is also crucial for our dynamic program.
\begin{definition}[Point Order Consistency]
  Suppose a set $\calP_{\mathrm{co}}$ of points is covered by both of two overlapping substructures
$\substructure_1(\baseline_1, \calA_1)$ and $\substructure_2(\baseline_2, \calA_2)$. Consider any
two points $ P_1, P_2 \in \calP_{\mathrm{co}}$ and four arcs $a_1, a_2 \in \calA_1, b_1, b_2 \in \calA_2$. Suppose
$P_1 \in \Region(a_1) \cap \Region(b_1) $ and $P_2 \in \Region(a_2) \cap \Region(b_2)$. But $P_1
\notin \Region(a_2) \cup \Region(b_2)$ and $P_2 \notin \Region(a_1) \cup \Region(b_1)$.
We say $P_1$ and $P_2$ are point-order consistent if
$ a_1 \prec a_2 $ in $\substructure_1$ and $ b_1 \prec b_2 $ in $\substructure_2$.
We say the points in $\calP_{\mathrm{co}}$ satisfy point order consistency
if all pairs of points in $\calP_{\mathrm{co}}$ are point-order consistent.
\end{definition}

After introducing all relevant concepts,
we can finally state the set of properties we need for the dynamic program.

\begin{lemma}
\label{lm:calH}
After choosing $\calH$, we can ensure the following properties holds:
\begin{itemize}
\item [P1.]
(Active Region Uniqueness) Each substructure contains at most one active region.
\item [P2.]
(Non-self-intersection) Every substructure is non-self-intersecting.
\item [P3.]
(Acyclic 2-Matching)  The substructure relation graph $\auxgraph$ is an
acyclic 2-matching, i.e., $\auxgraph$ consists of only paths.
In each path, red edges and blue edges appear alternately.
\item [P4.]
(Point Order Consistency) Any point is covered by at most two substructures. The points satisfy
the point order consistency.
\end{itemize}
\end{lemma}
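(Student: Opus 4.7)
The plan is to build $\calH$ by starting with the union of all square gadgets (as defined in Section~\ref{sec:04}) and then performing a sequence of \emph{cut operations}, each inserting one additional disk of $\setDisk'$ into $\calH$; as suggested in the overview, a single cut either splits a substructure into two substructures or breaks a cycle of $\auxgraph$. Since the block has side $L=1/\epsilon$ and is partitioned into squares of side $\mu=O(\epsilon)$, the number of small squares, and hence of gadgets, active regions, and initial substructures, is bounded by a constant depending only on $\epsilon$. I will bound the total number of cuts by a constant of the same order, so that $|\calH|\leq \epsilon C$ holds by choosing the constant $C$ in the outer guessing step sufficiently large.

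I would establish the four properties in order. For P1, whenever a substructure contains two active regions, they must be joined by a portion of $\partial \calH$ bridged by a single uncovered arc; inserting the corresponding disk into $\calH$ splits the substructure and strictly decreases the number of multi-active-region substructures. For P2, each self-intersection involves two non-adjacent arcs $a, b$ covering a common point $P$; by an argument in the spirit of Lemma~\ref{lemma:singleIntersection}, inserting a disk of $\setDisk'$ that contains $P$ breaks at least one of $a, b$ and eliminates the self-intersection without enlarging $\Uncover$.

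For P3, observe that after P1 the blue edges already form a matching, since each gadget contributes a unique R-correlation between its $H^+$ and $H^-$ substructures. To make the red edges a matching and to eliminate cycles, I would iterate: pick a substructure $\substructure$ that lies on a cycle of $\auxgraph$ or is incident to two red edges, and split it by inserting a disk separating the two ``sides'' of the offending overlap or cycle. Each cut strictly decreases either the cycle count or the excess red-degree while leaving P1 and P2 intact, since adding a disk to $\calH$ only removes arcs from $\Uncover$ and cannot merge active regions or create new self-intersections. Finally, P4 follows from the combination of P1--P3 with the two-intersection property of unit disks: once $\auxgraph$ is an acyclic 2-matching with one active region per substructure, any point lies in at most two substructures, and any violation of point-order consistency across an overlapping pair would force the four relevant disks to intersect in a pattern ruled out by the argument of Lemma~\ref{lemma:04:03}; any residual inconsistency is removed by an extra isolating cut.

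The main obstacle I expect is showing that the cuts done to enforce P3 and P4 do not resurrect violations of P1 or P2, and that the total budget of extra disks remains a constant in $1/\epsilon$. I would handle this via a lexicographic potential that strictly decreases with every cut (first the number of multi-active-region substructures, then the number of self-intersections, then the cycle count of $\auxgraph$, then the excess red-degree, then the point-order inconsistency count), and by arguing geometrically that every required cut can be realized by a disk already present in $\setDisk'$, so that the monotonicity of $\Uncover$ under additions to $\calH$ is preserved throughout. This yields a uniformly bounded number of cuts, completing the construction of $\calH$ and establishing all four properties simultaneously.
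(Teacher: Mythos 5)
Your construction skeleton (start with all square gadgets, then add $O_\epsilon(1)$ extra disks via cuts) matches the paper, but several of the steps you sketch do not survive scrutiny, and a key structural ingredient is missing.

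For P1, you claim that two active regions in one substructure are always ``bridged by a single uncovered arc'' whose disk can be inserted to split them. The paper shows this fails: two active regions can overlap in a way that is \emph{not order-separable} (Definition~\ref{def:order separable}), in which case no single cut disentangles them. The paper's fix is a \emph{merge} operation, applied precisely to these ``double-mixtures,'' together with Lemma~\ref{double-mix} (merges cannot cascade) and Lemma~\ref{lm:no single mixture} (a mixture on one side forces a mixture on the other). Without the merge concept and the order-separability dichotomy, your P1 argument has a genuine hole: you would attempt cuts that do not exist.

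For P3, you propose to iteratively cut until cycles in $\auxgraph$ vanish, tracked by a lexicographic potential. The paper does something fundamentally different and, importantly, cleaner: once P1, P2, and the red-matching property are enforced, it \emph{proves} that $\auxgraph$ is already acyclic (Lemma~\ref{lm:Nocycle}) by a geometric argument — a blue-red alternating cycle would realize a planar polygon whose interior angles are all $\pi-O(\epsilon)$ (via Lemma~\ref{smallangle} and Lemma~\ref{interangle}), forcing $\Omega(1/\epsilon)$ vertices and a diameter exceeding the block. Your ``cut a cycle'' step is not obviously realizable: a blue edge encodes an R-correlation intrinsic to a gadget and cannot be removed by adding disks, and it is unclear that splitting one substructure on a red edge reduces rather than reshuffles the cycle structure. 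Similarly, your P3 red-matching argument omits the paper's Davenport--Schinzel bound on the number of \emph{l-segments}, which is what keeps the number of cuts linear in the number of substructures; a naive greedy here could blow up the budget.

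For P2 and P4 you also deviate from the paper in ways that matter: the paper eliminates self-intersections by a careful sweep along the envelope (Lemma~\ref{arcuniqueness}) rather than by inserting an arbitrary disk through the shared point, and it establishes point-order consistency by constructing an auxiliary directed planar graph $\planargraph$ and breaking its directed cycles, not by appealing to Lemma~\ref{lemma:04:03} (which concerns arc-order consistency for R-correlated substructures, a different statement). Finally, the ``lexicographic potential'' is asserted rather than proved; the non-interference of later cuts with earlier invariants is exactly where the technical work lies, and the paper devotes specific lemmas (e.g., Lemma~\ref{arcuniqueness}, Lemma~\ref{NoAT}, the blue-matching argument) to it rather than relying on monotonicity of $\Uncover$.
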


How to ensure all these properties will be discussed in detail in Section~\ref{sec:calH}.
Now, everything is in place to describe the dynamic program.

\section{Dynamic Programming}
\label{sec:DP}

Suppose we have already constructed the set $\calH$ such that Lemma~\ref{lm:calH} holds
(along with an orientation for each substructure).
%Then we solve the optimal solution $\calS$ of the remaining
%instance by dynamic programming.
%Based on lemma~\ref{lm:calH},  baseline of each substructure has an orientation. The
%orientations are globally consistent.
Without loss of  generality, we can assume that the remaining disks can
cover all remaining points (otherwise, either the original instance is infeasible or our guess is wrong).
In fact, our dynamic program is inspired, and somewhat similar to
those in \cite{Ambuhl06, Erlebach10, Peled12}.

\topic{DP for Two Overlapping Substructures}
For ease of description, we first handle the case where there are only two overlapping
substructures. We will extend the DP to the general case shortly. 
Suppose the two substructures are $\substructure_1(\baseline_1,\calA_1)$ and
$\substructure_2(\baseline_2,\calA_2)$, $\baseline_1$ is oriented from $P_s$ to $P_t$ and
$\baseline_2$ is oriented from $Q_s$ to $Q_t$ 

A state of the dynamic program is a pair $\state=(P, Q)$ where $P$ is an intersection point of two arcs
in substructure $\substructure_1$ and $Q$ is an intersection point of two arcs in substructure
$\substructure_2$.
Fix the state $\state=(P, Q)$ and consider $\substructure_1$.
Let $\basearcP$ and $\toparcP$ be the two arcs intersecting at $P$.
Suppose  $ \basearcP \prec \toparcP$ with endpoints $(A,B), (C,D)$ respectively. 
We call arc $\basearcP$ the 
\emph{base-arc} and  $\toparcP$ the \emph{top-arc} for point $P$.
\footnote{
If $P$ is the tail endpoint of an arc (so $P$ is on the baseline), $P$ only
has a base-arc (no top-arc), which is the baseline arc it lies on.
}
Our DP maintains that the base-arc is always paid in the subproblem, and
  intermediate state.

  \begin{figure}[t]
    \centering
    \includegraphics[width=0.8\textwidth]{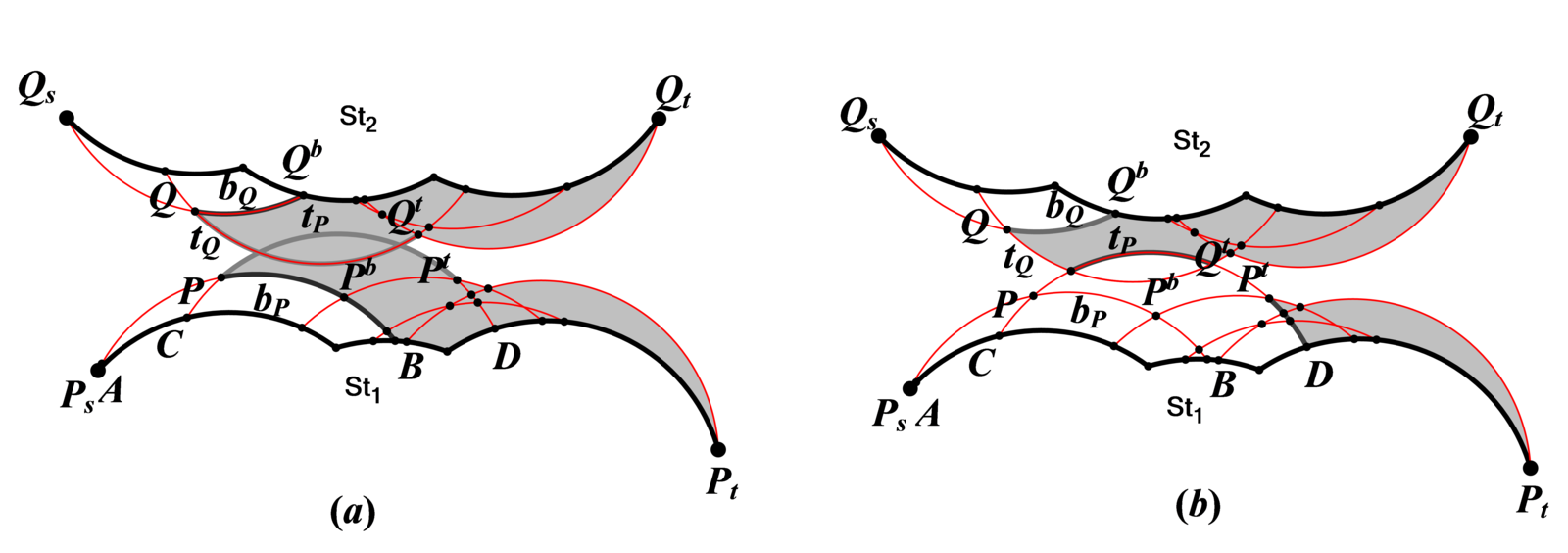}
    \caption{The figure explains the dynamic program of two overlapping substructures. The
      left figure shows the subproblem $\OPT(P,Q)$.  The goal of $\OPT(P,Q)$ is to find minimum valid
      paths for $PP_t$ and $QQ_t$ respectively in set $\calA_{1}[P]\cup \calA_2[Q]$ such that the paths cover
      all points of $\calP[P,Q]$.  The right figure illustrates one of its four smaller subproblems $\OPT(P^t, Q)$. }
    \label{fig:ffsubstructure}
  \end{figure}

Given state $\state=(P, Q)$, now we describe the subproblem associated with the state.
Again first focus on $\substructure_1$.
Intuitively, a feasible solution to the subproblem restricted to $\substructure_1$ (resp. $\substructure_2$)
is a valid path starting from point $P$ (resp. $Q$) and terminating at $P_t$ (resp. $Q_t$).
More specifically,
we construct a substructure $\substructure_1^{[P]}(\baseline_1[P], \calA_1[P])$ (See
Figure~\ref{fig:ffsubstructure}): 
\begin{itemize}
\item
$\baseline_1[P]$ is
the concatenation of subarc $\basearcP[P, B]$ and the original baseline segment $\baseline_1[B,P_t]$.
All arcs in $\baseline_1[P]$ have cost zero.
\item
$\calA_1[P]$ consists of all arcs $a'\in \calA_1$ such that $ b_P \prec a'$ (of course, with the
portion covered by $\baseline_1[P]$ subtracted).
The cost each such arc is the same as its original cost.
\end{itemize}
Similarly, we consider $\substructure_2$ and the intersection point $Q$,
and construct $\substructure_2[Q]$ with baseline $\baseline_2[Q]$ and arc set $\calA_2[Q]$.
%Consider an arc $a' \in \calA_1$. Suppose $a'$ has endpoints $A_1$ and $ A_2$. If $ a'
%$ is an adjacent successor of $\basearcP[P,B]$ and intersects with $\basearcP[P,B] $ at $E$, we add the arc
%$a'[E, A_2]$ in $\calA_{P}$ with the same weight as arc $a'$. If $a'$ is an successor of $\basearcP[P,B]$
%but doesn't intersect with $\basearcP[P,B]$, we retain its weight and add it in $\calA_{P}$ directly. In
%the end, we set the weight of arc $\basearcP[P, B]$ to zero and add it in $\calA_{P}$.
We use
$\calP(a)$ (or $\calP(\calA)$) to denote the points covered by $a$ (or $\calA$)
(w.r.t. the original baseline).
%Define the operations of $\calP$ as
%$\calP(a\cup b) = \calP(a) \cup \calP(b)$ and $ \calP(a\cap b) = \calP(a) \cap \calP(b) $.
%Define
%the point set $\calP_{P,Q}$ as the set of all points $\calP(\calA_{P,Q}) - \calP(a_0) -\calP(b_0)$
%where $\calA_{P,Q} = \calA_{P} \cup \calA_{Q}$.
Let the point set $\calP[P,Q]$ that we need to cover in the subproblem $\state(P,Q)$
be
\begin{align}
\label{eq:calP}
\calP[P,Q]=\calP(\calA_1[P])+\calP(\calA_2[Q]) - \calP(\basearcP) -\calP(\basearcQ).
\end{align}
We note that the minus term $ - \calP(\basearcP) -\calP(\basearcQ)$
is not vacuous as $\basearcP$ (resp. $\basearcQ$) may cover some points
in $\calA_2[Q]$ (resp. $\calA_1[P]$),
and it is important that we do not have to cover those point (this subtlety is crucial in the correctness proof of the DP).
The goal for the subproblem $\state(P,Q)$ is to find two valid paths with minimum total weight,
one from $P$ to $P_t$ and one from $Q$ to $Q_t$, such that
they together cover all points in $\calP[P,Q]$.
Note that the weights of both base-arcs $\basearcP$ and $\basearcQ$ should be included in any feasible solution as well.

%Suppose the first successor of $\basearcP(P,B]$ is $\basearcP'$ and the first successor of $\toparcP(P,D]$ is
%$\toparcP'$. $\basearcP(P,B]$ and $\basearcP'$ intersect at $\Pd$, meanwhile $\toparcP(P,D]$ and $\toparcP'$ intersect at
%$\Pu$.
Suppose $\basearcP(P,B]$ intersects its first successor at $P^b$ (called {\em base-adjacent point})
and $\toparcP(P,D]$ intersects its first successor at $P^t$ (called {\em top-adjacent point}).
%We call $\Pd$ \emph{base adjacent point}  and $\Pu$ \emph{top adjacent point} of $P$.
Similarly, we define $Q^b$, $Q^t$ in
$\substructure_2$ in exactly the same way.

Now, computing the optimum for subproblem $\state(P,Q)$
reduces to computing the optima for four smaller subproblems
$ \OPT(P^b, Q) $,  $ \OPT(P^t,Q) $,
$ \OPT(P, Q^b) $ and $ \OPT(P, Q^t) $.
%Since $ \toparcP $ lies on the envelope of
%$\calA_P$, it dose not miss any point if we select arc $ \toparcP $. However, $ \basearcP $
%cannot  cover any new point. If we select arc $ \basearcP $, we should promise that the rest problem can
%also cover all points
%in $\calP_{P,Q}$.
We define two Boolean variables $I_P$ (reps. $I_Q$) indicating
whether
we can move from $\calP_{P^b,Q}$ to $\calP_{P,Q}$ without choosing a new arc.
Formally, if $ \calP_{P,Q} = \calP_{P^b,Q}$, $I_P = 0$. Otherwise, $I_P = 1$.
Similarly,
if $\calP_{P,Q} = \calP_{P, Q^b} $, $I_Q = 0$. If not, $I_Q =1$.
%We use $w[a]$ to represent the weight of arc $a$.
The dynamic programming recursion is:
\begin{eqnarray}
\label{eq:dp}
  &&  \OPT(P,Q) =  \min  \left\{
\begin{array}{ll}
\OPT(P, Q^b)+ I_Q \cdot \infty, &\text{ add no new arc }  ;\\
\OPT(P, Q^t) + w[\basearcQ], &  \text{ add base-arc } \basearcQ  ; \\
\OPT(P^b,Q)+ I_P \cdot \infty, &\text{ add no new arc }  ;\\
\OPT(P^t,Q) + w[\basearcP], &  \text{ add base-arc } \basearcP . \\
\end{array}
\right.
\end{eqnarray}
The optimal value we return is $\OPT(P_s, Q_s)$.
Now, we prove the correctness of the DP in the following theorem.
We note that both the point-order consistency and the subtlety mentioned above
play important roles in the proof.

  \begin{theorem}
    \label{thm:dp}
%    Suppose substructures $\substructure_1(\baseline_1, \calA_1)$ and $\substructure_2(\baseline_2,
%\calA_2)$ are overlapped. The point set $\calP$ is consist of all points covered by
%$\Region(\substructure_1) \cup \Region(\substructure_2)$.
%The points in $\calP$ are order
%consistent.
%Let $\OPT$  be the optimal solution of minimum weight unit disks cover problem about the disk set
%$\Disk(\substructure_1) \cup \Disk(\substructure_2) $ and the point set $\calP$. The
%optimal solution of problem $\OPT(Q_s, P_t)$ by the dynamic program (\ref{dp}) is equivalent to $\OPT$.
Suppose that we have two overlapping substructures $\substructure_1(\baseline_1, \calA_1)$
and $\substructure_2(\baseline_2, \calA_2)$.
Further suppose that $\baseline_1$ and $\baseline_2$ are oriented in a way such that
the point-order consistency holds.
Then, the cost of the optimal solution equals to $\OPT(P_s, Q_s)$ (which is computed by \eqref{eq:dp}).
  \end{theorem}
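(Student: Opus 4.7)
The plan is to prove $\OPT(P,Q)=\OPT^{*}(P,Q)$ by strong induction on the state $(P,Q)$, ordered by how many arcs of $\calA_1[P]\cup\calA_2[Q]$ remain ahead of the frontier. Here $\OPT^{*}(P,Q)$ denotes the true minimum total weight of a pair of valid paths, one from $P$ to $P_t$ in $\substructure_1[P]$ and one from $Q$ to $Q_t$ in $\substructure_2[Q]$, whose union covers $\calP[P,Q]$, with the convention that the current base-arcs $\basearcP$ and $\basearcQ$ are already charged. The theorem then follows because at the initial state $(P_s,Q_s)$ the ``base-arcs'' are baseline segments of weight zero, so $\OPT(P_s,Q_s)$ equals the cost of the overall optimum. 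The base case $(P_t,Q_t)$ is immediate: both paths are trivial, $\calP[P_t,Q_t]=\emptyset$, and both sides equal $0$.

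For the inductive step I would argue both directions by inspecting the ``first move'' on each side. For $\OPT(P,Q)\le \OPT^{*}(P,Q)$, take an optimal pair $(\pi_1,\pi_2)$ and look at how $\pi_1$ leaves $P$: either it continues along $\basearcP$ and first crosses a later arc at the base-adjacent point $P^b$, or it turns onto $\toparcP$ immediately at $P$. These two cases correspond exactly to the transitions $\OPT(P^b,Q)$ (no new arc, in which case $I_P=0$) and $\OPT(P^t,Q)+w[\basearcP]$ (which finalizes $\basearcP$ and installs $\toparcP$ as the new base-arc); a symmetric statement holds if we instead inspect how $\pi_2$ leaves $Q$. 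The inductive hypothesis on the smaller state then closes this direction. For $\OPT(P,Q)\ge \OPT^{*}(P,Q)$, each feasible DP transition constructively yields a valid pair of paths of the stated cost: the $I_P,I_Q$ indicators correctly forbid the ``no new arc'' transitions precisely when the remaining coverage requirement would change, and the subtraction $-\calP(\basearcP)-\calP(\basearcQ)$ in~\eqref{eq:calP} exactly absorbs the points that the current base-arcs have committed to cover.

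The hard part is the ``overlap bookkeeping'' for the points in $\calPco$ that can be covered by arcs in both substructures. A priori, advancing only one substructure at a time could double-count or silently drop coverage of such a point, because such a point may be assigned in the optimum to either substructure's base-arc. Property P4 (point-order consistency) is exactly what rescues the argument: it forces that for any two points of $\calPco$, the order in which they appear along $\substructure_1$ matches the order in which they appear along $\substructure_2$, so as $(P,Q)$ advances monotonically in either component the overlapping points are resolved in a globally consistent linear order. Combined with the subtraction of $\calP(\basearcP)$ and $\calP(\basearcQ)$ from $\calP[P,Q]$, this yields the invariant that the recursion never re-covers an overlapping point nor loses one. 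Making this invariant precise and verifying it against each of the four transitions in~\eqref{eq:dp}, in particular checking that a ``cross'' advance on one side correctly excludes exactly the overlapping points that the other side's committed base-arc has taken care of, is where the real technical work of the proof lies.
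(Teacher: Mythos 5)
Your overall plan --- strong induction over states, a $\le$/$\ge$ sandwich of the DP value against the true optimum, and point-order consistency invoked to control points shared by the two substructures --- matches the paper's approach, and the base case and the $\ge$ direction are as you describe. The gap is in the $\le$ direction, specifically in how you frame the role of P4. You describe ``an invariant'' to be ``verified against each of the four transitions,'' which suggests you expect the coverage bookkeeping to work out for every transition. It does not. Given an optimal pair $(\Path_1,\Path_2)$ for $\OPT(P,Q)$, the restriction $(\Path_1 - \{a_1[P,P_1)\},\Path_2)$ may genuinely fail to be feasible for the child subproblem $\OPT(P_1,Q)$: one has $\calP[P,Q]-\calP(a_1)\subseteq \calP[P_1,Q]$, and the containment can be strict, because once $\basearcP$ ceases to be the current base-arc the points of $\calP(\basearcP)$ that also lie in some later arc of $\calA_2$ are no longer subtracted and become coverage obligations again --- obligations that $(\Path_1,\Path_2)$ may have discharged only via $\basearcP$. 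The analogous failure can occur when advancing in $\substructure_2$. The lemma you actually need, and which is where point-order consistency enters, is that these two failures cannot occur \emph{simultaneously}: if the $P$-advance fails then $\calP(\basearcP)\cap\calP(b_i)\ne\emptyset$ for some $b_i\in\calA_2[Q]$ with $\basearcQ\prec b_i$; if the $Q$-advance also fails then $\calP(\basearcQ)\cap\calP(a_j)\ne\emptyset$ for some $a_j\in\calA_1[P]$ with $\basearcP\prec a_j$; and these two witnessing pairs directly contradict P4. So the correct inductive claim is a disjunction --- at least one of the two single-side restrictions is optimal for its child --- established by contradiction, not a per-transition invariant established by direct verification. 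Your plan identifies the right ingredients (P4 and the $-\calP(\basearcP)-\calP(\basearcQ)$ subtraction), but not the contradiction structure that actually makes the induction go through.
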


  \begin{proof}
Consider subproblem $\OPT(P,Q)$.  As we know the optimal solution of $\OPT(P,Q)$ should be two
valid paths. One is from $P$ to $P_t$ and the other is from $Q$ to $Q_t$.
Suppose they are
$\Path_1=$ $\{a_1[P, P_1] $, $a_2[P_1, P_2] $, $\ldots$, $a_i[P_{i-1}, P_i]$, $\ldots $, $a_k[P_{k-1},P_t] \} $
and $\Path_2
=$ $\{ b_1[Q, Q_1]$, $b_2[Q_1, Q_2]$, $\ldots$, $b_i[Q_{i-1}, Q_{i}]$, $\ldots$, $b_l[Q_{l-1},Q_t] \} $.
We can see that it suffices to
prove that at least one of the two statements is true.
    \begin{itemize}
    \item The pair of paths $(\Path_1 - \{ a_1[P, P_1) \},\Path_2)$ is the optimal solution to
      subproblem $\OPT(P_1, Q)$.
    \item The pair of paths $(\Path_1, \Path_2 - \{ b_1[Q, Q_1) \})$ is the optimal solution to
      subproblem $\OPT(P, Q_1)$.
    \end{itemize}
    We prove by contradiction.
    Assume that both of the above statements are wrong.
    Suppose $\basearcP$ and $\basearcQ$ are the
    base-arcs for state $\state(P,Q)$, i.e., $ \basearcP $ intersects with $a_1$ at point $P$ and $\basearcQ$
    intersects with $b_1$ at point $Q$.
    We use $\calP(\Path)$ to denote the point set which is covered by
    path $\Path$.
    %The problem $\OPT(P,Q)$ is using $\calA_{P,Q}$ to cover
    %$\calP_{P,Q}$.
    Recall that
    $\calP[P,Q] = \calP(\calA_1[P])+\calP(\calA_2[Q]) - \calP(\basearcP) - \calP(\basearcQ)$.
    Since $\Path_1\cup \Path_2$ is the optimal solution for $\OPT(P,Q)$ (hence feasible),
    we have that
    $\calP[P,Q] = \calP(\Path_1) + \calP(\Path_2) - \calP(\basearcP) - \calP(\basearcQ)$.
    Then, the pair of paths $(\Path_1 - \{ a_1[P, P_1) \},\Path_2)$
    is the optimal solution
    for the subproblem in which we need to pick one path from $P_1$ to $P_t$
    and one from $Q$ to $Q_t$ to
    cover the points in 
    $$
    \calP(\Path_1) + \calP(\Path_2) - \calP(\basearcP) - \calP(\basearcQ) -\calP(a_1)=\calP[P,Q]-\calP(a_1).
    $$
    If not, we can get a contradiction by replacing $\Path_1 - a_1[P, P_1] $ and $\Path_2 $
    with the optimal solution of the above subproblem,
    resulting in a solution with less weight than $\OPT(P,Q)$.
    Since the first statement is wrong,
    we must have that
    $$
    \calP[P,Q]-\calP(a_1) \ne \calP[P_1, Q]
    $$
    (otherwise $(\Path_1 - \{ a_1[P, P_1) \},\Path_2)$ is optimal for $\state(P_1,Q)$).
    We note that the LHS $\subseteq$ RHS.
    Plugging the definition~\eqref{eq:calP}, we have that
    $$
    \calP(\calA_1[P])+\calP(\calA_2[Q]) - \calP(\basearcP) -\calP(\basearcQ) -\calP(a_1)
    \ne
    \calP(\calA_1[P_1])+\calP(\calA_2[Q]) - \calP(a_1) -\calP(\basearcQ).
    $$
    A careful (elementwise) examination of the above inequality shows that it is only possible if
     \begin{eqnarray*}
     \calP(\basearcP)  \cap  \left(\calP(\calA_2[Q])  - \calP(\basearcQ)\right) \neq \emptyset.
  \end{eqnarray*}
  Repeating same argument, we can see that if the second statement is wrong, we have that
     \begin{eqnarray*}
     \calP(\basearcQ)  \cap  \left(\calP(\calA_1[P])  - \calP(\basearcP)\right) \neq \emptyset.
  \end{eqnarray*}
    Hence, there exist $b_i \in \calA_2[Q]$ and $a_j \in \calA_1[P]$ such that
  $
    \calP(\basearcP) \cap \calP(b_i) \neq \emptyset, \text{ and } \calP(\basearcQ) \cap \calP(a_j) \neq \emptyset.
  $
  However, this contradicts the point-order consistency because of $ \basearcP \prec a_j$ and $ \basearcQ
  \prec  b_i$.

  Thus, one of the two statements is true. W.l.o.g, suppose ($\Path_1 - \{a_1[P, P_1] \}, \Path_2$) is
  the optimal solution to subproblem $\OPT(P_1,Q)$. If $P_1$ is the top-adjacent point of $P$,
  through $\OPT(P, Q) = \OPT(P_1,Q) + w[\basearcP]$ in DP\eqref{eq:dp}, we can get the optimal
  solution of subproblem $\OPT(P,Q)$. If not, suppose $P^t, Q^t$ is the top-adjacent point of $P, Q$, using
  the same argument, we can prove that ($\Path_1, \Path_2$) is the optimal solution to at least one
  of the subproblems $\OPT(P, Q^t)$ or $\OPT(P^t, Q)$. Thus, we can get the optimal solution for
  $\OPT(P,Q)$ by our DP.

  \end{proof}

\topic{DP for the general problem}
Then, we  handle all substructures together.
Our goal is find a valid path for each substructure
such that minimizing the total weight of all the paths.
We can see that we only need to handle each path in the substructure relation graph $\auxgraph$ separately
(since different paths have no interaction at all).
Hence, from now on, we simply assume that $\auxgraph$ is a path.

Suppose the substructures are $\kset{\substructure_k(\baseline_k, \calA_k)}$.
We use $A_k$ and $B_k$ to denote two endpoints of $\baseline_k$.
Generalizing the previous section, a state for the general DP
is $\state=\kset{P_k}$, where
$P_k$ is an intersection point in
substructure $\substructure_k$.
We use $b_{P_k}$, $t_{P_k}$, $P^b_k$, $P^t_k$ to
denote the base-arc, top-arc, base-adjacent point, top-adjacent point (w.r.t. $P_k$) respectively.
For each $k\in [m]$, we also define $\substructure_k^{[P_k]}(\baseline_k[P_k],\calA_k[P_k])$
in exactly the same way as in the previous section.
Let
$\calP\left[\kset{P_k}\right]$ be the point set we need to cover in the subproblem:
$$
\calP\left[\kset{P_k}\right]=\bigcup_{k\in [m]}\calP(\calA_k[P_k]) - \bigcup_{k\in [m]} \calP(\basearcP_k).
$$
The subproblem $\OPT(\kset{ P_k })$ is to find, for each substructure $\substructure_k$,
a valid path from  $P_k$ to $B_k$,
such that all points in $\calP[\kset{P_k}]$ can be covered and the total cost is minimized.

The additional challenge for the general case is caused by R-correlations.
If two arcs (in two different substructures) belong to the same disk, we say that they
are {\em siblings} of each other.
If we processed each substructure independently, some disks would be counted twice.
In order to avoid double-counting, we should consider both siblings together,
i.e., select them together and pay the disk only once in the DP. 

%As before, the base-arcs $\basearcP_k$ have cost zero in the subproblem (since their cost have been paid already).

%(In fact, as we discussed in the case of two overlap substructures, any arc in the final
%solution (a set of paths) should be
%an base-arc in some subproblems.)
%Suppose two arcs $a, b$ (in two different substructures) belong to the same disk
%and we have not processed both of them.
%We should make sure both $a$ and $b$ are considered together whether to add
% are the base-arcs in
%two different substructures for a state $\state$, we should add them together. It
%means we should use one step to reach the state $\state$ from some state
%$\state'$ in which $a \prec \basearcP$ and $b \prec \basearcQ$ where $\basearcP, \basearcQ$ are the base
%arcs in $\state'$. Besides, we should avoid to compute the state in which there are two arcs in
%the same disk such that one is the base-arc (i.e., we add it in solution), but the other is smaller than
%($\prec$) its base-arc (i.e., we have not decided whether to add it in solution or not). We say such
%disk is \emph{uncountable}. We shall prove there is no uncountable disk in the dynamic program
%shortly.

%Let us focus on one substructure $\substructure_k$
In order to implement the above idea, we need a few more notations.
We construct an auxiliary bipartite graph $\diskarcgraph$.
The nodes on one side are all disks in $\setDisk'\setminus \calH$, and the nodes on the other side are substructures.
If disk $\Disk_i$ has an arc in the substructure $\substructure_j$, we
add an edge between $\Disk_i$ and $\substructure_j$. Besides, for each arc of baselines, we add a
node to represent it and add an edge between the node and the substructure which contains the
arc. Because the weight of any arc of baselines is zero, it  shall not induce contradiction that
regard them as independent arcs. 
In fact, there is a 1-1 mapping between the edges in $\diskarcgraph$ and all arcs.
See Figure~\ref{fig:bipartite} for an example.

Fix a state $\state=\kset{P_k}$.
For any arc $a$ in $\substructure_k$ (with intersection point $P_k$  and base-arc $\basearcP_k$),
$a$ has three possible positions:
\begin{enumerate}
\item $a \prec \basearcP_k$: we label its corresponding edge with ``unprocessed'';
\item $a= \basearcP_k$: we label its corresponding edge with ``processing'';
\item Others: we label its corresponding edge with ``done''.
\end{enumerate}
As mentioned before, we need to avoid the situation where one arc becomes the base-arc first
(i.e., being added in solution and paid once), and its sibling becomes the base-arc in a later step (hence being paid twice).
With the above labeling, we can see that all we need to do is to avoid the states in which one arc is ``processing'' and its
sibling is ``unprocessed''.
%We use $\ReadyDisk$ to represent the set of disks in which the arcs of the disk are labeled by
%``processing'' and ``done''.
If disk $\Disk$ is incident on at least one ``processing'' edge and
not incident on any ``unprocessed'' edge, we say the $\Disk$ is {\em ready}.
Let $\ReadyDisk$ be the set of ready disks.
%For each element $\Disk$ in $\ReadyDisk$, it has  ``processing'' and ``done'' arcs.
%We call $\Disk$ \emph{countable disk}.
%We only should decide whether to add its ``processing'' part in
%solution or not.
For each ready disk $\Disk$, we use $\Neighbor$ to denote the set of neighbors (i.e., substructures)
of $\Disk$ connected by ``processing'' edges.
We should consider all substructures in $\Neighbor$ together.

%Since one substructure has only one base-arc at any time point, the intersection of any
%two index sets is empty. We shall prove that union of all index sets $\Neighbor$ are $[m]$, i.e., contains all
%substructures. In fact, the property  is equivalent to that there  is no uncountable disk in the dynamic program
%shortly.

%We assign each edge a label of  \emph{unprocessed, processing,  done} to indicate its position.
%If the arc $a = \basearcP_k$ (i.e., $a$ is the base-arc), we label it ``processing''. If $a \prec \basearcP$, we label it ``unprocessed''. Otherwise, we label it
%``done''.

  \begin{figure}[t]
    \centering
    \includegraphics[width=0.65\textwidth]{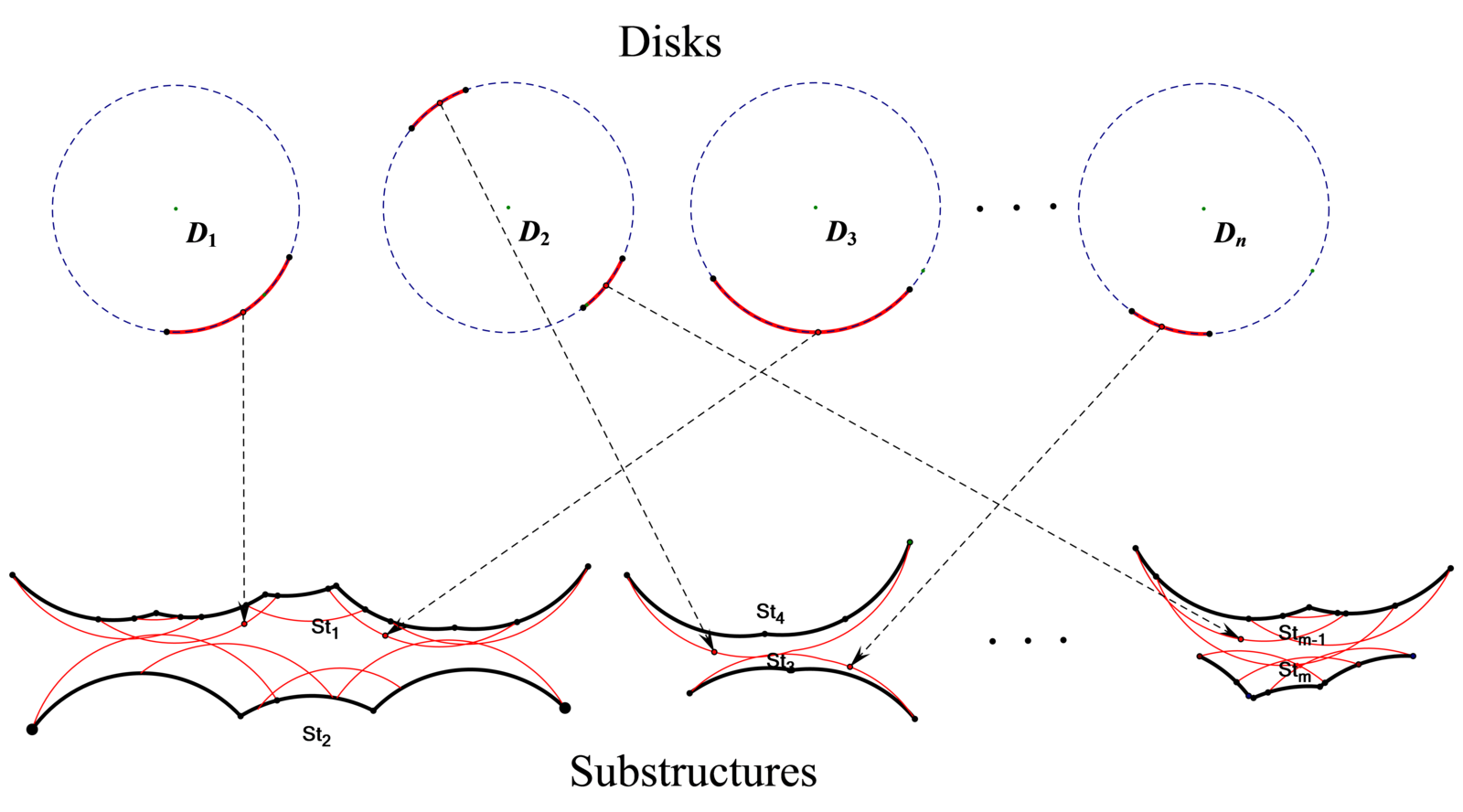}
    \caption{The bipartite graph which is used for marking the \emph{ready} disks. The nodes
      on upper side represent the disks. The nodes on the lower side represent the
      substructures. If $\Disk_i$ has an arc in $\substructure_j$, we add an arc between them. }
    \label{fig:bipartite}
  \end{figure}

%As we discussed in the two overlapping substructures case, for each point $P_k$, it has two adjacent
%points: base adjacent point $\Pd_k$ and top adjacent point $\Pu_k$. The subproblem $\OPT(P_k)$ might
%need to cover  more points than subproblem $\OPT(\Pu_k)$.   If we compute state $\state(P_k)$ based
%on state $\state(\Pd_k)$, it does not change base-arc which means we do not add any new arc.
Again, we need in our DP
indicator variables to tell us whether a certain transition is feasible:
Formally, if 
$\calP[\kset{P_k}] = \calP[[\calP_k][P^b_i]_{\{i\}}]$, let $ I_i = 0 $.
Otherwise, let $ I_i=1 $.
For ease of notation, for a set $\kset{e_k}$ and $S\subseteq [m]$,  we write
$[e_k][e'_i]_S=\{e_k\}_{k\in [m]\setminus S} \cup \{e'_i\}_{i\in S}.$
Hence, 
$$
[P_k] [P_i^b]_{\{i\}}= \{P_k \}_{k\in[m]\backslash i} \cup P_i^b
\text{ and }
$$
$$
[P_k][P_i^t]_{\Neighbor}=\{ P_k \}_{k\in[m] \setminus \Neighbor}  \cup  \{ P_i^t \}_{i \in \Neighbor}
$$
Then we have the dynamic program as follows:
\begin{eqnarray}
\label{dp2}
&& \OPT\left(\kset{P_k} \right) =
  \min\left\{  \begin{array}{ll}
\min_{i \in [m]} \left\{\OPT\left([P_k] [P_i^b]_{\{i\}} \right) + I_i \cdot \infty  \right\},
& \text{ add no disk}\\
\min_{\Disk \in \ReadyDisk } \left\{\OPT\left([P_k][P_i^t]_{\Neighbor}\right) +
w_{\Disk} \right\},
& \text{ add disk  } \Disk \\
\end{array}
\right. \\ \nonumber
\end{eqnarray}
Note that in the second line, the arc(s) in $\Neighbor$ are base-arcs (w.r.t. state $\state(\kset{P_k})$.

In the rest of the section, we prove the correctness of the dynamic program.
If we use the
solution of a smaller subproblem $\OPT(\state')$ to compute subproblem $\OPT(\state)$, we say $\state$ can be reached
from $\state'$ (denoted as $\state' \to \state$).
If $\state$ can be reached from initial
state $\state_0(\kset{P_k \mid  P_k = B_k})$, we say the state is reachable, which is denoted by
$\state_0 \to \state$.
%It is easy to know that if $\state \to \state'$, then $\state' \nrightarrow
%\state$. Besides, if $\state \to \state_1$ and $\state_1 \to \state_2$, then $\state \to \state_2$.

We start with a simple consequence of our DP: there is no double-counting.

\begin{lemma}
  If  two arcs in the solution belong to the same disk, their weights are counted only once in
  the DP \eqref{dp2}.
\end{lemma}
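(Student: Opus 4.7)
The plan is to exploit the labeling of edges in the bipartite graph $\diskarcgraph$ and the definition of a \emph{ready} disk to argue that, along any realization of the DP \eqref{dp2}, each disk can be added at most once. Since weight is contributed exclusively by the ``add disk $\Disk$'' transitions, each paying $w_{\Disk}$ exactly once, this immediately implies the lemma.

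First I would establish a monotonicity property of base-arcs along any DP path. Examining \eqref{dp2}: an ``add no disk'' step $P_i \to P_i^b$ slides $P_i$ along its current base-arc $b_{P_i}$, leaving $b_{P_i}$ unchanged; an ``add disk'' step $P_i \to P_i^t$ for each $i \in \Neighbor$ promotes the old top-arc $t_{P_i}$ to be the new base-arc, leaving the previous base-arc strictly earlier in the partial order $\prec$ on $\calA_i$. Hence in every transition the base-arc of every substructure only advances with respect to $\prec$; in particular, once an edge of $\diskarcgraph$ has received the label ``done'' it can never revert to ``processing'' or ``unprocessed''.

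Finally I would verify the key step: the single transition that adds $\Disk$ retires every edge of $\Disk$ in $\diskarcgraph$. By the readiness condition, no edge of $\Disk$ is labeled ``unprocessed'' at that moment, and the ``processing'' edges of $\Disk$ are exactly those incident to substructures in $\Neighbor$. The ``add disk'' transition simultaneously advances $P_i \to P_i^t$ for every such $i$, promoting $b_{P_i}$ past itself and thus relabeling every previously ``processing'' edge of $\Disk$ as ``done''. Immediately after the transition no edge of $\Disk$ is ``processing'', so by the monotonicity argument no edge of $\Disk$ can ever become ``processing'' again, $\Disk$ can never be ready again, and therefore can never be added again.

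I do not expect a real obstacle here; the substance was front-loaded into the design of the readiness criterion and of the simultaneous move $P_i \to P_i^t$ for \emph{all} $i \in \Neighbor$, whose whole purpose is precisely to force the two sibling arcs of an R-correlated disk to be retired together, so that $w_{\Disk}$ is paid in one transition rather than twice in two different substructures.
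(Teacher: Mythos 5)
The core strategy you use---monotonicity of the edge labels along a DP path plus the readiness condition, yielding that once $\Disk$ has been paid it can never be paid again---is exactly the idea behind the paper's proof (which states the same fact contrapositively: it looks at the transition where the \emph{earlier} sibling became a base-arc and derives a contradiction from the later sibling being ``unprocessed'' at that moment). Conceptually you are on target. However, as written your argument has a genuine flaw in the label bookkeeping, and the flaw is load-bearing.

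Recall the paper's definitions in Section~\ref{sec:DP}: at a state $\state$, an arc $a$ is labeled \emph{unprocessed} when $a \prec \basearcP_k$, \emph{processing} when $a = \basearcP_k$, and \emph{done} otherwise (i.e.\ when $a \succ \basearcP_k$ or incomparable). You have chosen to reason forward along the baselines, i.e.\ following the moves $P_i \to P_i^t$, under which the base-arc strictly \emph{advances} in $\prec$. But then under the paper's definitions a fixed arc's label evolves \emph{done} $\to$ \emph{processing} $\to$ \emph{unprocessed} (an arc ahead of the base-arc is ``done'' and gets caught and then overtaken), which is the opposite of what you assert. Your claim ``once an edge has received the label `done' it can never revert to `processing' or `unprocessed''' is therefore false under your own choice of direction, and so is the next claim that the ``add disk'' transition ``relabels every previously `processing' edge of $\Disk$ as `done''': after $P_i$ jumps to $P_i^t$ the old base-arc $\basearcP_i$ satisfies $\basearcP_i \prec \toparcP_i$, so it becomes \emph{unprocessed}, not \emph{done}. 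Consequently your bridge step ``immediately after the transition no edge of $\Disk$ is `processing', so \ldots no edge of $\Disk$ can ever become `processing' again'' also fails: a sibling of $\Disk$ that was labeled \emph{done} at the moment of the transition is exactly the kind of edge that \emph{will} later become \emph{processing} as the base-arc advances.

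The fix is short and keeps your structure. After the ``add $\Disk$'' transition the $|\Neighbor|$ arcs of $\Disk$ that were just paid are now labeled \emph{unprocessed}, and under the advance-only monotonicity \emph{unprocessed} is the terminal label: once an edge is behind the base-arc it stays behind. Hence from that point on $\Disk$ always has at least one \emph{unprocessed} incident edge, which by the definition of readiness makes $\Disk$ non-ready at every subsequent state, and therefore $\Disk$ can never again be selected in the ``add disk'' branch of \eqref{dp2}. (Equivalently, you could flip to the paper's $\state' \to \state$ direction, under which the evolution is \emph{unprocessed} $\to$ \emph{processing} $\to$ \emph{done}; then the point to argue is that firing ``add $\Disk$'' at $\state$ requires $\Disk$'s $\Neighbor$-edges to be \emph{unprocessed} at the predecessor state, which can never recur once those edges are no longer \emph{unprocessed}.) Either patched version recovers the lemma; your draft as written does not.
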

\begin{proof}
  From the DP, we can see that the weight of an arc is counted only
  when it becomes a base-arc (its label changes from ``unprocessed'' to ``processing'').
  If the two sibling arcs $a, b$ (belonging to disk $\Disk$) in the solution are counted twice, there exist two
  states $\state_1$ and $\state_2$ such that (1)$\state_1 \to \state_2$, (2)$a $ is a base-arc in
  $\state_1$, but $b$ is not a base-arc in $\state_1$,
  (3) $b$ is a base-arc in $\state_2$.
  So, in $\state_1$ or any state before that, arc $b$ is ``unprocessed''.
  However, $a$ can become a base-arc only when $\Disk$ is ready, rendering a contradiction.
%  Obviously, $\state_1$ does not satisfy Lemma~\ref{calXd}.
\end{proof}

\eat{

\begin{lemma}
  \label{lm:neighbor}
  Suppose $\state$ is a reachable state, and $\ReadyDisk$ is defined with respect to $\state$.
  Then, we have that $\bigcup_{\Disk \in \ReadyDisk} \Neighbor = [m]$.
\end{lemma}
\begin{proof}
  Obviously the initial state $(\kset{P_k \mid  P_k = B_k})$
  satisfies the lemma (the ``processing'' edge for $P_k$ is the baseline arc $P_k$ lies on).
  We prove by contradiction.  Suppose there exists an reachable state
  $\state$ which does not satisfy the property. It means there exist a $i$ such that $i \in [m] -
  \bigcup_{\Disk \in \ReadyDisk} \Neighbor, P_i \neq B_i$. According to our dynamic program,
  all  state which can reach the $\state$ should contains $P_i$.  Thus, $\state_0 \nrightarrow
  \state$. It contradicts to the fact that $\state$ is reachable.
\end{proof}
}

Now, we prove the correctness of the dynamic program.
The proof is a generalization of Theorem~\ref{thm:dp}.

\begin{theorem}
  \label{thm:dpgeneral}
  %Let $\OPT$ be the minimum total weight for the disk cover problem with  the disk set $\calD \backslash \calH$ and
  %point set $\calP \backslash \calP(\calH)$.
  Suppose $\auxgraph$ is a path.
  All baselines are oriented such that all properties in Lemma~\ref{lm:calH} hold.
  Then, the optimal cost for the problem
  equals to $\OPT(\kset{A_k})$ (computed by our DP~\eqref{dp2}).
\end{theorem}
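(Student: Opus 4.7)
The plan is to generalize the argument of Theorem~\ref{thm:dp} from two overlapping substructures to a path of substructures in $\auxgraph$, using induction (from leaves of the DP toward the initial state $\state_0 = \{B_k\}_{k \in [m]}$). As in the two-substructure case, I would fix a reachable state $\state = \{P_k\}_{k \in [m]}$ and an optimal collection of valid paths $\Path_k$ from $P_k$ to $B_k$, and argue that at least one of the DP transitions in~\eqref{dp2} leads to a subproblem whose optimum, together with the newly paid weight, equals $\OPT(\state)$. The inductive base is when every $P_k$ is the tail endpoint of its baseline, in which case the optimum is trivially zero.

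The core dichotomy is: either some substructure $\substructure_i$ admits a local advance that does not require paying any new disk (i.e., $(\Path_1,\dots,\Path_i \setminus \{a_1[P_i,P_i^b)\},\dots,\Path_m)$ is optimal for the subproblem at state $[P_k][P_i^b]_{\{i\}}$), corresponding to the ``add no disk'' branch of~\eqref{dp2}; or else we must move to a top-adjacent point $P_j^t$ in some substructure $\substructure_j$ and pay a disk weight. In the second case, let $\basearcP_j$ be the base-arc at $P_j$; if $\Disk(\basearcP_j)$ has a sibling arc in an R-correlated substructure $\substructure_{j'}$, then that sibling must also currently be a base-arc (otherwise $\Disk(\basearcP_j)$ is not ready, but the optimal solution uses it in both $\substructure_j$ and $\substructure_{j'}$, contradicting the no-double-counting lemma). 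Therefore the DP can make the transition $\OPT([P_k][P_i^t]_{N_p(\Disk)}) + w_{\Disk}$ with $\Disk = \Disk(\basearcP_j)$, paying the disk weight exactly once while advancing all sibling base-arcs simultaneously. Property P3 (acyclic 2-matching) guarantees that $N_p(\Disk)$ consists of at most two substructures and that the state transition is well defined.

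The main obstacle, as in Theorem~\ref{thm:dp}, is to rule out the possibility that neither the ``advance $\substructure_i$ locally'' nor the ``advance $\substructure_j$ and pay the disk'' option is optimal. Mirroring the two-substructure argument, suppose no single-substructure local advance works. Then, for each $i$ where $\Path_i$ starts with a non-trivial arc, a careful elementwise comparison between
\[
\calP[\kset{P_k}] - \calP(a_1^{(i)}) \quad\text{and}\quad \calP\bigl[[P_k][P_i^b]_{\{i\}}\bigr]
\]
shows that the two differ only if some $\basearcP_i$ covers a point in $\calP(\calA_{k}[P_k]) - \calP(\basearcP_{k})$ for some other substructure $\substructure_k$ overlapping with $\substructure_i$ in $\auxgraph$. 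Applying this deduction along the red edges of the path $\auxgraph$, one extracts a pair of overlapping substructures and a pair of arcs violating the point-order consistency (P4), giving a contradiction exactly as at the end of the proof of Theorem~\ref{thm:dp}. Because $\auxgraph$ is a simple path with alternating red and blue edges, this deduction terminates after a bounded number of steps and only uses pairwise overlap and R-correlation relations, both of which were handled in the two-substructure case.

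The hardest part of the write-up will be the bookkeeping for coverage sets $\calP(\calA_k[P_k]) - \calP(\basearcP_k)$ across all $m$ substructures simultaneously: one must ensure the subtraction term in the definition of $\calP[\{P_k\}]$ is invariant under the DP recursion so that no point is ``forgotten'' when a base-arc is advanced, and that no point is required to be covered twice. Point-order consistency (P4) is exactly what is needed to guarantee that if $\basearcP_i$ covered a point $P$ in another substructure $\substructure_k$, then the new base-arc after advancing will still cover $P$ via its sibling or overlap partner, so that dropping $\basearcP_i$ from the subtraction is consistent with the inductive hypothesis. Combined with the no-double-counting lemma and Theorem~\ref{thm:dp}'s argument restricted to each red edge of $\auxgraph$, this yields $\OPT(\{A_k\}) = \OPT(\{B_k\})$ as computed by~\eqref{dp2}.
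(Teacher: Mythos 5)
Your sketch correctly identifies the overall shape of the argument (generalize the exchange argument of Theorem~\ref{thm:dp}, use sibling arcs and the ready-disk device, and ultimately appeal to point-order consistency), but it has two genuine gaps where the paper does something you did not anticipate.

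First, you never actually establish that the DP can always make progress, i.e.\ that the final state $\state(\kset{A_k})$ is reachable from the initial state. Your parenthetical (``otherwise $\Disk(\basearcP_j)$ is not ready, but the optimal solution uses it in both \dots contradicting the no-double-counting lemma'') is circular: the no-double-counting lemma is a \emph{consequence} of the ready-disk rule, not a reason the rule never blocks every transition. What is really needed is a proof that at any reachable non-final state there exists at least one ready disk. The paper proves this by contradiction using global arc-order consistency: if no disk is ready, then chasing base-arcs and their siblings between two R-correlated substructures $\substructure$ and $\substructure'$ produces $a' \prec b$ in $\substructure'$ and $b' \prec a$ in $\substructure$ (with $a,b$ siblings and $a',b'$ siblings), which violates the consistency of orientations guaranteed by P3. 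This step has nothing to do with the optimal solution or with double counting, and it is not covered by your proposal.

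Second, your treatment of the ``rule out that no transition is optimal'' step is too loose to constitute an argument. You say ``one extracts a pair of overlapping substructures \dots applying this deduction along the red edges,'' but the structure of the paper's proof is genuinely combinatorial: it shows that if none of the ready-disk transitions $\state_{\Neighbor}$ is optimal, then one obtains the conjunction $\bigwedge_{\Disk\in\ReadyDisk}\bigl(\bigvee_{i\in\Neighbor}\event_i\bigr)=\true$, re-expands this CNF as a DNF, and then proves that every monomial is false via a counting/pigeonhole argument on the path $\auxgraph$. The counting uses the alternating red/blue structure and the fact that non-R-correlated endpoints trivially contribute their own events, which forces two events from the same overlapping (red) pair into a common monomial, contradicting point-order consistency. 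A local ``propagate along red edges'' narrative does not reproduce this: there is no reason one deduction chain should reach both endpoints simultaneously, and the argument really does require the global counting. Your ``bookkeeping'' paragraph about $\calP[\kset{P_k}]$ being invariant under the recursion is a plausible heuristic but is also not a substitute for the elementwise set comparison used in the two-substructure proof, which is what licenses the derivation of $\event_i$ in the first place.

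In short, the outline correctly recalls the two-substructure mechanism and the role of P3/P4, but it omits the reachability lemma (and assigns it the wrong justification) and does not contain the CNF/DNF pigeonhole argument that is the actual technical content of this theorem.
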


\begin{proof}
  Suppose the set $\kset{\Path_k}$ of paths is the optimal solution.
  We need to prove
  (1) the final state $\state(\kset{A_k})$ is
  reachable,
  (2) for any reachable state $\state=\kset{P_k}$,
  $\OPT(\kset{P_k})$ computes the optimal solution for the corresponding subproblem
  (that is to find one valid path from $P_k$ to $B_k$ for each substructure $\substructure_k^{[P_k]}$ to cover all point in
  $\calP[\kset{P_k}]$, such that the total cost is minimized).

  We first prove the first statement.
  Suppose the state $\state$ is reachable, we prove it can reach another
  state if $\state$ is not the final state (i.e., we do not get stuck at $\state$).
  %Since the number of states is finite, we can prove the
  %final state is reachable.
  Let us prove it by contradiction. Assume we get stuck at state $\state$.
  That means there is no ready disk in $\state$.
  Note that each substructure, say $\substructure$, is incident on exactly one ``processing'' arc, say arc $a$
  (which is the base-arc in $\substructure$).
  $a$'s sibling, say $b$ (in $\substructure'$), must be labeled ``unprocessed'' (otherwise the disk would be ready).
  Consider the base-arc (or ``processing'' arc), say $a'$, in $\substructure'$.
  So we have $a'\prec b$ in $\substructure'$.
  Again the sibling $b'$ of $a'$ must be an ``unprocessed'' arc in $\substructure$.
  \footnote{To see that $b'$ is in $\substructure$, note that $\auxgraph$ is a path
  and $\substructure$  is only R-correlated with $\substructure'$ (and vice versa).
  }
  So we have $b'\prec a$ in $\substructure$, which contradicts the global arc-order consistency.

  Now, we prove the second statement.
  We consider state $\state = \kset{P_k}$.
  Suppose the optimal solution for subproblem $\state\kset{P_k}$ is the set of paths
  $\Pathset =$ $\kset{\Path_{k}}$
  where $\Path_k=(a_{k_1}, a_{k_2} \ldots, a_{k_n})$.
  Consider
  the states $\state_{\Neighbor}:=[P_k][P_i^t]_{\Neighbor}$
  for all $\Disk \in  \ReadyDisk$.
  Obviously, we can see from our DP that $\state_{\Neighbor} \to \state$.
  Define for each $\Disk \in  \ReadyDisk$, a set of paths
  $$
  \Pathset_{\Neighbor} = \{\Path_k
  \}_{k\in[m]- \Neighbor} \cup \{ \Path_{i} - \{a_{i_1}\} \}_{i \in \Neighbor}.
  $$
  It suffices to prove that there exists at least one
  $\Disk \in \ReadyDisk$
  such that $\Pathset_{\Neighbor}$ is the optimal solution for $\OPT(\state_{\Neighbor})$.

  Consider a substructure $\substructure_i$.
  Suppose the intersection point in $\substructure_i$ of $\state$ is $P_i$ and the base-arc at
  point $P_i$ is $\basearcP_i$.
  For each $i\in [m]$, let $\substructure_{\match(i)}$ be 
  the only substructure (if any) overlapping with $\substructure_i$.
  So $\basearcP_{\match(i)}$ is the base-arc of $\substructure_{\match(i)}$.
  Using exactly the same exchange argument in
  Theorem~\ref{thm:dp}, we can show that 
  if $\Pathset_{\Neighbor}$ is not the optimal solution
  for $\OPT(\state_{\Neighbor})$,  
  there exists some $i \in \Neighbor$ such that $\event_i$ happens, 
  where $\event_i$ is the following event: there exists 
  an arc $\beta_{\match(i)}$ in $\substructure_{\match(i)}$ 
  with $\beta_{\match(i)} \succ \basearcP_{\match(i)}$
  such that 
  $$
  \calP(\basearcP_i) \cap \calP(\beta_{\match(i)}) \neq \emptyset.
  $$
We use $\event_i$ to denote the above event. 
%  $\calP(\basearcP_i) \cap \calP(b_i - \basearcQ_i) \neq \emptyset $.
Thus if there is no $\Disk\in \ReadyDisk$ such that $\Pathset_{\Neighbor} $
  is the optimal solution for $\OPT(\state_{\Neighbor})$, we have
\begin{equation}
  \label{eq:clause}
  \bigwedge_{\Neighbor \in \ReadyDisk} \left( \bigvee_{i\in \Neighbor} \event_i\right) = \true.
\end{equation}
Converting the conjunctive normal form (CNF) to the disjunctive normal form (DNF),
we get
\begin{equation*}
  \bigvee_{(k_1,\ldots,k_{|\ReadyDisk|}) \in \Pi_{\Disk \in \ReadyDisk}\Neighbor } \left(\bigwedge_{i \in |\ReadyDisk|} \event_{k_i} \right)=\true,
\end{equation*}
where $\Pi_{\Disk \in \ReadyDisk}\Neighbor$ means the Cartesian product of all $\Neighbor$ in $\ReadyDisk$. 
We call each
$\bigwedge_{i \in |\ReadyDisk|} \event_{k_i}$ a clause (note that $k_i$ indexes a substructure). 
If we can prove that every clause is false,
then obviously,  \eqref{eq:clause} is wrong, resulting in a contradiction.

Now, we show that every clause is false.  First, we consider the case that both end nodes of $\auxgraph$
are incident to red edges.  W.l.o.g., suppose the two nodes of $\auxgraph$ are $\substructure_1$ and
$\substructure_2$.  Thus, they are R-correlated with other substructures.  We know if one
substructure $\substructure_i$ is not R-correlated with others, every clause must contain the
corresponding event $\event_i$ (since the disk corresponding to its base-arc must be ready and in
$\ReadyDisk$).  Hence, every clause contains $\event_1$ and $\event_2$.  Moreover, for each pair
$(\substructure_i, \substructure_j)$ of R-correlated substructures, each clause should contain
either $\event_i$ or $\event_j$.  Suppose the length of the path $\auxgraph$ is $\ell$.  Because red
and blue edges alternates, there are $\frac{\ell-1}{2}$ R-correlated substructure pairs and
$\frac{\ell+1}{2}$ overlapping substructure pairs.  We should select $\frac{\ell-1}{2} + 2$ terms in
each clause.  Because $\frac{\ell-1}{2}+ 2 > \frac{\ell+1}{2}$, there exists a pair of overlapping
substructures $(\substructure_i, \substructure_{\match(i)})$ such that both $\event_i$ and
$\event_{\match(i)}$ appear in the clause.  To make the clause true, we must have
$$
\event_i= (\calP(\basearcP_i) \cap \beta_{\match(i)} \neq \emptyset)=\true \quad\text{ and }\quad
\event_{\match(i)}= (\calP(\basearcP_{\match(i)}) \cap \beta_{i} \neq \emptyset )=\true,
$$
where $\beta_{\match(i)} \succ \basearcP_{\match(i)}$ and $\beta_{i} \succ \basearcP_{i}$.  It
yields a contradiction to the point-order consistency.

Next, we consider the remaining case where at least one end of the path $\auxgraph$ is a blue edge,
meaning the substructure on the end does not overlap with any other substructure. W.l.o.g., suppose
the end node is $\substructure_1$ and it is R-correlated with $\substructure_2$.  The event
$\event_1=\calP(\basearcP_1) \cap \calP(\beta_{\match(1)}) \neq \emptyset)$
is always false since $\beta_{\match(1)}$ does not exist.  So all clause
containing $\event_1$ is false.  To make each of the remaining clauses true, $\event_2$ must be
true, and the case reduces to the previous case (by simply omitting node $\substructure_1$).  So the
same argument again renders a contradiction.  This completes the proof of the theorem.

\end{proof}

\section{Constructing $\calH$}
\label{sec:calH}

In this section, we describe how to construct the set $\calH$ in details.
We first include in $\calH$
all square gadgets.
The boundary of $\calH$ consists of several closed curves, as shown in
Figure~\ref{fig:04:03}.
$\calH$ and all uncovered arcs define a set of substructures.

First, we note that there may exist a closed curve that all points on the curve are covered by some
arcs (or informally, we have a cyclic substructure, with the baseline being a cycle).
%It means  there is no endpoint for the baseline.
We need to break all such baseline cycles
by including a constant number of
extra arcs into $\calH$.
This is easy after we introduce the label-cut operation in Section~\ref{subsec:merge},
and we will spell out all details then.
Note that we cannot choose some arbitrary envelope cycle
since it may ruin some good properties we want to maintain.

\eat{
If there is a point on the baseline which cannot be covered by any arc in an active region, we
consider the set of arcs which can cover the point.
By including an envelope arc in this set, the baseline cycle can be broken.
If there does not exist such a point,
we know there exist at least two active regions along the the baseline
and they are associated with two non-adjacent squares.
Consider  one of such active regions.
The baseline covered by all arcs of this active region is only a segment
(just part of the closed curve).
Suppose the endpoints of the segment are $A$ and $B$.
We select one point on the closed curve which is not covered by the active region.
Regard it as a start point and select a orientation (such as clockwise),
then we can do cut operations at $A$ and $B$ respectively.
(More concretely, it is label-cut operation. See Definition~\ref{def:labelcutoff}). Then the
closed curve is divided into two part and each part has two endpoints.
}

From now on, we assume that all baselines are simple paths.
Now, each closed curve contains one or more baselines. So, we have an initial set
of well defined substructures. The main purpose of this
section is to cut these initial substructures such that Lemma~\ref{lm:calH} holds.

We will execute a series of operations for constructing $\calH$.
We first provide below a high level sketch of our algorithm,
and outline how the substructures and the substructure relation graph $\auxgraph$
evolve along with the operations.

\begin{itemize}
\item
(Section~\ref{subsec:merge})
First, we deal with active regions.
Sometimes, two active region may overlap significantly and become inseparable
(formally defined later), they essentially need to be dealt as a single active region.
In this case, we merge the two active regions together (we do not need to do anything,
but just to pretend that there is only one active region).
We can also show that one active region can be merged with at most one other active region.
For the rest of cases, two overlapping active region are separable,
and we can cut them into at most two non-overlapping active regions, by adding
a small number of extra disks in $\calH$.
After the merging and cutting operations, each substructure
contains at most one active region.
Hence, the substructures satisfy the property (P1) in Lemma~\ref{lm:calH}.
Moreover, we show that if any substructure contains an active
region, the substructure is limited in a small region.

\item  (Section~\ref{subsec:cutoff})
We ensure that each substructure is non-self-intersecting by a simple greedy algorithm. After this step,
(P2) is satisfied.

\item (Section~\ref{subsec:GOC})
  In this step, we ensure that substructure relation
  graph $\auxgraph$ is a acyclic 2-matching (P3).
  The step has three stages.
  First, we prove that the set of blue edges forms a matching.
  Second, we give an algorithm for cutting the substructures which overlap with
  two or more other substructures.
  After the cut, each substructure overlaps with no more than one
  other substructure.
  So after the first two stages, we can see
  that $\auxgraph$ is composed of a blue matching and a red matching.
  At last, we prove that the blue edges and red edges cannot form
  a cycle, establishing $\auxgraph$ is acyclic.

  \item (Section~\ref{subsec:POC})
  The goal of this step is to ensure the point-order consistency (P4).
  We first show there does not exist
  a point covered by more than two substructures, when $\auxgraph$
  is an acyclic 2-matching.
  Hence, we only need to handle the case of two overlapping substructures.
  We show it is enough to break all cycles in a certain planar directed graph. Again, we can add a few more disks to cut all such cycles. 	
  \item (Section~\ref{subsec:number})
  Lastly, we show that the number of disks added in $\calH$ in the above four
  steps is $O(K^2)$, where $K = \frac{L}{\mu}$ and $L$ and $\mu$ are side lengths of block and small
  square respectively.
\end{itemize}

\subsection{Merging  and Cutting  Active Regions}
\label{subsec:merge}

If two active regions overlap
in the same substructure, we need to either merge them into a new
one or cut them into two non-overlapping new ones.
As we know, each gadget may have two active
regions.
Suppose active regions $\activeR_1$ and $\activeR_2$ belong to
the same gadget $\gadget$,
while
$\activeR_1'$ and $\activeR_2'$ belong to a different gadget $\gadget'$.
Due to R-correlations, we need consider the four active
regions together.

First, we consider the case where $\activeR_1$ overlaps with $\activeR_1'$,
and $\activeR_2$ overlaps with $\activeR_2'$.
We need the following important concept \emph{order-separability},
which characterizes how the two sets of arcs overlap.

\begin{definition}[Order-separability]
  \label{def:order separable}
 Consider a substructure $\substructure(\baseline, \calA)$. $\calA_1, \calA_2$ are two disjoint
 subsets of
  $\calA$. If $\calA_1, \calA_2$ satisfy that
    \begin{equation}
      \label{eq:04:02}
      a \prec b  , \text{ for any } a \in \calA_1 \text{ and } b \in \calA_2,
    \end{equation}
    we say that $\calA_1, \calA_2$ are order-separable.
\end{definition}

% We use $\Region(\activeR)$ to denote the intersection region of uncover region $\Uncover$ and the active
% region $\activeR$. There are a set $\calA$ of arcs centered in core central area of  $\gadget$ which is totally covered
% by $\Region(\activeR)$. We say the $\calA$ belongs to $\activeR$, denoted by
% $\activeR(\calA)$.
We use $\calA_1$ , $\calA_1'$, $\calA_2$, $\calA_2'$
to denote the set of arc associated with
active regions $\activeR_1$, $\activeR_1'$, $\activeR_2$, $\activeR_2'$, respectively.
If $\calA_1$ and $\calA'_1$ are not order-separable, we say the pair $(\activeR_1 ,
\activeR_1')$ is a \emph{mixture}.   If both of $(\activeR_1 ,\activeR_1')$ and
$(\activeR_2, \activeR_2')$ are mixtures, we say the two pairs form a
\emph{double-mixture}.
When they are double-mixture, we merge them simultaneously.
It only means that we regard the two active regions $(\activeR_1 ,\activeR_1')$ as a new single active region, and $(\activeR_2, \activeR_2')$ as another single active region.

To show an active region cannot grow unbounded, we prove
that the merge operations do not generate chain reactions.
The rough idea is that if two active regions form a mixture,
their corresponding small squares must be adjacent and their  core-central areas must overlap.
Due to the special shape of core-central areas (a narrow spindle shape), the overlapping can only happen
between two of them, not more.

\begin{lemma}
\label{double-mix}
Consider two non-empty small squares $\ssquare$, $\ssquare'$.
Suppose the square gadgets in the two
squares are $\gadget(\ssquare)=(\Disk_s, \Disk_t)$
and $\gadget'(\ssquare')=(\Disk_s', \Disk_t')$.
The active region pairs
$(\activeR_1, \activeR_2)$ and $( \activeR_1', \activeR_2' )$ are associated with
gadget $\gadget(\ssquare)$
and $\gadget'(\ssquare')$ respectively.
$(\activeR_1, \activeR_2)$ and $(\activeR_1',
\activeR_2')$ form a double-mixture.
Then, the following statements hold:
\begin{enumerate}
\item Their corresponding squares $\ssquare, \ssquare'$ are adjacent;
\item The core-central areas of $\gadget(\ssquare)$ and $\gadget'(\ssquare')$ overlap;
\item The angle between $\dcenter_s\dcenter_t$ and $\dcenter_s'\dcenter_t'$ is $O(\epsilon)$
\item None of the two core-central areas can overlap
with any small squares other than $\ssquare$ and $\ssquare'$.
\end{enumerate}
\end{lemma}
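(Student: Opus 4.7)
The plan is to first derive the precise shape of the core-central area and then use the double-mixture assumption to deduce the four claims in a natural order.

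I would begin by setting up coordinates for $\gadget(\ssquare)=(\Disk_s,\Disk_t)$ so that $\dcenter_s$ and $\dcenter_t$ lie symmetrically on the horizontal axis. The points $P,Q\in\partial\Disk_s\cap\partial\Disk_t$ then lie on the perpendicular bisector of $\dcenter_s\dcenter_t$, and a short calculation reveals that the lens $\coreCenter=\Disk(P,1)\cap\Disk(Q,1)$ has its extreme points along its long axis (the direction of $\dcenter_s\dcenter_t$) precisely at $\dcenter_s$ and $\dcenter_t$. Therefore $\coreCenter$ has length $r_{st}=|\dcenter_s\dcenter_t|\leq\sqrt{2}\mu$ along that axis and perpendicular thickness $2-2\sqrt{1-r_{st}^2/4}=O(r_{st}^2)=O(\mu^2)$. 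In particular, $\coreCenter$ is contained in the $O(\mu^2)$-neighborhood of the segment $\dcenter_s\dcenter_t$, and this segment sits inside $\ssquare$; the analogous statement holds for $\coreCenter'$ and $\ssquare'$.

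Using this shape I would prove Claim~2 by contraposition: assume $\coreCenter\cap\coreCenter'=\emptyset$, separate the two convex spindles by a line, and use the separator to induce a linear order on the disk centers in $\coreCenter$ relative to those in $\coreCenter'$. An auxiliary observation that the position of an uncovered arc along the shared baseline is monotone in the position of its center (when centers are confined to a thin spindle) promotes this into an ordering of the arcs themselves. Order-separability then follows for at least one of $(\activeR_1,\activeR_1')$ or $(\activeR_2,\activeR_2')$, contradicting the double-mixture hypothesis. Claim~1 is then immediate: since each spindle sits within $O(\mu^2)\ll\mu$ of its own small square, the non-empty intersection $\coreCenter\cap\coreCenter'$ forces $\ssquare$ and $\ssquare'$ to share at least an edge or corner, i.e., to be adjacent.

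For Claim~3, let $\theta$ denote the angle between $\dcenter_s\dcenter_t$ and $\dcenter_s'\dcenter_t'$. Two narrow spindles of thickness $O(\mu^2)$ and length $O(\mu)$ crossing transversely at angle $\theta$ overlap in a region whose extent along either long axis is at most $O(\mu^2/\sin\theta)$. The crux is that a \emph{double}-mixture is strictly stronger than mere overlap: to produce interleaving arcs on both the upper and the lower baselines simultaneously, the overlap region must support a pool of disk centers that project to distinct positions on both long axes, and the monotonicity observation above makes this amount to an $\Omega(\mu)$ extent along each long axis. Combining with the transverse upper bound gives $\mu^2/\sin\theta=\Omega(\mu)$, i.e., $\sin\theta=O(\mu)=O(\epsilon)$. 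Given Claims~1 and~3, Claim~4 follows routinely: each spindle extends at most $O(\mu^2)$ outside $\ssquare$ (respectively $\ssquare'$), and since the two long axes are nearly parallel, the only neighboring square that either spindle can reach is the one in the direction of $\ssquare\to\ssquare'$.

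The main obstacle I anticipate is the quantitative form of Claim~3, namely translating the abstract combinatorial double-mixture condition into the sharp angular inequality $\sin\theta=O(\epsilon)$. The mixture property is phrased as an ordering of arcs along a baseline, and reducing it to a geometric condition on spindle orientations requires the auxiliary monotonicity lemma sketched above; the remainder is a planar counting argument about how thin rectangles can intersect while accommodating interleaving on both sides of the common gadget.
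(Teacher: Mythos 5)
Your approach diverges from the paper's in a way that does not fully close the argument. For Claim~2 the paper argues directly rather than by contraposition: the double-mixture hypothesis produces a disk whose center must lie simultaneously in $\coreCenter$ (it contributes arcs to both active regions of $\gadget(\ssquare)$, hence has four intersections with $\partial\gadget(\ssquare)$, which by definition forces its center into the core-central area) and in $\coreCenter'$, so the two core-central areas meet with no separating-line detour. Your contrapositive route via a separating line and an unstated monotonicity lemma could perhaps be made to work, but it is more involved and has its own loose end: a separating line for the two spindles need not be aligned with either baseline direction, so monotonicity of arc position in the long-axis coordinate does not immediately translate separation by that line into an ordering of arcs.

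The more substantive gap is in Claim~3. You reduce the angle bound to the assertion that a double-mixture forces the overlap $\coreCenter\cap\coreCenter'$ to have extent $\Omega(\mu)$ along each long axis, and then combine this with the transverse overlap estimate $O(\mu^2/\sin\theta)$. That $\Omega(\mu)$ assertion is not justified and is likely false in general: order-separability fails already with a constant number of interleaved disk centers, and those can be packed into an arbitrarily small cluster inside the overlap region, so nothing forces the overlap to be comparable in size to the whole spindle. The paper instead exploits a feature of the geometry you computed but did not use: the spindle-shaped core-central areas have their extreme points exactly at the disk centers $\dcenter_s,\dcenter_t$ (respectively $\dcenter_s',\dcenter_t'$), and the cusp opening angle at such an extreme point is the \emph{apex angle} $\apex = O(\epsilon)$. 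The two spindles the paper's proof actually compares --- $\coreCenter'$ and the generalized core-central area associated with the pair $(\Disk_s',\Disk_t)$ --- share the cusp $\dcenter_s'$, and the double-mixture forces a disk center in both that is different from $\dcenter_s'$, hence an overlap strictly larger than a single point. Because each cusp opens by only $O(\epsilon)$, this already forces the long axes to differ by $O(\epsilon)$, with no transverse-extent estimate needed at all. This ``shared cusp'' observation is the missing ingredient in your proposal; the $\Omega(\mu)$ requirement you substitute for it does not hold and would not give the bound even if it did not need to be justified. Claims~1 and~4 are then handled essentially the same way in both proofs once Claims~2 and~3 are in place.
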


Next, we consider the case where
only one of $(\activeR_1, \activeR_1')$ and $(\activeR_2, \activeR_2')$ is a mixture.
However, we show that it is impossible as follows.
We use notation $\activeR(\calA)$ to denote an active region with arc set $\calA$.

\begin{lemma}
\label{lm:no single mixture}
Suppose active region pairs $(\activeR_1(\calA_1)$, $ \activeR_2(\calA_2) )$ and $(
\activeR_1'( \calA_1')$, $\activeR_2'(\calA_2'))$ are associated with gadget $\gadget$ and
$\gadget'$ respectively.
If $\calA_1$ and  $\calA_1'$ are
order-separable, then  $\calA_2$ and $\calA_2'$  are also order-separable.
\end{lemma}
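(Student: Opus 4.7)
The plan is to prove Lemma~\ref{lm:no single mixture} by transferring the order-separability on the upper side to the lower side, using the sibling/R-correlation structure plus Lemma~\ref{lemma:04:03} (Local Order Consistency). Suppose $\substructure_u$ denotes the substructure containing $\activeR_1$ and $\activeR_1'$, and $\substructure_l$ the one containing $\activeR_2$ and $\activeR_2'$. Since $\calA_1, \calA_1'$ are order-separable in $\substructure_u$, we may assume without loss of generality that every arc of $\calA_1$ precedes every arc of $\calA_1'$ (with respect to the chosen orientation of $\baseline_u$). The goal is to show the analogous inequality $a \prec b$ for every $a \in \calA_2$ and every $b \in \calA_2'$ under the orientation of $\baseline_l$ given by local order consistency.

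The key structural observation I would use is that the arcs in $\calA_1$ and $\calA_2$ are produced by the \emph{same} family of disks, namely those centered in the core-central area $\coreCenter$ of $\gadget$. Indeed, by the defining property of $\coreCenter$ recalled in Section~\ref{sec:04}, any unit disk centered in $\coreCenter$ crosses $\partial \gadget$ in four points, and hence contributes one arc on each side of $\dcenter_s\dcenter_t$. Thus every disk $\Disk_u$ with an arc $a_2 \in \calA_2$ has a sibling arc $a_1 \in \calA_1$, and symmetrically every disk $\Disk_v$ contributing $b_2 \in \calA_2'$ has a sibling $b_1 \in \calA_1'$. This is precisely the bijective sibling setup required by Lemma~\ref{lemma:04:03}.

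Given arbitrary $a_2 \in \calA_2$ and $b_2 \in \calA_2'$, I would pull back to their siblings $a_1 \in \calA_1$ and $b_1 \in \calA_1'$. By the assumed order-separability in $\substructure_u$, $a_1 \prec b_1$. Applying local order consistency to the pair $(\Disk_u, \Disk_v)$ across the R-correlated substructures $\substructure_u$ and $\substructure_l$, we conclude $a_2 \prec b_2$ in $\substructure_l$ (otherwise the proof of Lemma~\ref{lemma:04:03} would furnish two unit disks with more than two intersection points). Since the choice of $a_2, b_2$ was arbitrary, this establishes order-separability of $\calA_2$ and $\calA_2'$.

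The main obstacle, and the point that needs to be checked carefully, is that Lemma~\ref{lemma:04:03} is originally stated for substructures which each contain exactly \emph{one} active region of a \emph{single} gadget, whereas here $\substructure_u$ and $\substructure_l$ each contain active regions of two gadgets $\gadget, \gadget'$. I would therefore verify that the proof of Lemma~\ref{lemma:04:03} is purely geometric: it only invokes the fact that two unit disks intersect at most twice, applied to the sibling pair $(\Disk_u, \Disk_v)$. Hence the consistency conclusion holds for any pair of disks which each contribute sibling arcs to $\substructure_u$ and $\substructure_l$, regardless of which gadget centers them. A small additional check is that the orientations of $\baseline_u$ and $\baseline_l$ chosen to enforce local order consistency with respect to $\gadget$ also enforce it with respect to $\gadget'$; this follows from the geometric symmetry of the argument and the fact that the two gadgets are positioned on the same pair of half-planes $H^+, H^-$ once we apply the angular closeness of $\dcenter_s\dcenter_t$ and $\dcenter_s'\dcenter_t'$ guaranteed by Lemma~\ref{double-mix}.
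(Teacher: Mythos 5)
Your proof is correct and is essentially a direct (non-contradiction) reformulation of the argument the paper gives: both hinge on the sibling map between the two active regions of each gadget together with the two-intersection argument underlying Lemma~\ref{lemma:04:03}. Where the paper argues by contradiction — assuming a non-separable interleaving $b_1 \prec b' \prec b_2$ in $\calA_2,\calA_2'$, pulling the two involved arcs back to their siblings in $\calA_1,\calA_1'$, and invoking the two-intersection fact to contradict the assumed separability on the other side — you argue forward: for an arbitrary pair $a_2 \in \calA_2$, $b_2 \in \calA_2'$, pull back to siblings, use the given separability to get $a_1 \prec b_1$, then push forward to get $a_2 \prec b_2$. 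The forward version avoids the interleaving case analysis but uses the sibling bijection on \emph{all} arcs of $\calA_2 \cup \calA_2'$ rather than just two of them; this bijection does hold here, because by definition all arcs in an active region come from disks centered in the core-central area, and any such disk has uncovered arcs in both halfplanes. Your two cautionary checks match what the paper also leaves implicit: the paper likewise appeals to ``the same proof to Lemma~\ref{lemma:04:03}'' rather than the lemma statement (since two active regions share each substructure here), and the orientation-compatibility across $\gadget$ and $\gadget'$ is backed by the angular closeness of $\dcenter_s\dcenter_t$ and $\dcenter_s'\dcenter_t'$ from Lemma~\ref{double-mix}. So the two routes buy the same thing; yours is marginally cleaner as a direct statement, the paper's uses slightly weaker information about the sibling map.
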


The proofs of Lemma~\ref{double-mix} and Lemma~\ref{lm:no single mixture}
are elementary planar geometry and we defer them to Appendix~\ref{apd:merge}.

\topic{Cutting Overlapping Active Regions}
After the merging stage,
even if any two active regions overlap in the same substructure,
their arcs are order-separable.
We define \emph{label-cut} operation as below
to further separate them such that the baselines of all the active regions
are non-overlapping.

We consider two overlapping active regions
$\activeR_1(\calA_1)$ and $\activeR_2(\calA_2)$ in a
substructure $\substructure(\calA)$.
Suppose $\calA_1, \calA_2 \subset \calA$ and $\calA_1,
\calA_2$ are order-separable.
We can cut  $\substructure$ into two substructures $\substructure_1$ and
$\substructure_2$ with disjoint baselines,
such that $\Region(\activeR_1)  \subset \Region(\substructure_1) $ and
$\Region(\activeR_2) \subset \Region(\substructure_2)  $.
In other words,  if we assign
arcs in $\calA_1$ one kind of label
and arcs in $\calA_2$ a different kind of label,
after the label-cut operation, the arcs with different labels belong
to different new substructures.
We define label-cut formally as follows.

\begin{figure}[t]
	\centering
	\includegraphics[width=0.9\textwidth]{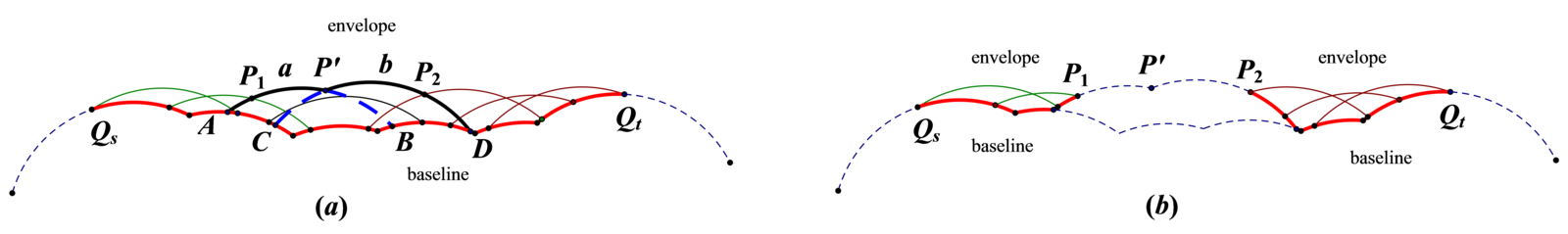}
	\caption{The example of label-cut. The left hand side  illustrates the whole substructure before
    cutting. The arcs have two different labels. One is green and the other is brown. The bold black
  subarcs are what we select in the envelope. The the right hand side  illustrates that each of the
  two separable substructures induced by the label-cut operation only contains arcs with the same label.}
	\label{fig:labelcutoff}
\end{figure}

\begin{definition}[Label-cut]
    \label{def:labelcutoff}
    Consider a substructure $\substructure(\baseline, \calA)$. Suppose $\calA_1', \calA_2'$ are two
    subsets of $\calA$. $\calA_1'$ and $\calA_2'$ are order-separable.
    We can add two consecutive arcs in the envelope of $\calA$ into $\calH$.
    After that, $\substructure$ can be
    separated into two new substructures  $\substructure_1(\baseline_1, \calA_1),
    \substructure_2(\baseline_2, \calA_2)$,
    such that subarcs in $\calA_1'$ only belong to $\calA_1$, and
    subarcs in $\calA_2'$ only belong to $\calA_2$.
\end{definition}

\begin{lemma}
  \label{lm:labelcut}
Consider a substructure $\substructure(\baseline, \calA)$ and two subsets $\calA_1', \calA_2'$ of
$\calA$. There exists a label-cut when $\calA_1'$ and $\calA_2'$ are order-separable, and
we can perform the label-cut in polynomial time.
\end{lemma}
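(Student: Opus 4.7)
The plan is to locate the cut by exploiting the total ordering of the envelope arcs. Write $e_1,e_2,\ldots,e_k$ for the envelope of $\substructure(\baseline,\calA)$ in order from $Q_s$ to $Q_t$; since consecutive envelope arcs are adjacent, these are totally ordered under $\prec$. Let $[A_i,B_i]$ denote the baseline endpoints of $e_i$. For each arc $a\in\calA$ with baseline endpoints $A_a\prec B_a$, I would assign an envelope index $r(a)$ defined as the largest $i$ such that $B_i$ occurs no later than $B_a$ along $\baseline$ (with $r(a)=0$ if none exists). Order-separability gives $r(a)\le r(b)$ whenever $a\in\calA_1'$ and $b\in\calA_2'$, so the transition index $i^{*}:=\max_{a\in\calA_1'}r(a)$ satisfies $i^{*}\le r(b)$ for every $b\in\calA_2'$. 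I would then propose adding the consecutive envelope arcs $e_{i^{*}}$ and $e_{i^{*}+1}$ to $\calH$.

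After this addition, the new $\calH$ bulges outward along the subarcs $e_{i^{*}}[A_{i^{*}},Q_{i^{*}}]$ and $e_{i^{*}+1}[Q_{i^{*}},B_{i^{*}+1}]$, where $Q_{i^{*}}=e_{i^{*}}\cap e_{i^{*}+1}$ lies above the baseline. The old baseline segment between $A_{i^{*}}$ and $B_{i^{*}+1}$ becomes interior to the enlarged $\calH$, and what remains of the boundary that is still coverable breaks into two maximal segments, inducing two new substructures $\substructure_1$ (to the left of $Q_{i^{*}}$) and $\substructure_2$ (to the right). To verify that every $a\in\calA_1'$ ends up only in $\substructure_1$, I would use the choice of $i^{*}$ to argue that $B_a$ is no later than $B_{i^{*}}$, so $a$'s uncovered portion must terminate on the left side of $Q_{i^{*}}$; the Single Intersection Property (Lemma~\ref{lemma:singleIntersection}) then rules out the possibility that $a$ sneaks across $e_{i^{*}}$ a second time to reach $\substructure_2$. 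A symmetric argument, using the lower bound $r(b)\ge i^{*}+1$ for every $b\in\calA_2'$, places $\calA_2'$ inside $\substructure_2$.

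The main difficulty I expect is the case in which the baseline intervals of $\calA_1'$ and $\calA_2'$ overlap along $\baseline$ even though the two sets are order-separable: here the actual separation does not occur on the baseline but strictly above it, at $Q_{i^{*}}$. To make the argument airtight I would combine order-separability with non-self-intersection of $\substructure$ to show that the envelope faithfully records this upper transition, so that it does correspond to a pair $e_{i^{*}},e_{i^{*}+1}$ of consecutive envelope arcs rather than being hidden between arcs that never rise to the envelope. The polynomial-time claim is straightforward: the envelope of $\substructure$ is built by a standard planar sweep in time polynomial in $|\calA|$, each $r(a)$ is obtained by a binary search over the sorted baseline endpoints $\{B_i\}$, and $i^{*}$ together with the label-cut itself is then identified in a single linear pass.
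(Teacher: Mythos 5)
Your plan is structurally in the same spirit as the paper's (locate consecutive envelope arcs at the ``transition'' between $\calA_1'$ and $\calA_2'$, add them to $\calH$, and argue that the bulge over their common intersection point separates the two label classes), but the way you \emph{identify} the cut position is different from the paper's and contains a genuine gap. The paper identifies two consecutive envelope arcs one of which actually \emph{belongs to} $\calA_1'$ and the other to $\calA_2'$ (or, if no such adjacency exists, a single envelope arc $a$ with $a_1'\prec a\prec a_2'$ for all $a_1'\in\calA_1',a_2'\in\calA_2'$), and then invokes order-separability directly: every $a'\in\calA_1'$ satisfies $a'\prec b$ for the chosen envelope arc $b\in\calA_2'$, so $a'$ cannot reach the right baseline piece $b[P',D]$, and symmetrically for $\calA_2'$. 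Because you choose $e_{i^*},e_{i^*+1}$ by the rank function rather than by membership in the two label sets, you cannot make that appeal to order-separability, and the inequalities you substitute for it are simply false.

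Concretely: from $r(a)\le r(b)$ and $i^{*}=\max_{a\in\calA_1'}r(a)$ you get only $i^{*}\le r(b)$ for $b\in\calA_2'$, not $r(b)\ge i^{*}+1$ as you later assert; and from $r(a)\le i^{*}$ you get only $B_a\prec B_{i^{*}+1}$, not ``$B_a$ is no later than $B_{i^{*}}$'' as you claim when placing $\calA_1'$ in $\substructure_1$. In fact the maximizer $a^{*}\in\calA_1'$ has $B_{i^{*}}\preceq B_{a^{*}}$, which is the \emph{opposite} direction. The problematic regime is precisely $r(a^{*})=r(b)=i^{*}$: both $B_{a^{*}}$ and $B_b$ lie in $[B_{i^{*}},B_{i^{*}+1})$, so after adding $e_{i^*},e_{i^*+1}$ both arcs have their right endpoints covered by $e_{i^*+1}$, and whether each re-emerges to the left or right of $Q_{i^{*}}$ is exactly the thing you would need to decide. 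You flag this as the ``main difficulty'' and propose to resolve it by ``showing the envelope faithfully records this upper transition,'' but that is the proof obligation, not a proof; nothing in the rank construction forces the transition to land between $e_{i^{*}}$ and $e_{i^{*}+1}$ rather than between an $\calA_1'$-arc and an $\calA_2'$-arc that both sit under the same envelope arc. The paper sidesteps this by taking the envelope arcs to be members of the two sets, so that the $\prec$ relations needed to keep $\calA_1'$ left of $P'$ and $\calA_2'$ right of $P'$ are supplied by the order-separability hypothesis itself rather than derived from a baseline-endpoint rank.
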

Figure~\ref{fig:labelcutoff} illustrates the process of construction.  We defer its proof in
Appendix~\ref{apd:merge}.

\eat{
\begin{proof}
 We only discuss the case that some arcs of $\calA'_1$ are adjacent to some arcs of  $\calA_2'$. (If
 not, we  can trivially select one arc $a$ along the envelope which satisfy $a_1' \prec a \prec a_2', \forall
a_1' \in \calA_1', a_2' \in \calA_2'$ to separate them. )  Along the envelope of substructure
$\substructure$, there exist two consecutive arcs such that one is in $\calA_1'$ but the other is in
$\calA_2'$. we use $a, b$ to denote the two arcs and add them in $\calH$.
Suppose the endpoints of $a$ and $b$ on $\baseline$ are $(A,B)$, $(C,D)$ respectively. $a$ and
$b$ intersect at $P'$. Let region $\Region_{co}$ to be $\Region(a) \cup \Region(b)$. It is easy to
see that $\Region_{co}$ separate the $\substructure$ into two new ones. Suppose
the endpoints of baseline $\baseline$ are $(Q_s, Q_t)$.  Regard $\baseline[Q_s, A] \cup a[A, P']$ as
the baseline $\baseline_1$ of $\substructure_1$ and $ b[P', D] \cup \baseline[D,Q_t]$ as the
baseline $\baseline_2$ of $\substructure_2$. The subarcs of $\calA$ in region
$\Region(\substructure) - \Region_{co}$ are separated two different group based on the two different
baselines. We denote them by $\calA_1$ and $\calA_2$. Since $a' \prec b, \forall a' \in \calA_1'$ and
$b' \succ a, \forall b' \in \calA_2'$, $a' $ cannot intersect with $b[P',D]$ and $b'$ cannot
intersect with $a[A,P']$. Thus, no subarc in $\calA_1'$ belongs to $\calA_2$ and no subarc in
$\calA_2'$ belongs to $\calA_1$. So, we construct two suitable substructures $\substructure_1(\baseline_1,
\calA_1)$ and $\substructure_2(\baseline_2, \calA_2)$. Figure~\ref{fig:labelcutoff} illustrates the
process of construction.
\end{proof}

} %End eat

After the label-cut operation, in each substructure,
the baselines for all active regions are not
overlapping.
Thus, if any substructure contains more than one active region,
%the arc sets of different active regions are order-separable.
consider any two of them, say $\calA_1$ and $\calA_2$.
We can add into $\calH$ one arc $a$ along the envelope
which satisfies $a_1 \prec a \prec a_2, \forall
a_1 \in \calA_1, a_2\in \calA_2$.
After the addition of arc $a$, $\calA_1$ and $\calA_2$ are
separated into two different new substructures.
Repeat the above step whenever one substructure contains more than
one active region.
This establishes the active region uniqueness property (P1).

\eat{
At last, there may exist a substructure R-correlated to $\substructure$. Since $\substructure$
is cut , we need add blue edge between its R-correlated substructure and  each of the new
substructures which is induced by $\substructure$. It would break the matching property of blue
edges. Thus, its R-correlated substructure also need to be cut.
\begin{definition}[synchronization-cut]
  Consider two R-correlated substructure $\substructure$ and $\substructure'$. If we cut
  $\substructure$ into some new substructures. There exists a cut operation for $\substructure'$ such
  that after cutting, each new substructure induced by $\substructure'$ is R-correlated to at most
  one new substructure induced by $\substructure$. We call the  cut for $\substructure'$ synchronization-cut.
\end{definition}

\begin{lemma}
  Consider two R-correlated substructures $\substructure$ and $\substructure'$. If we cut
  $\substructure$ into some new substructures. There exists a synchronization-cut for
  $\substructure'$.
\end{lemma}

\begin{proof}
  $\substructure$ is cut  into several new substructures $\{ \substructure_i(\calA_i) \}_{i\in[m]}$. We
  assign each substructure a label ``$i$''. For any arc in $\substructure'$, if its sibling is in
  $\substructure_i$, we label it as ``$i$''. We label other arcs in $\substructure'$ as ``$0$''.
  We know for  any $ i,j \in [m]$ ,  set $\calA_i, \calA_j$ are order-separable in
  $\substructure$. According to the global order consistency, the two arc sets, in which each arc is
  labeled as ``$i$'' and  ``$j$'' respectively,  are also order-separable. Thus, we can do a
  label-cut for $\substructure'$. Obviously, each new substructure induced by $\substructure'$
  contains arcs with only one kind of label (except label ``$0$''). It means each new substructure
  induced by $\substructure'$ can only R-correlated to one substructure induced by
  $\substructure$. Thus, the set of blue edges in $\auxgraph$ is still a matching.

\end{proof}

In the rest of the paper, when we cut a substructure, we also synchronization-cut its
R-correlated substructure.
}%

\topic{Limiting the size of substructure which contains an active region}
Now, we discuss how to make substructure which contains an active region
bounded inside a small region.
This property is particular useful later
when we show the substructure relation graph $\auxgraph$ is acyclic.

%First, we prove any substructure which contains an active belonging to only  one square
%(i.e., the active region before merging) is limited  in a small region.

Suppose the gadget of square $\ssquare$ is $(\Disk_s, \Disk_t)$.
The line
$\dcenter_s\dcenter_t$ divides the plane into two halfplanes $H^{+}$ and
$H^{-}$.
$\Disk_s$ and $\Disk_t$ intersect at point $P$ and $Q$.
The boundary of disk $\Disk(P, 2)$
is tangent to  $\Disk_s$ and $\Disk_t$ at point $Q_s$ and $Q_t$ respectively.
$\Disk(Q_s,1)$ and
$\Disk(Q_t,1)$ intersect at point $D$. See Figure~\ref{fig:smallActiveRegion}.
We call $D$ the dome-point of gadget $\gadget(\ssquare)$ and
use $\Domain(\ssquare^+)$  to denote the
region $(\Disk(D,1)) - \Disk_s - \Disk_t )\cap H^{+}$.

\begin{figure}[t]
  \centering
  \includegraphics[width=0.35\textwidth]{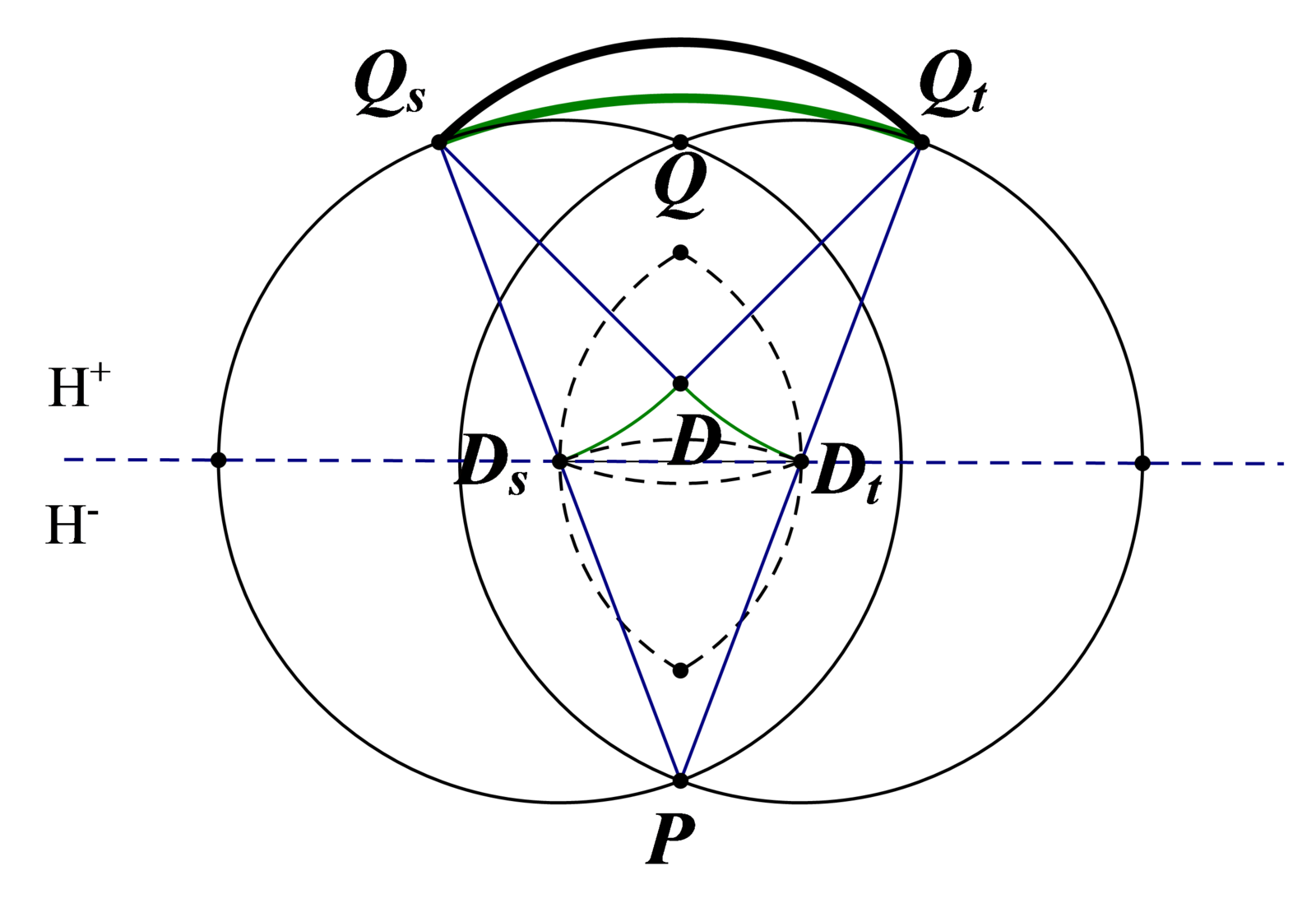}
  \caption{The farthest disk pair of square $\ssquare$ is $(\Disk_s, \Disk_t)$. Suppose  $\Disk_s$
    and $\Disk_t$ intersect at point $P, Q$. The disk $\Disk(P, 2)$
is tangent to $\Disk_s$ and $\Disk_t$ at point $Q_s$ and  $Q_t$ respectively. $\Disk(Q_s,1)$ and
$\Disk(Q_t,1)$ intersect at $D$. The active region of $\ssquare$ in $H^{+}$ is totally covered by $\Domain(\ssquare^+)$. }
  \label{fig:smallActiveRegion}
\end{figure}

\begin{lemma}
  \label{smallregion}
  Consider the substructure $\substructure$ which contains an active region of square
  $\ssquare$.  The substructure can be  cut  into at most three smaller substructures, by
  doing label-cut twice.
  At most one of them
  contains the active region. Moreover,  this new substructure (if any) is
  bounded by the region $\Domain(\ssquare^+)$ (or $\Domain(\ssquare^{-})$) associated with $\ssquare$.
\end{lemma}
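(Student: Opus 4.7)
The plan is to isolate, from $\substructure$, the arcs that bound the active region by performing two successive label-cut operations. Let $\calA^*\subseteq\calA(\substructure)$ denote the arcs of $\substructure$ coming from disks centered in the core-central area $\coreCenter$ of $\ssquare$; these are precisely the arcs tracing out the active region $\activeR$ (say in $H^+$, the $H^-$ case being symmetric). The remaining arcs $\calA(\substructure)\setminus\calA^*$ arise from disks centered outside $\coreCenter$.

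My first step would be to show that $\calA(\substructure)\setminus\calA^*$ can be partitioned as $\calA_{\mathrm{left}}\cup\calA_{\mathrm{right}}$ such that both pairs $(\calA_{\mathrm{left}},\calA^*\cup\calA_{\mathrm{right}})$ and $(\calA_{\mathrm{left}}\cup\calA^*,\calA_{\mathrm{right}})$ are order-separable. The geometric intuition is that the range of $\baseline$ covered by $\calA^*$ is a narrow window around the two intersection points $P,Q$ of $\partial\Disk_s$ and $\partial\Disk_t$, while any uncovered arc of a disk centered outside $\coreCenter$ must have its baseline range lying either entirely before or entirely after this window along $\baseline$; otherwise, such a disk would have to reach the pocket $\Disk(P,1)\cap\Disk(Q,1)=\coreCenter$, forcing its center into $\coreCenter$ and contradicting the choice of the disk.

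Assuming this partition, I would then apply Lemma~\ref{lm:labelcut} twice. The first label-cut separates $\calA_{\mathrm{left}}$ from $\calA^*\cup\calA_{\mathrm{right}}$; the second, applied to the piece containing $\calA^*\cup\calA_{\mathrm{right}}$, separates $\calA^*$ from $\calA_{\mathrm{right}}$. Each cut adds two envelope arcs to $\calH$, so at most $O(1)$ extra disks are added and we obtain three substructures, with all of $\calA^*$ carried by the middle one. This gives the ``at most one'' clause of the lemma.

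Finally, I would establish the containment inside $\Domain(\ssquare^+)$ by the following elementary planar geometry claim: for every unit disk $\Disk_i$ with center $\dcenter_i\in\coreCenter$, one has $(\Disk_i\setminus(\Disk_s\cup\Disk_t))\cap H^+\subseteq\Disk(D,1)\cap H^+$, where $D$ is the dome-point of $\gadget(\ssquare)$. The idea is to check that among all $\dcenter_i\in\coreCenter$ the ``highest'' point reached by $\partial\Disk_i$ in $H^+\setminus(\Disk_s\cup\Disk_t)$ is attained at the extremal positions $\dcenter_i\in\{P,Q\}$, and then to use the defining property $D\in\Disk(Q_s,1)\cap\Disk(Q_t,1)$ to conclude that these extremal boundary arcs all lie under the arc of $\partial\Disk(D,1)$ joining $Q_s$ to $Q_t$. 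Unioning and subtracting $\Disk_s\cup\Disk_t$ then places the envelope region of the middle substructure inside $\Domain(\ssquare^+)$. The main obstacle is the left/right partition step, for which one must rule out the possibility that an arc from a disk centered outside $\coreCenter$ straddles the baseline window covered by $\calA^*$; this is where the special intersection shape of $\coreCenter$ is used decisively. The dome-containment is routine once the extremal cases $\dcenter_i=P$ and $\dcenter_i=Q$ are set up.
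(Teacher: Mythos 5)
Your proof departs from the paper's in a way that opens a concrete gap. The paper's proof does \emph{not} attempt to isolate exactly the arcs of disks centered in $\coreCenter$. Instead it classifies disks by whether their center lies in $\Disk(Q_s,1)$, in $\Disk(Q_t,1)$, in both, or in neither; the first and second classes give (directly, via covering $Q_s$ or $Q_t$) the order-separable ``left'' and ``right'' groups, the fourth class is the middle piece shown to sit inside $\Domain(\ssquare^+)$, and the third class is handled as a special case: if any disk is centered in $\Disk(Q_s,1)\cap\Disk(Q_t,1)\cap H^+$, that single disk is added to $\calH$ so that the entire active region is swallowed and nothing remains to cut.

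The gap in your version is exactly this third class. There can be a disk $\Disk_i$ with $\dcenter_i\in\ssquare\cap\Disk(Q_s,1)\cap\Disk(Q_t,1)\cap H^+$: the lens $\Disk(Q_s,1)\cap\Disk(Q_t,1)$ lies entirely in $H^+$, reaches to within $O(\mu^2)$ of the line $\dcenter_s\dcenter_t$, and has width $\Theta(\sqrt{\mu})$ at height $\mu$, so it overlaps $\ssquare$ substantially. Such a $\dcenter_i$ is \emph{not} in $\coreCenter$ (it lies outside $\Disk(P,1)$), yet the arc of $\Disk_i$ spans the whole active-region window and is nested with --- hence incomparable to --- the arcs of $\calA^*$. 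Your claimed dichotomy ``a non-$\coreCenter$ disk's arc lies entirely before or entirely after the window, else it must reach $\coreCenter$'' is therefore false; the offending disk reaches the window from the outside without having its center in $\coreCenter$. Without the paper's special case, you cannot build the two-sided partition needed for the two label-cuts. Additionally, even ignoring this case, the informal justification of the left/right split is weaker than the paper's: covering $Q_s$ (resp.\ $Q_t$) is what directly yields order-separability from the middle group, whereas membership in $\coreCenter$ does not by itself pin down the arc-ordering. The final $\Domain(\ssquare^+)$ containment you sketch is the routine part and matches the paper; it is the partition step that needs the missing case analysis.
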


We defer the proof to Appendix~\ref{apd:merge}.

There may exist  a substructure which contains a merged active region (i.e. an active region
which is the union of two initial active regions).
Based on Lemma~\ref{double-mix}, the arcs of the two initial active regions belong to two adjacent small
squares. Suppose the two squares are  $\ssquare$, $\ssquare'$, with square gadgets
($\Disk_s, \Disk_t$) and ($\Disk_s', \Disk_t'$) respectively.
The dome-points of $\ssquare$ and
$\ssquare'$ in $H^+$ are  respectively $\dcenter$ and $\dcenter'$.
We apply the operations in
Lemma~\ref{smallregion} for each
initial active region.
Since the angle between $\dcenter_s\dcenter_t$ and $\dcenter_s'\dcenter_t'$
is $O(\epsilon)$,  obviously,  the substructure containing the merged active region is small as
well, i.e. bounded in $\Domain(\ssquare^+) \cup \Domain(\ssquare'^+)$.

Since each substructure which contains an active region is small enough,
we can show the following lemma, which will be useful for proving the
acyclicity of the substructure relation graph $\auxgraph$.

\begin{lemma}[Highly parallel arcs]
  \label{smallangle}
  Suppose substructure $\substructure(\baseline, \calA)$ contains an active region. The central
  angle of any arc in $\substructure$ is no more than $O(\epsilon)$.
  Meanwhile, there exists a
  line $l$ such that the angle between $l$ and the tangent line at any point of
  any arc $a \in \calA$ is at most $O(\epsilon)$.
\end{lemma}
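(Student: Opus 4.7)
The plan is to combine Lemma~\ref{smallregion} with a direct estimation of the shape of $\Domain(\ssquare^+)$. First I would invoke Lemma~\ref{smallregion} so that $\substructure$ lies inside $\Domain(\ssquare^+)$; for a substructure whose active region has been merged across adjacent squares, the analogous analysis confines $\substructure$ to $\Domain(\ssquare^+)\cup \Domain(\ssquare'^+)$. I handle the single-dome case first and reduce the merged case to it.

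Next I would set up coordinates with $\dcenter_s\dcenter_t$ horizontal and $r_{st}=|\dcenter_s\dcenter_t|\leq \mu\sqrt{2}=O(\epsilon)$. Then $P=(0,h)$ with $h=\sqrt{1-r_{st}^2/4}=1-O(\epsilon^2)$. A short computation from the definition of the dome-point shows $D$ lies $O(\epsilon^2)$ below the origin, and hence $\Disk(D,1)$ extends in the $y$-direction only up to $1-O(\epsilon^2)$. Intersecting with $H^+$ and removing $\Disk_s\cup\Disk_t$ yields a thin crescent of horizontal extent $O(\epsilon)$ and vertical extent $O(\epsilon^2)$; in particular $\operatorname{diam}(\Domain(\ssquare^+))=O(\epsilon)$.

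For the central-angle bound, any arc $a\in\calA$ lies in $\Domain(\ssquare^+)$, so its chord has length at most $O(\epsilon)$. Since $a$ is a unit-circle arc and $\angle(a)<\pi$ by Lemma~\ref{lm:pi}, the identity $|AB|=2\sin(\angle(a)/2)$ forces $\angle(a)=O(\epsilon)$. For the tangent bound, fix a point $(x,y)\in a$ on a unit disk with center $(x_0,y_0)$. The tangent is perpendicular to the radius, so its angle with the $x$-axis is $\arctan\bigl((x-x_0)/(y-y_0)\bigr)$. The key geometric observation is that since $\Domain(\ssquare^+)$ is a thin sliver with $y\in[h,1]$ and $|x|=O(\epsilon)$, any unit circle that enters this crescent must do so near its top (i.e.\ $y_0\approx 0$, $|x_0|=O(\epsilon)$) or near its bottom (i.e.\ $y_0\approx 2$, $|x_0|=O(\epsilon)$); a disk whose center is located elsewhere would intersect the horizontal strip $y\approx 1$ only at $x$-coordinates bounded away from zero, hence outside $\Domain(\ssquare^+)$. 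In both canonical cases one has $|y-y_0|=1-O(\epsilon^2)$ and $|x-x_0|=O(\epsilon)$, and the tangent thus makes angle $O(\epsilon)$ with the horizontal. Taking $l$ parallel to $\dcenter_s\dcenter_t$ finishes the single-dome case.

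For the merged case the same estimates apply separately in $\Domain(\ssquare^+)$ and $\Domain(\ssquare'^+)$, giving tangents within $O(\epsilon)$ of $\dcenter_s\dcenter_t$ and of $\dcenter_s'\dcenter_t'$ respectively. By Lemma~\ref{double-mix}(3) the angle between these two directions is $O(\epsilon)$, so choosing $l$ parallel to $\dcenter_s\dcenter_t$ keeps every tangent within $O(\epsilon)$ of $l$ by the triangle inequality. The main obstacle is the case analysis for the tangent bound, namely verifying rigorously that every unit circle meeting the crescent $\Domain(\ssquare^+)$ is forced to meet it near its own top or bottom; once that is in hand, both conclusions reduce to elementary planar trigonometry.
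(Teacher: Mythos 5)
Your central-angle argument is correct and in fact more explicit than what the paper writes (the paper's justification for this lemma is a two-sentence remark invoking Lemma~\ref{smallregion} and Lemma~\ref{double-mix}). However, the tangent bound has a genuine gap, and it is not merely the unfinished case analysis you flag at the end --- the geometric fact you hope to establish there is false as stated. In your coordinates a unit circle centered at $(1,1)$ passes through $(0,1)$, and $(0,1)\in\Domain(\ssquare^+)$: it lies inside $\Disk(D,1)$ because $D$ is at height $\Theta(\epsilon^2)$ on the $y$-axis, and it lies outside $\Disk_s$ and $\Disk_t$ because $|\dcenter_s-(0,1)|=\sqrt{r_{st}^2/4+1}>1$. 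So this circle meets the crescent with a vertical tangent, while its center is nowhere near $(0,0)$ or $(0,2)$. Your supporting heuristic --- that a circle centered elsewhere would meet the strip $y\approx1$ only at $x$ bounded away from $0$ --- is therefore wrong: a unit circle can pass through the crescent ``from the side,'' with an $O(\epsilon^2)$-long sliver inside the dome, and the dichotomy you want to prove does not hold for arbitrary unit circles meeting $\Domain(\ssquare^+)$.

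The tangent bound consequently cannot be extracted from the shape of $\Domain(\ssquare^+)$ alone. What is actually needed is the structural fact hiding in Lemma~\ref{smallregion}: after the two label-cuts, every arc in the surviving substructure $\calA$ is contributed by a disk whose center lies in $\ssquare$ (or in $\ssquare\cup\ssquare'$ in the merged case). This is what the proof of Lemma~\ref{smallregion} is using when it restricts attention to disks ``centered in $\ssquare-\Disk(Q_s,1)\cup\Disk(Q_t,1)$,'' and it is also exactly why the paper simply declares that $l=\dcenter_s\dcenter_t$ works: every contributing center is within $O(\epsilon)$ of the origin, every point of every arc in $\calA$ is within $O(\epsilon)$ of $(0,1)$, hence the radius vector is within $O(\epsilon)$ of vertical and the tangent within $O(\epsilon)$ of $\dcenter_s\dcenter_t$. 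Note this also removes your ``$y_0\approx2$'' branch entirely; a disk centered near $(0,2)$ sits in a non-adjacent square and its uncovered arcs end up in that square's substructure, not in $\substructure$. With the source of the disk centers pinned down this way rather than by pure crescent geometry, the rest of your argument --- including the merged case via Lemma~\ref{double-mix}(3) --- goes through.
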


Actually, for the initial active region, the line $\dcenter_s\dcenter_t$ satisfy the property. For
the merged active region, we know the angle between $\dcenter_s\dcenter_t$ and
$\dcenter_s'\dcenter_t'$ is $O(\epsilon)$ based on Lemma~\ref{double-mix}.
Hence, we can still see that the line
$\dcenter_s\dcenter_t$ satisfies the property.

To summarize, after all operations in this subsection,
we can ensure that each substructure contains at most one active region
(i.e., (P1) in Lemma~\ref{lm:calH}).
Moreover, we have that
each substructure which contains an active region is small enough
(so that Lemma~\ref{smallangle} holds).

\topic{Handling cyclic substructures}
At the end of this subsection,
we deal with the problem we left in the very beginning of Section~\ref{sec:calH},
to break all cyclic baselines.
Note that this step should be done in the beginning.
First,
we consider that case that there exists a point
on the baseline which cannot be covered by any
arc of any active region.
We can include any envelope arc that covers the point into $\calH$,
which is enough to break the cycle.
This is essentially a label-cut and do not separate any single connected active region
into disconnected pieces.
Then we consider the
case where every point on the baseline
is covered by some arc of active regions.
Note that the merge operation only depends on the local property
of two active regions, thus, we can merge active regions even in ``cyclic substructure''.
Assume that we have done all merge operations.
Based on
Lemma~\ref{smallangle},
we know one active
region is very small and thus cannot cover all points on a closed curve.
We pick one active region.
Using the same operation as Lemma~\ref{smallregion} (do two label-cuts),
we can essentially isolate the active region and
cut the original cyclic baseline to two new baselines.

\subsection{Eliminating Self-intersections}
\label{subsec:cutoff}

The goal of this part is to add a few more disks into $\calH$ so
that any substructure is non-self-intersecting (Recall the definition in Lemma~\ref{lemma:singleIntersection}).
Note that the substructures which contain active region are non-self-intersecting after the process in Section~\ref{subsec:merge}. We just need to process the substructures without active regions in this part.
% At first, we generalize the concept of substructure. For a set $\calA$ of arcs in uncover region, if
% they cover a consecutive segment of the boundary of $\calH$. We say $\calA$ induce a substructure
% $\substructure(\calA)$.

\begin{figure}[t]
	\centering
	\includegraphics[width=0.65\textwidth]{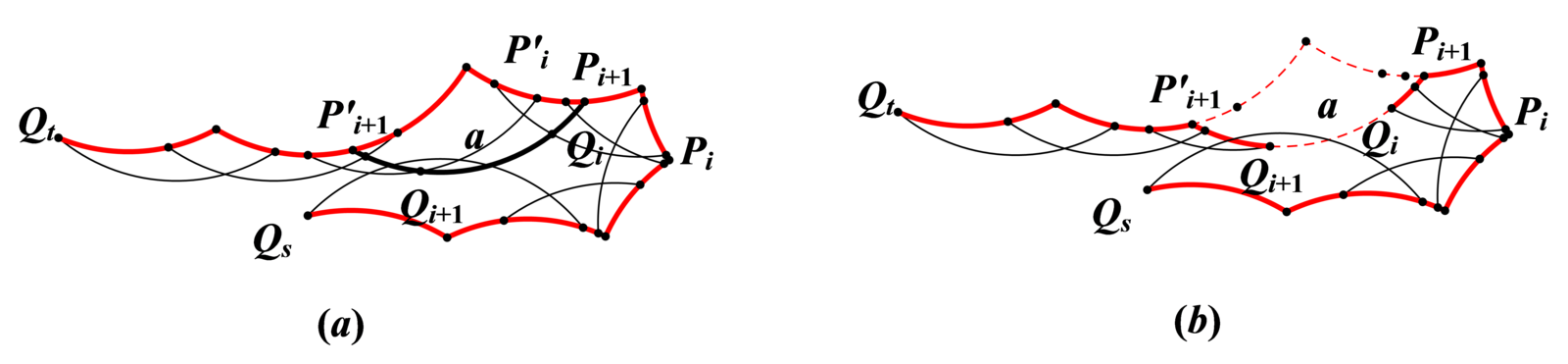}
	\caption{The process to avoid self-intersection. The left hand side is a self-intersection
    substructure. We search from point $Q_s$ along the envelope. Let arc set
$\calA_i$ be $\{a_1 ,a_2 , \ldots, a_i  \}$. Then we have a set $\{ \substructure_i[\calA_{i}]
\}_{i\in[k]}$ of substructures. If $\substructure_{i}$ is non-self-intersecting but $\substructure_{i+1}$
is  self-intersecting.  We add arc $a_{i+1}$ in $\calH$. The right hand side illustrates the two new
substructures after the cut.}
	\label{fig:point-cut-off}
\end{figure}

We use a simple greedy approach.
We consider one substructure $\substructure(\baseline, \calA)$.
Suppose the endpoints of
$\baseline$ are $Q_s$ and $Q_t$, and the envelope is
$\{a_1[Q_s, Q_1]$,$\ldots$, $a_k[Q_{k-1}, Q_t] \}$,
where $Q_i$ is
the intersection point of $a_{i-1}$ and $a_i$.
We denote the
endpoints of $a_i$ on the baseline $\baseline$ by $P_i$ and  $P_i'$ (Note that $P_1 = Q_s$).
Let arc set
$\calA_i$ be $\{a_1 ,a_2 , \ldots, a_i  \}$.
Then we have a set $\{\substructure_i[\calA_{i}]\}_{i\in[k]}$
of substructures, where $\substructure_i[\calA_{i}]$ is the substructure induced by
arc set $\calA_{i}$.
We consider the arcs lying on the envelope one by
one and check whether we should add it into $\calH$ or not.
Concretely, we add $\Disk(a_{i+1})$ in $\calH$ if the following condition holds:
\begin{itemize}
\item $\substructure_i$ is non-self-intersecting, but $\substructure_{i+1}$ is self-intersecting,
\end{itemize}
The addition of $\Disk(a_{i+1})$ cuts the substructure into two smaller substructures.
One of which is certainly non-self-intersecting. The
other is induced by arc set $\calA = \{a_{i+2}, \ldots, a_{k} \}$.
See Figure~\ref{fig:point-cut-off}.
We repeat
the above process until there is no self-intersection in all substructures.
Furthermore, we can easily prove the following nice property.
The proof can be found in Appendix~\ref{apd:cutoff}.

\begin{lemma}
  \label{arcuniqueness}
  In each of the above iterations, one substructure $\substructure(\baseline, \calA)$ is cut into
  at most two new substructures. Any original arc in
  $\calA$ cannot be cut into two pieces, each of which belongs to a different new substructure.
\end{lemma}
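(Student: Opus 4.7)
The plan is to directly analyze what happens to the substructure region when the greedy step inserts the entire disk $\Disk(a_{i+1})$ into $\calH$. The argument splits naturally into the two claims.

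For the first claim (at most two new substructures), I would use the fact that the envelope $a_1,\ldots,a_k$ is a simple path along the top of the substructure region from $Q_s$ to $Q_t$, while the baseline bounds the region from below. The envelope arc $a_{i+1}$ occupies the portion from $Q_i$ to $Q_{i+1}$, and its supporting disk $\Disk(a_{i+1})$ reaches all the way down to (and contains) the baseline segment $[P_{i+1},P_{i+1}']$. Deleting $\Disk(a_{i+1})$ from the substructure region therefore excises a chunk that stretches from the envelope down through the baseline, disconnecting what remains into at most two connected pieces---one carrying $a_1,\ldots,a_i$ and one carrying $a_{i+2},\ldots,a_k$. Each piece produces at most one new substructure.

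For the second claim (no original arc is split), fix any $a\in\calA$ with $a\neq a_{i+1}$. The essential observation is purely geometric: $\partial\Disk(a)$ and $\partial\Disk(a_{i+1})$ are two unit circles and therefore meet in at most two points. Consequently $\Disk(a_{i+1})$ covers at most one contiguous sub-arc of $\partial\Disk(a)$. By the paper's convention that an uncovered arc is the single span of $\partial\Disk(a)\cap H^+$ from its first to its last intersection with $\partial\calH$, the post-cut arc contributed by $\Disk(a)$ is still a single contiguous span on $\partial\Disk(a)$ (possibly shorter, unchanged, or empty), never two disjoint spans. Hence $\Disk(a)$ cannot supply arcs to two different new substructures.

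The main obstacle will be verifying that this single post-cut arc actually lies inside exactly one of the two new substructures, i.e.\ that both of its endpoints on the new $\partial\calH$ sit on the same side of the baseline footprint of $\Disk(a_{i+1})$. I would handle this by case analysis on whether the two original endpoints of $a$ lie inside or outside $\Disk(a_{i+1})$, combined with the two-circle-intersection bound; in the potentially worrisome case where $a$ originally straddled $\Disk(a_{i+1})$, the first-to-last convention forces the new endpoints onto $\partial\Disk(a_{i+1})$ itself, which now forms a single connected arc of the new $\partial\calH$ lying within the baseline of exactly one of the two new substructures. This boundary case is the only delicate point; once it is closed, both statements of the lemma follow immediately.
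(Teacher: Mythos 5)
Your handling of the first claim (at most two new substructures) is fine and matches the spirit of the paper's: removing $\Disk(a_{i+1})$ severs the region along a chord that reaches the baseline, leaving the pieces to the left of $Q_i$ and to the right of $Q_{i+1}$. The paper is even more direct and simply writes down the two new baselines $\baseline[Q_s,P_{i+1}]\cup a_{i+1}[P_{i+1},Q_i]$ and $a_{i+1}[Q_{i+1},P_{i+1}']\cup\baseline[P_{i+1}',Q_t]$.

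Your second claim has a genuine gap, and it is precisely in the case you flagged as the ``only delicate point.'' Consider an arc $a$ whose baseline endpoints $A,B$ are \emph{both outside} $\Disk(a_{i+1})$ but bracket its footprint, i.e.\ $A \prec P_{i+1} \prec P_{i+1}' \prec B$. Adding $\Disk(a_{i+1})$ to $\calH$ covers only a middle portion of $a$; the first and last intersections of $\partial\Disk(a)$ with the new $\partial\calH$ are still $A$ and $B$, not points of $\partial\Disk(a_{i+1})$. So the ``first-to-last'' convention does not, as you assert, pull the new endpoints onto $\partial\Disk(a_{i+1})$ --- they stay put, one on each of the two new baselines, and this arc would indeed span both new substructures. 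The two-circle-intersection bound that you use (only one contiguous covered sub-arc of $\partial\Disk(a)$) is true but does nothing to preclude this: a contiguous covered middle segment is exactly what creates a straddler. What closes this case in the paper is the fact that $a_{i+1}$ is an \emph{envelope} arc, so no other arc $b$ can have endpoints satisfying $A \prec P_{i+1} \prec P_{i+1}' \prec B$ --- otherwise $b$ would dominate $a_{i+1}$ and appear on the envelope in its place. Your proof plan never invokes this envelope property, and without it the straddling case is not ruled out.
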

\eat{
\begin{proof}
Suppose the envelope is $\{a_1[Q_s, Q_1], \ldots, a_k[Q_{k-1},Q_t] \}$.
After we add the disk $\Disk(a_i)$ in
$\calH$, the baselines of the two new substructures are
$\baseline_1 = \baseline[Q_s, P_{i+1}] \cup
a_{i+1}[P_{i+1}, Q_i]$ and
$\baseline_2 = a_{i+1}[Q_{i+1},P_{i+1}'] \cup \baseline[P_{i+1}', Q_t]$.
Since $a_i$ is an arc lying on the envelope, there does not exist an arc $b$ with endpoints
$(A, B)$ such that $A \prec P_i \prec P_{i+1} \prec B$.
So the endpoints of any arc in
$\substructure_1$ cannot locate in $\Region(\substructure_2)$. Thus, any arc cannot be separated
into two substructures.
%Figure~\ref{fig:point-cut-off} gives
%an explanation about the change of substructures before and after the process.
\end{proof}
}%end eat

To summarize, we have obtained the non-self-intersection property ((P2) in Lemma~\ref{lm:calH}).

\subsection{Ensuring that $\auxgraph$ is an Acyclic 2-Matching }
\label{subsec:GOC}

We discuss how to add some extra disks in $\calH$
to make $\auxgraph$ an acyclic 2-matching ((P3) in Lemma~\ref{lm:calH}).

\topic{Blue edges}
First we show that the set of blue edges form a matching.
\begin{lemma}
  Two blue edges cannot be incident to the same node.
\end{lemma}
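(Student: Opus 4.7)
The plan is to derive a contradiction from the assumption that a single substructure $\substructure$ is incident to two distinct blue edges, by leaning on property (P1) of Lemma~\ref{lm:calH} (active region uniqueness) which was already secured in Section~\ref{subsec:merge}. Recall that a blue edge is placed between two substructures precisely when they are R-correlated, i.e., when they contain the two different active regions (one in $H^{+}$, one in $H^{-}$) of a single gadget $\gadget(\ssquare)$.

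First, I would suppose for contradiction that $\substructure$ is connected by blue edges to two distinct substructures $\substructure'$ and $\substructure''$. Let $\gadget_1$ be the gadget whose two active regions witness the blue edge $(\substructure, \substructure')$, and let $\gadget_2$ be the gadget witnessing the blue edge $(\substructure, \substructure'')$. In either case $\substructure$ contributes one of the two active regions of the corresponding gadget.

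Next I would split into two cases. If $\gadget_1 \neq \gadget_2$, then $\substructure$ contains an active region of $\gadget_1$ \emph{and} an active region of $\gadget_2$, which are distinct active regions (distinct gadgets give rise to disjoint pairs of active regions). This directly contradicts (P1), which asserts that every substructure contains at most one active region. If on the other hand $\gadget_1 = \gadget_2$, then both $\substructure'$ and $\substructure''$ must contain the unique ``other'' active region of this common gadget — the one not contained in $\substructure$. But the merging and label-cut operations in Section~\ref{subsec:merge} were designed so that every (possibly merged) active region lives in exactly one substructure: its arcs determine a single connected portion of $\Uncover$ attached to a single baseline. Hence $\substructure' = \substructure''$, contradicting the assumption that the two blue edges are distinct.

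There is essentially no new geometric content to grind through here; the heavy lifting was done in the previous subsection. The only point that deserves a line of justification is the claim that after all the merges and label-cuts of Section~\ref{subsec:merge}, an active region cannot be shared between two distinct substructures — which follows because each label-cut separates \emph{order-separable} arc sets (by Lemma~\ref{lm:labelcut}) and therefore never splits a single (possibly merged) active region across two resulting substructures.
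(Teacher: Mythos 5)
Your overall strategy closely mirrors the paper's: both proofs reduce the matching claim to two ingredients already secured in Section~\ref{subsec:merge}, namely that each (possibly merged) active region lives in exactly one substructure and that each substructure contains at most one active region (P1). The paper phrases this as ``the set of active region pairs is a matching before merging, and merges/cuts preserve this,'' while you argue by contradiction directly on the final $\auxgraph$; the content is essentially the same.

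That said, your Case~1 has a gap that needs patching. You write that when $\gadget_1 \ne \gadget_2$, the active regions of $\gadget_1$ and $\gadget_2$ sitting in $\substructure$ are ``distinct active regions,'' and hence P1 is violated. But Section~\ref{subsec:merge} explicitly allows two such active regions, coming from two \emph{different} gadgets, to be merged into a \emph{single} (merged) active region whenever the corresponding pairs form a double-mixture. After such a merge, $\substructure$ contains only one active region for the purposes of P1, so your contradiction evaporates. To close the case you need to observe that a merge from $\gadget_1$ and $\gadget_2$ is always performed simultaneously on both halves (that is what ``double-mixture'' means, backed by Lemma~\ref{lm:no single mixture}), and that order-separability --- and hence the mixture relation --- is only defined within a single substructure. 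Consequently the other two halves $\activeR_2$ and $\activeR_2'$ are likewise merged into one active region and thus live in one and the same substructure, forcing $\substructure' = \substructure''$ and again contradicting the assumption of two distinct blue edges. With this extra sub-case handled, your argument matches the paper's in both correctness and spirit.
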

\begin{proof}
  Before the merge operation, the set of active region pairs forms a matching.
  To see this, note that our merge operations always apply to a double-mixture
  (which corresponds to merging two blue edges into one).
  Moreover, any cut operation cannot
break one active region into two, thus has no effect on any blue edge.
Hence, after all merge and cut operations, the
set of active region pairs is still  a matching.
\end{proof}

\topic{Red edges}
Then, we prove that any node which has more than one incident red edges can be cut
such that each new node(i.e., substructure) only has at most one incident red edge.

First, we prove a simple yet useful geometric lemma stating
that a point cannot be covered by three or more substructures.
%Thus, it is possible that after we cut substructure $\substructure$ that each arc is overlapping with at
% most one other substructure.
Note that from now on, all substructures have no self-intersections.

\begin{lemma}
  \label{oneByone}
  We are given a substructure $\substructure(\baseline,\calA)$.
  $a \in \calA$ and two arcs $b_1, b_2 \notin \calA$.
  If $b_1, b_2$ cover the same point on
  $a$,  $b_1, b_2$ should belong to the same substructure.
\end{lemma}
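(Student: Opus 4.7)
My plan is to proceed by contradiction. Suppose $b_1$ and $b_2$ lie in distinct substructures $\substructure_1(\baseline_1,\calA_1)$ and $\substructure_2(\baseline_2,\calA_2)$. Since $\baseline_1$ and $\baseline_2$ are distinct maximal coverable segments of $\partial\calH$, they are disjoint. Let $\sigma_i \subset \baseline_i$ denote the baseline segment between the endpoints of $b_i$, so that $\Reg(b_i)$ is bounded by $b_i$ and $\sigma_i$.

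The first step would be a topological analysis of $\Reg(b_1)\cap\Reg(b_2)$, which is nonempty since $P$ lies in both. I claim this intersection is bounded entirely by sub-arcs of $b_1$ and $b_2$, with no contribution from the baseline segments. The reason is that $\sigma_1\subset\baseline_1$ cannot enter $\Reg(b_2)$: doing so would force $\sigma_1$ to cross the boundary $\partial\Reg(b_2)=b_2\cup\sigma_2$, but $b_2$ is an uncovered arc disjoint from $\partial\calH$ (except at its endpoints, which lie on $\baseline_2$), and $\sigma_1\cap\sigma_2\subseteq\baseline_1\cap\baseline_2=\emptyset$. Symmetrically $\sigma_2$ cannot enter $\Reg(b_1)$. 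Hence $b_1$ and $b_2$ must intersect in two points $Z_1,Z_2$, and the intersection is a lens bounded by sub-arcs of $b_1,b_2$ from $Z_1$ to $Z_2$, with $P$ strictly inside.

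The second step would exploit the arc $a$. Since the endpoints of $a$ lie on $\baseline$, which is disjoint from $\baseline_1$ and $\baseline_2$, and since $P$ lies in the open lens, $a$ must enter and exit the lens, crossing $b_1\cup b_2$ at the boundary. Because $\Disk(a)$ and $\Disk(b_i)$ are unit disks, each pair of circles meets in at most two points, giving at most four intersections $a\cap(b_1\cup b_2)$ in total. I would then perform a short case analysis on the cyclic order of these intersections, together with the orders of $Z_1,Z_2$ on $b_1$ and $b_2$, to find a pair of arcs whose configuration forces some unit circle to contain another in a way inconsistent with $b_1$ and $b_2$ both being uncovered (i.e.\ exposed) arcs with endpoints on disjoint baselines. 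The cleanest contradiction I expect is that the configuration forces a baseline segment of one substructure to be ``trapped'' inside the other substructure's region, contradicting that $\baseline_1,\baseline_2$ are disjoint and separated by an uncoverable gap on $\partial\calH$.

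The main obstacle will be step two, the combinatorial case analysis: ruling out every cyclic arrangement of the at most four intersection points of $a$ with $b_1\cup b_2$ and the two intersection points $Z_1,Z_2$. I suspect the cleanest route is not pure case analysis but rather a parity or Jordan-curve style argument, tracing the closed curve formed by the outer boundary of $\Reg(b_1)\cup\Reg(b_2)$ and noting that $a$, which must cross from one side of $\baseline$ to the other, cannot simultaneously respect the constraint that $b_1,b_2$ have endpoints on two disjoint baselines separated by a segment that lies outside $\Reg(b_1)\cup\Reg(b_2)$. Once the topological obstruction is pinned down, the rest is routine.
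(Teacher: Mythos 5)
Your proof has a genuine gap, and I suspect the gap is not just a matter of effort: the approach itself is missing the quantitative ingredient that makes the lemma true. You set up a purely topological argument (disjoint baselines, a lens $\Reg(b_1)\cap\Reg(b_2)$, a Jordan-curve/parity obstruction) and then explicitly defer the crux (``I would then perform a short case analysis\ldots,'' ``The cleanest contradiction I expect is\ldots''). That deferred step is precisely where the proof must succeed or fail, and you have not shown it succeeds. Worse, I do not believe it \emph{can} succeed as a purely topological/combinatorial argument: the statement relies crucially on the fact that each small square has side length $\mu=O(\epsilon)$, which is a metric hypothesis. Without that, one can arrange three arcs from three pairwise-disjoint baselines to meet near a common point; nothing in Jordan-curve topology alone forbids it.

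The paper's proof goes through a quantitative geometric step (its Lemma~\ref{pi/4}): if two disks $\Disk_u,\Disk_l$ sit in distinct small squares and the central angle of their intersection lens exceeds roughly $2\sqrt{2\mu}$, then the corresponding square gadgets $\gadget(\ssquare_u)$ and $\gadget(\ssquare_l)$ must already overlap, which forces the two arcs into the \emph{same} substructure. From this, the paper argues: since $a$ and $b_1$ come from different substructures, $\angle A_1 D B_1 < 2\sqrt{2\mu}$; similarly $\angle A_2 D B_2 < 2\sqrt{2\mu}$; hence $\dcenter_1$ and $\dcenter_2$ subtend a tiny angle at $\dcenter$, so $|\dcenter_1\dcenter_2|<1$ and $\Disk_1,\Disk_2$ overlap with a central angle $>\pi/3\gg 2\sqrt{2\mu}$, and since the common point $P$ is uncovered, $b_1$ and $b_2$ land in the same substructure---contradiction. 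Your lens picture is a reasonable starting intuition, but to close the argument you would need to import precisely this kind of angular bound; as written, the proposal has a hole where the contradiction should be, and no topological argument alone will fill it. One additional caveat for your step 1: an uncovered arc $b_2$ may re-enter $\Reg(\calH)$ between its endpoints (the paper notes this explicitly), so the claim that $\sigma_1\subset\partial\calH$ cannot cross $b_2$'s boundary contribution needs more care than the one-line justification you gave.
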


Intuitively, if the two disks corresponding to $b_1$ and $b_2$
cover the same point, they should be close enough such that
their corresponding square gadgets overlap (which implies $b_1$ and $b_2$ share the same baseline).
First we prove that the minimum distance between any two disks in two different
substructure should not be too small, i.e., their overlapping region cannot be too large.
The proof can be found in Appendix~\ref{apd:GOC}.

\eat{

\begin{lemma}
  \label{pi/4}
  Consider two disks $\Disk_u$ and $\Disk_l$ in squares $\ssquare_u, \ssquare_l$
respectively. Suppose $\Disk_u \notin \gadget_u , \Disk_l \notin \gadget_l$ and $\Disk_u$ and
$\Disk_l$ intersect at points $A$ and $B$. Suppose the side length of square is $\mu$. If
$\angle A\dcenter_uB$ (or $\angle A\dcenter_lB$) is more than $2\sqrt{2\mu}$, $\gadget_u$ and
$\gadget_l$ should overlap. Moreover, if intersection $A$ (resp. $B$) is in $\Uncover$, the two
arcs which intersect at point A (resp. $B$) are in the same substructure.
\end{lemma}

\begin{proof}
 Suppose $\Disk_u$ and $\Disk_l$ intersect at point $A$ and $B$. $\Disk_s \in \gadget_u$, $\Disk_s' \in
 \gadget_l$. Thus, $\Disk_u, \Disk_s$ locate in the same square $\ssquare_u$ and $\Disk_l, \Disk_s'$
 locate in the same square. The line $\dcenter_u\dcenter_l$  intersect disk $\Disk_s$  at point $E$ and
 intersect disk $\Disk_s'$ at point $F$. $|\dcenter_uE|$ and $|\dcenter_lF|$ are no less
than $1-\mu$ according to the triangle inequality. Suppose $\angle A\dcenter_uB$ is
$\theta$. Then the length
$|\dcenter_u\dcenter_l|$ equals $2 \cos(\frac{\theta}{2})$. Since $\theta > 2\sqrt{2\mu}$, we have
$|\dcenter_uE| + |\dcenter_lF| > |\dcenter_u\dcenter_l|$, i.e. $\Disk_s$ and $\Disk_s'$
overlap. Figure~\ref{fig: pi/4} illustrates the situation. If $A$ (or $B$) is in $\Uncover$, it
is obvious that the two arcs which intersect at $A$ (or $B$) can cover the same point on the baseline (i.e., the
intersection point of $\Disk_s$ and $\Disk_s'$) which means they are in the same substructure.
\end{proof}

\begin{proofoflm}{\ref{oneByone}}
  We prove by contradiction.  Suppose the arcs $a, b_1, b_2$ are parts of $\Disk, \Disk_1, \Disk_2$
respectively. Consider a point $P$ on $\Disk$ that can be covered by all of $a, b_1, b_2$.  $\Disk$ and
$\Disk_1$ intersect at $A_1$ and $B_1$, meanwhile $\Disk$ and $\Disk_2$ intersect at $A_2$ and
$B_2$. Figure~\ref{fig:onebyone} illustrates the situations. Since $\Disk$ and $\Disk_1$ belong to
different substructures, we know  $\angle A_1DB_1$ is less than $2\sqrt{2\mu}$ based on
Lemma~\ref{pi/4}. Similarly, $\angle A_2DB_2$ is less than $2\sqrt{2\mu}$. It is easy to see that
$\angle D_1DD_2$ is less than $2\sqrt{2\mu}$. Because $|DD_1|$ and $|DD_2|$ are more
than $1$. When $\angle D_1DD_2$ is less than $\frac{\pi}{3}$, i.e. $\mu <
\frac{\pi^2}{72}$ (Since $\mu = \epsilon$, it is certainly true.), $D_1D_2$ is less than $1$. Thus,
the overlapping region of $\Disk_1$and $\Disk_2$ is too large. Moreover, $P$ locates in
$\Uncover$. It contradicts the fact that $\Disk_1$ and $\Disk_2$ belong to two different
substructures according to Lemma~\ref{pi/4}.
\end{proofoflm}

\begin{figure}[h]
  \centering
  \includegraphics[width=0.3\textwidth]{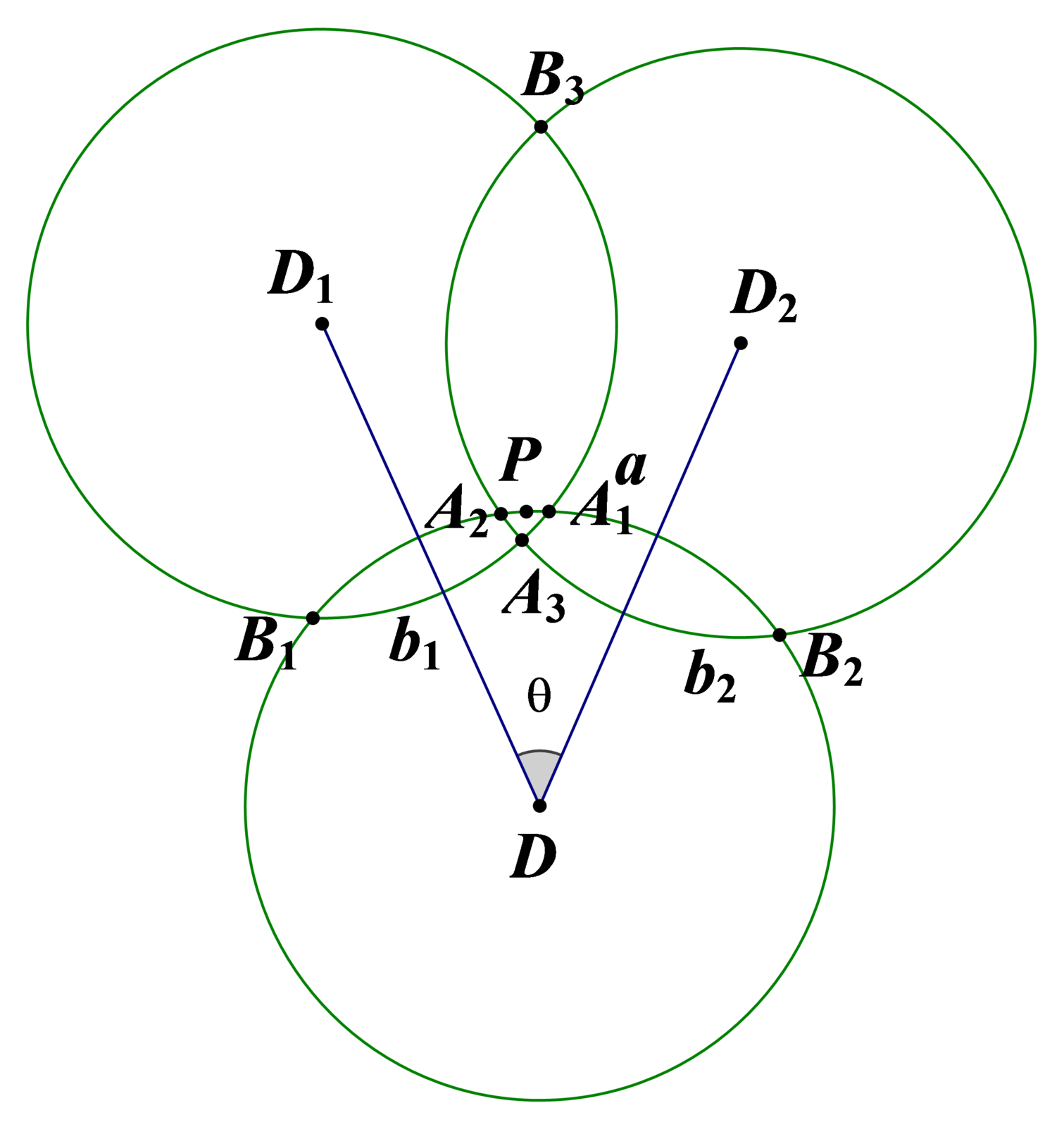}
  \caption{Consider a point $P$ on $\Disk$. $a, b_1, b_2$ which belong to disks $\Disk, \Disk_1,
\Disk_2$ respectively, can cover the point $P$. $\Disk$ and $\Disk_1$ intersect at $A_1$ and $B_1$,
meanwhile $\Disk$ and $\Disk_2$ intersect at $A_2$ and $B_2$. Then, $\Disk_1$ and $\Disk_2$ should
belong to the same substructure.}
  \label{fig:onebyone}
\end{figure}

} % end eat

In fact, essentially the same proof can be used to
prove that
any two different substructures cannot both intersect with subarc
whose central angle is $O(\epsilon)$, as in the following corollary.
\begin{corollary}
  \label{arcByarc}
   Suppose $\substructure(\baseline,\calA)$ is a substructure without any self-intersection.
   Consider an arc  $a \in \calA$ and two arcs $b_1, b_2 \notin \calA$.
   Suppose $a'$ is a subarc of $a$ whose central
   angle is $O(\epsilon)$.
   If both $b_1, b_2$ cover some part of $a'$, $b_1, b_2$
  should belong to the same substructure.
\end{corollary}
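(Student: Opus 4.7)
The plan is to adapt the proof of Lemma~\ref{oneByone} by observing that its argument only uses the fact that the two ``competing'' arcs meet $\partial \Disk$ within a small neighborhood. In Lemma~\ref{oneByone} this neighborhood collapses to a single shared point; here the same role is played by the subarc $a'$, which is still geometrically tiny since its central angle is $O(\epsilon)$.

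Concretely, I would suppose for contradiction that $b_1$ and $b_2$ belong to two substructures different from each other (and both different from $\substructure$). Let $\Disk,\Disk_1,\Disk_2$ be the underlying disks of $a,b_1,b_2$. Since $\Disk$ and $\Disk_i$ (for $i=1,2$) belong to different substructures, Lemma~\ref{pi/4} applies to each pair and tells us that the chord $\partial \Disk \cap \partial \Disk_i$ subtends an angle at $\dcenter$ of size at most $2\sqrt{2\mu}=O(\sqrt{\epsilon})$; equivalently, the portion of $\partial \Disk$ that lies inside $\Disk_i$ is an arc of central angle $O(\sqrt{\epsilon})$. Because $\Disk_i$ must contain some point of $a'$, this $O(\sqrt{\epsilon})$ arc has to intersect $a'$, and together with $|\angle(a')| = O(\epsilon)$ this forces both short arcs to lie inside a common arc of $\partial \Disk$ of central angle $O(\sqrt{\epsilon})$ around $a'$.

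Next I would use this to bound the angular position of the two centers from $\dcenter$. Each $\dcenter_i$ lies essentially on the outward radial direction from $\dcenter$ through the midpoint of the short arc cut off on $\partial \Disk$ by $\Disk_i$, so $\angle \dcenter_1 \dcenter \dcenter_2 = O(\sqrt{\epsilon})$; combined with the fact that $|\dcenter \dcenter_1|,|\dcenter \dcenter_2|\le 2$ (the disks overlap $\Disk$) and are bounded away from $0$ in the same way as in Lemma~\ref{oneByone}, a law-of-cosines computation gives $|\dcenter_1 \dcenter_2| \ll 1$ for $\epsilon$ small. Thus $\Disk_1$ and $\Disk_2$ overlap so heavily that their square gadgets must overlap, and then the ``moreover'' clause of Lemma~\ref{pi/4} (applied to the intersection point shared with $a'$, which lies in $\Uncover$) forces $b_1$ and $b_2$ into the same substructure, a contradiction.

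The main obstacle is bookkeeping rather than new geometric insight: one has to verify that an arc of central angle $O(\epsilon)$ really is small enough so that all the $O(\sqrt{\epsilon})$ slack from Lemma~\ref{pi/4} still leaves $\dcenter_1, \dcenter_2$ close enough for their gadgets to overlap for all sufficiently small $\epsilon$ (in particular once $\sqrt{\mu}<\pi/6$, matching the threshold used in the proof of Lemma~\ref{oneByone}). Once that threshold check is in place, the contradiction and hence the corollary follow exactly as in the single-point case.
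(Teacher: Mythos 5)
Your proposal is correct and matches the paper's intent exactly: the paper explicitly says ``essentially the same proof'' as Lemma~\ref{oneByone} applies and gives no separate argument, and your adaptation is precisely that argument, with the one genuine extra observation spelled out---that the $O(\epsilon)$ angular spread of the subarc $a'$ is absorbed into the $O(\sqrt{\mu})=O(\sqrt{\epsilon})$ slack already coming from Lemma~\ref{pi/4}, so $\angle\dcenter_1\dcenter\dcenter_2$ stays $O(\sqrt{\epsilon})$ and the same law-of-cosines contradiction goes through.
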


Combining with
Lemma~\ref{smallangle}, we can easily see the following lemma.

\begin{lemma}
  \label{NoAT}
  Any substructure which contains an active region cannot overlap with two or more different substructures.
\end{lemma}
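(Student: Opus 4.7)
I would argue by contradiction. Assume $\substructure(\baseline, \calA)$ contains an active region and overlaps with two distinct substructures $\substructure_1(\baseline_1,\calA_1)$ and $\substructure_2(\baseline_2,\calA_2)$. Unpacking Definition~\ref{def:04:02}, for $i\in\{1,2\}$ there exist arcs $a_i\in\calA$ and $b_i\in\calA_i$ together with a common covered point $X_i\in\calP\cap\Reg(a_i)\cap\Reg(b_i)$. My target is a contradiction with Corollary~\ref{arcByarc}, which forbids two arcs belonging to different substructures from jointly covering parts of a common small-angle subarc of any arc in $\calA$.

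The setup is primed for exactly this application of Corollary~\ref{arcByarc}. By Lemma~\ref{smallangle}, every arc in $\calA$ already has central angle $O(\epsilon)$ and is tilted by at most $O(\epsilon)$ from a common line $l$, so each such arc by itself qualifies as the ``small subarc $a'$'' required by the corollary. In parallel, Lemma~\ref{smallregion} confines $\Reg(\substructure)$ to a single dome $\Domain(\ssquare^{+})$ (or to the union of two adjacent dome regions, in the merged-active-region case), giving a concrete thin region in which the whole overlap interaction is forced to take place.

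The one nontrivial step --- and the place where I expect the real work --- is to promote the two overlap witnesses $(a_1,b_1)$ and $(a_2,b_2)$ into a single witness arc $a\in\calA$ such that both $\Disk(b_1)$ and $\Disk(b_2)$ cover some part of $a$. I plan to handle this geometrically: if $a_1=a_2$, take $a=a_1$ and we are done. Otherwise, I would use the near-parallel stacking of the arcs of $\calA$ inside the thin dome together with the fact that $\Disk(b_1)$ and $\Disk(b_2)$ are unit disks reaching into $\Reg(\substructure)$: such a unit disk must cross the envelope of $\substructure$ transversally, and because the envelope is composed of these short, highly parallel arcs of $\calA$ (central angle $O(\epsilon)$, orientations within $O(\epsilon)$ of $l$), there is necessarily one arc $a\in\calA$ whose interior is reached by both disks. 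The proof here mimics the distance argument used for Lemma~\ref{oneByone}/Corollary~\ref{arcByarc}: if $\Disk(b_1)$ and $\Disk(b_2)$ hit distinct arcs of $\calA$, the smallness of $\Reg(\substructure)$ forces the centers of $\Disk(b_1)$ and $\Disk(b_2)$ to be within a unit distance of a common point inside the dome, which in turn forces a common arc $a\in\calA$ to be intersected by both.

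Once such a common arc $a$ is in hand, Corollary~\ref{arcByarc} applies directly with $a'=a$ (already of central angle $O(\epsilon)$ by Lemma~\ref{smallangle}) and yields that $b_1$ and $b_2$ lie in the same substructure, contradicting $\substructure_1\neq\substructure_2$. The main obstacle is thus the planar-geometry step in the previous paragraph: translating the angular and diametric smallness guaranteed by Lemmas~\ref{smallangle} and~\ref{smallregion} into the combinatorial statement that any two external unit disks overlapping $\substructure$ must do so through one common arc of $\calA$.
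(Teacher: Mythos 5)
Your proposal takes the same route the paper implicitly intends: the paper devotes only one sentence to this lemma ("Combining with Lemma~\ref{smallangle}, we can easily see the following lemma"), relying on Corollary~\ref{arcByarc} plus the small-central-angle / thin-dome confinement furnished by Lemmas~\ref{smallangle} and~\ref{smallregion}, which is exactly your chain of reductions. You are right, however, to flag as the genuine content the step of promoting the two overlap witnesses $(a_1,b_1)$, $(a_2,b_2)$ with possibly distinct $a_1\neq a_2$ to a single arc $a\in\calA$ crossed by both $\Disk(b_1)$ and $\Disk(b_2)$ — the paper's "easily see" glosses over precisely this — and the argument you sketch (both centers $\dcenter(b_1),\dcenter(b_2)$ are forced by the near-tight triangle inequality $|\dcenter(a_i)\dcenter(b_i)|\approx 2$, $|\dcenter(a_i)X_i|\leq 1$, $|X_i\dcenter(b_i)|\leq 1$ to lie within $O(\epsilon)$ of each other, so that one can rerun the Lemma~\ref{oneByone}/Lemma~\ref{pi/4} distance calculation) is the right way to close it.
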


Then we show how to cut the substructure $\substructure(\baseline, \calA)$ which overlaps with more than
one other substructures.
Note that such substructure does not contain an active region based on Lemma~\ref{NoAT}.
%, our cut operation do not have chain reaction.
Suppose the envelope of $\substructure$ is
$\{a_1[Q_s, Q_1], \ldots, a_k[Q_{k-1},Q_t] \}$.
$\substructure$ overlaps with $k$
substructures $\substructure_i(\baseline_i,\calA_i), i = 1,2, \ldots, k$.

If $\substructure_i$ overlaps with $\substructure$,
there exists an arc $a \in \substructure_i$ intersecting
some envelope arc of $\substructure$.
Thus, the envelope $\substructure$ is subdivided into several segments by
those intersection points. We can label those segments as follows:
\begin{itemize}
\item If the segment is covered by some arc in $\substructure_i$, we label it as `$i$'.
\item If there is no arc in any $\substructure_i$ covering the segment, we label it as  `$0$'.
\end{itemize}
See Figure~\ref{fig:arclabel} for an example.
According to the Lemma~\ref{oneByone}, we
know there is no point on the envelope covered by two substructures.
Thus the above labeling scheme is well defined.

\begin{figure}[t]
  \centering
  \includegraphics[width=0.45\textwidth]{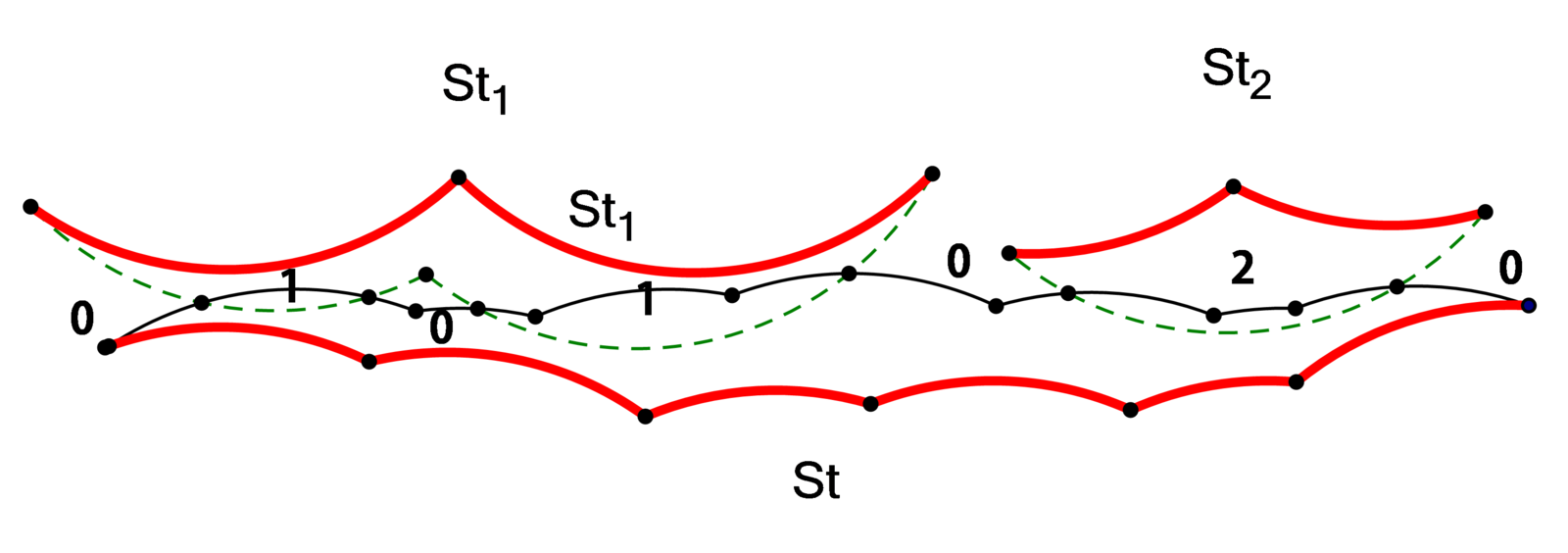}
  \caption{The arcs in substructures $\substructure_1$ and $\substructure_2$ cut  the envelope of
  $\substructure$ into 7 segments. The sequence of the labels for those segments is $0101020$.
  %After ignoring all ``0''s, we are left with a ``$1$''-subsequence and a ``$2$''-subsequence.
  The compressed label sequence is $01020$. So we have $5$ l-segments.}
  \label{fig:arclabel}
\end{figure}

Traversing those segments along the envelope, we obtain a label sequence.
First, for each label $i$,
we identify those maximal consecutive subsequence, which consists of
only letter $0$ and $i$, and starts with and end with $i$,
and replace the subsequence by a single letter $i$.
We obtain a compressed sequence.
In fact, each letter, say $i$ ($i\ne 0$), in the compressed sequence corresponds
to one or more segments labeled with either $i$ or $0$, and
the first and last one must be labeled with $i$. 
We call the concatenation of those segments an l(ong)-segment.
Of course, a letter 0 in the compressed sequence corresponding to a segment with label 0.
%We prove the length of the compressed sequence is a constant.
Actually, the sequence is Davenport-Schinzel
sequence of order 2 \cite{Davenport},
because two different labels cannot intersect (because the
baselines of two substructures cannot intersect).
For example, the pattern ``$1212$'' should never appear.
Thus, the length of the compressed sequence, i.e.,
the number of l-segments, is at most $O(k)$.

Now, we discuss how to cut $\substructure$
into several new ones based on l-segments.
Keep in mind that our goal is to make sure
each new substructure only
overlap with one substructure of $\{\substructure_i\}_{i \in [k]}$.
The cut operation is again a simple greedy procedure.
Consider two consecutive l-segments.
Suppose they are $\{a_i[Q_{i-1}, Q_i]$, $a_{i+1}[Q_{i}, Q_{i+1}]$, $\ldots, a_j[Q_{j-1}, Q_j]\}$
and $\{a_{j +1}[Q_j, Q_{j+1}]$, $\ldots$, $a_k[Q_{k-1}, Q_k]\}$.
We add into $\calH$
the last arc $a_j$ of former
l-segment and the first arc $a_{j+1}$ of the later l-segment.
$\substructure$ is thus cut into two new ones.
Repeat the above step for all two consecutive l-segments in order.

We still need to show that after the cut, every new substructure
overlap at most one of $\{\substructure_i\}_{i\in [k]}$.
Consider one new substructure.
Notice such an original arc in $\calA$ can only belong to one new substructure.
By our cut operation, all envelope arcs
of the new substructure can intersect at most one of $\{ \substructure_i \}_{i \in [k]}$.
Hence, the new substructure can overlap at most one $\substructure_i$.

%Any non-envelope arc can be totally covered by two consecutive
%envelope arcs.
%Thus, if an arc intersects $\substructure_i$, at least one of the two
%consecutive envelope arcs intersects $\substructure_i$.
%Because we have already
%added all consecutive arcs on the envelope which may contain different labels ``$i$'' in
%$\calH$. So, any arc inside the $\substructure$ only can overlap with one $\substructure_i$.

\eat{
\begin{lemma}
  If we ignore all ``$0$'' arcs, the minimum number of consecutive segments, in each of which the arcs contain
same label, is constant.
\end{lemma}
\begin{proof}
If we ignore the ``$0$'', we get a label
sequence such as ``$iiijjjiiikkkii\ldots$''. When the number of segments is minimum, the consecutive arcs
with same label must belong to the same segment. Note that there may exist two separate segments with
the same label ``$i$''. But two different labels cannot intersect because the baselines of two
substructures cannot intersect. For example, the sequence such as ``$ijjiijj$'' cannot
happen. So, in the sequence, there exists at least one label ``$i$'' which is consecutive. They
belong to the same segment. If we delete them, there must exist another label ``$i'$'' which is
consecutive. The delete operation is able to merge at most two separable segments. So do the
operation in turn, we can prove there are at most $2k-1$ segments.
\end{proof}
}

\topic{Blue edges and red edges}
After the above operations,
the set of all blue edges forms a matching,
while the set of all red edges
also forms a matching.
To show $\auxgraph$ is an acyclic 2-matching,
it suffices to prove $\auxgraph$ contains no cycle.
So, the rest of the section is devoted to prove the following lemma.

\begin{lemma}
  \label{lm:Nocycle}
  Suppose the side length of square is $\mu$, where $\mu = O(\epsilon)$ and
  the block contains $K \times K$ small squares, where $K = \frac{C_0}{\epsilon^2}$
  and $C_0$ is an appropriate constant.
  Then, after all operations stated in this section,
  there is no cycle in $\auxgraph$.
\end{lemma}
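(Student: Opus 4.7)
The plan is to argue by contradiction: assume $\auxgraph$ contains a cycle $C = S_1 {-} S_2 {-} \cdots {-} S_{2\ell} {-} S_1$. Since the set of blue edges and the set of red edges are each matchings, the edges of $C$ must alternate red and blue with $\ell$ of each. In particular, every $S_i$ is incident to a blue edge, so $S_i$ contains an active region (blue edges only connect substructures with R-correlated active regions from the same gadget). Hence by Section~\ref{subsec:merge}, each $S_i$ is confined to a spindle-shaped neighborhood $\Domain(\ssquare_i^{\pm})$ of some small square $\ssquare_i$, and by Lemma~\ref{smallangle} every arc of $S_i$ is nearly parallel to the axis $\ell_i := \dcenter_{s_i}\dcenter_{t_i}$ of the gadget of $\ssquare_i$, with central angle $O(\epsilon)$.

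First I would analyze what each type of edge of $C$ does geometrically. A blue edge $(S_i, S_{i+1})$ lies inside the same gadget, so $\ssquare_i = \ssquare_{i+1}$ and $\ell_i = \ell_{i+1}$; the active regions simply flip between the upper and lower halfplane of that gadget. A red edge $(S_i,S_{i+1})$ means the two small spindle domains $\Domain(\ssquare_i^{\pm})$ and $\Domain(\ssquare_{i+1}^{\pm})$ share a covered point. Using the bound on the size of these domains together with Corollary~\ref{arcByarc} applied to the two $O(\epsilon)$-angle arcs that cover the common point, I would show that (i) $\ssquare_i$ and $\ssquare_{i+1}$ must be neighbors (or coincide) on the small-square grid, and (ii) the two axes satisfy $\angle(\ell_i,\ell_{i+1})=O(\epsilon)$; otherwise two non-adjacent or angularly misaligned gadgets would force either a large distance between disk centers or a second arc intersection inside the substructure, contradicting Lemma~\ref{pi/4}/Lemma~\ref{oneByone} or the non-self-intersection established in Section~\ref{subsec:cutoff}.

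Next I would project $C$ onto the plane by mapping each $S_i$ to (say) the centroid of $\Domain(\ssquare_i^{\pm})$. Blue edges contribute a short vertical-type segment inside a single gadget, while red edges contribute a short segment between centroids of neighboring squares, and by the previous step all axes $\ell_i$ agree up to $O(\ell\epsilon)$. Because $\ell$ is bounded by the total number of substructures, which is at most $O(K^2) = O(1/\epsilon^4)$, we can choose $\mu=O(\epsilon)$ and $K=C_0/\epsilon^2$ so that the cumulative angular drift is still $o(1)$. Therefore all gadgets in $C$ point in essentially the same direction, and the cycle projects to a short closed walk on a thin band of adjacent squares. Tracking for each $S_i$ on which side of $\ell_i$ its active region lies (flipping across every blue edge, determined by the overlap geometry at every red edge), I would verify by local case analysis on a shortest such cycle (length~$4$) that the halfplane labels ($H^+$ vs.\ $H^-$) cannot close up consistently: traversing one blue–red–blue–red cycle necessarily changes the halfplane parity, contradicting $S_1 = S_{2\ell+1}$.

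The main obstacle is Step~(ii) above, i.e.\ proving the quantitative angular rigidity $\angle(\ell_i,\ell_{i+1})=O(\epsilon)$ for a red edge. This is where the particular spindle shape of $\Domain(\ssquare^\pm)$ and the highly parallel arcs guaranteed by Lemma~\ref{smallangle} must be combined: two nearly parallel short arcs whose disks are within unit distance and whose small ambient domains intersect force both gadget axes into a narrow angular cone. Once this rigidity is in hand, the remaining argument is a bounded-size geometric case check that does not require new ideas.
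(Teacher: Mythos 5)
The early part of your argument is aligned with the paper: you correctly observe that a cycle $\cycle$ must alternate blue and red edges, that every node on $\cycle$ is therefore incident to a blue edge and hence contains an active region, that each such substructure is confined to a spindle-shaped domain $\Domain(\ssquare^\pm)$, and that the arcs are highly parallel with central angle $O(\epsilon)$ (Lemma~\ref{smallangle}). You also sense the right geometric tool (Lemma~\ref{interangle}/Corollary~\ref{arcByarc}) for controlling the angle between gadget axes across a red edge. But the final step is where your proof diverges from the paper and, as proposed, it does not close.

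There are two concrete problems. First, the claim that a blue edge projects to a \emph{short} segment is false: a blue edge connects the $H^+$ and the $H^-$ active region of the \emph{same} gadget, and the centroids of $\Domain(\ssquare^+)$ and $\Domain(\ssquare^-)$ are separated by a distance close to $2$ (roughly the diameter of the underlying disks). So your projected cycle is not ``a short closed walk on a thin band''; it alternates long ($\approx 1$ or $\approx 2$) and short ($O(\epsilon)$) edges. Second, the halfplane-parity argument is not sound. A blue edge always flips the $H^+/H^-$ label, but a red edge can either preserve or flip it: if the two gadgets' axes are vertically offset by $\approx 2$, the $H^+$ domain of one can overlap the $H^-$ domain of the other. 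And even if every red edge preserved the label, a cycle of length $2\ell$ contains exactly $\ell$ blue edges, and there is no reason $\ell$ should be odd; in particular on the shortest possible cycle ($\ell = 2$) you get two flips and the labels close up consistently, so there is no parity obstruction. This means your reduction to a ``bounded-size case check on a length-4 cycle'' cannot work either.

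The paper's actual argument replaces parity with curvature. It builds a polygon whose vertices alternate between (i) a point in each small square and (ii) a point in each active region (two overlapping active regions are identified to a single vertex, since their domains are $O(\epsilon)$ apart). Every edge of this polygon has length about $1-O(\epsilon)$, and the key lemma (Lemma~\ref{interangle} combined with Lemma~\ref{smallangle}) shows that every interior angle is at least $\pi - O(\epsilon)$, both at the ``square'' vertices (because an active region is confined to a small region roughly perpendicular to the gadget axis) and at the ``merged overlap'' vertices (because the arcs from the two substructures are nearly tangent). Since the total turning around a closed polygon is at least $2\pi$, such a polygon must have $\Omega(1/\epsilon)$ vertices and therefore diameter $\Omega(1/\epsilon)$, which (with the constant $C_0$ chosen appropriately) exceeds the block diameter, giving the contradiction. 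If you want to repair your argument, you should replace the parity bookkeeping with this quantitative exterior-angle bound; the angular rigidity you identified as the ``main obstacle'' is indeed exactly what is needed, but it feeds into a curvature/size bound, not a discrete parity invariant.
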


If there is a cycle $\cycle$ in $\auxgraph$,
the red edges and blue edges alternate in $\cycle$, which
correspond to a sequence of substructures, each
containing an active region (since it is R-correlated with another active region).
Now, we provide a high level explanation why
Lemma~\ref{lm:Nocycle} should hold.
Each active region is associated with a small square.
A small square is very small (comparing to a unit disk or the whole block),
and an active region is also very small.
We pick a point in each small square
and each substructure.
If two substructures in $\cycle$ overlap,  the two points in them are also
very close (i.e., $O(\epsilon)$). So we can pretend the two points as one point
(or we just pick a point in their overlapping region).
For each active region $\activeR$, we connect the point in
$\activeR$ and the point in the small square
associated with $\activeR$.
Since Both the small square and the active region are very small,
the distance between two point is about $1 - O(\epsilon)$.
Thus, a cycle $\cycle$ would present itself geometrically as a polygon.
We can show the angle between two adjacent
edges of the polygon is close to $\pi$.
So the size of the polygon cannot be not small (it takes
a lot of edges to wrap a loop).
However, the polygon cannot be much larger than the block.
By contradiction, we prove that there is no cycle in $\auxgraph$.

Now, we formally prove Lemma~\ref{lm:Nocycle}.
We first prove a geometric lemma which will be useful for bounding the angle
between the two adjacent edges of the aforementioned polygon.

\begin{lemma}
  \label{interangle}
  Consider two substructures $\substructure_1(\baseline_1, \calA_1)$ and
  $\substructure_2(\baseline_1, \calA_2)$ in $\cycle$.
  Suppose $\substructure_1$ and $\substructure_2$
  overlap (there is a red edge between them).
  For any two arcs $a \in \calA_1$ and $b \in
  \calA_2$,
  suppose that $\Disk(a)$ and $\Disk(b)$ overlap and
  their intersection points are $A$ and $B$.
  The central angles of $a, b$ are $\theta_a, \theta_b$ respectively.
  The centers of $\Disk(a), \Disk(b)$ are $\dcenter_a, \dcenter_b$.
  Then $\angle A\dcenter_a B$ (or $\angle
  A \dcenter_bB$) is  at most   $(\theta_a + \theta_b)$.
\end{lemma}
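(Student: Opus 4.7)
The plan is to reduce to a single angular inequality. Since $|\dcenter_a A|=|\dcenter_a B|=|\dcenter_b A|=|\dcenter_b B|=1$, the triangles $A\dcenter_a B$ and $A\dcenter_b B$ are congruent isoceles triangles sharing the base $AB$, so $\angle A\dcenter_a B = \angle A\dcenter_b B$. Let $\phi$ denote this common value; it suffices to prove $\phi \leq \theta_a + \theta_b$.

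Next I would use the hypothesis that $\substructure_1$ and $\substructure_2$ overlap to force the arcs $a$ and $b$ themselves to intersect geometrically. The caps $\Reg(a)$ and $\Reg(b)$ share a point of $\calP$, and because $\substructure_1 \neq \substructure_2$ their baselines are disjoint segments of $\partial\calH$. A short planar-topology argument rules out one cap being contained in the other: the baseline of the inner cap would have to sit inside the outer cap, yet it is a piece of $\partial\calH$ disjoint from the outer cap's baseline. Hence the boundaries of the two caps must cross, and since neither arc can cross the other's baseline (both baselines lie on $\partial\calH$), the crossing occurs between the arcs themselves. As $\partial\Disk_a \cap \partial\Disk_b = \{A,B\}$, this crossing is at $A$ or $B$; WLOG $A \in a \cap b$. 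If $B$ also lies on arc $a$ or arc $b$, then that arc contains both $A$ and $B$, so its angular span is at least $\phi$, giving $\phi \leq \max(\theta_a,\theta_b) \leq \theta_a + \theta_b$.

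The main obstacle is the remaining case $B \notin a \cup b$. I would parametrize $\partial\Disk_a$ by the central angle at $\dcenter_a$ and let $\alpha^{+}$ denote the angular extent of arc $a$ on the $B$-facing side of $A$ along the short lens-boundary arc $\ell_a$ of $L := \Disk_a \cap \Disk_b$; define $\beta^{+}$ symmetrically on $\partial\Disk_b$. By assumption $\alpha^{+} < \phi$ and $\beta^{+} < \phi$. The key claim is $\alpha^{+} + \beta^{+} \geq \phi$. The intuition is that the extensions of arcs $a$ and $b$ from $A$ toward $B$ inside $L$ must together span the angular gap $\phi$ from $A$ to $B$; if not, just past their $B$-facing endpoints the portion of $L$ closer to $B$ is bounded only by $\partial\calH$ and lies outside both caps, so no point in that portion could sit in $\Reg(a) \cap \Reg(b)$. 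I would make this precise by tracing $\partial\Reg(a)$ and $\partial\Reg(b)$ inside $L$ and noting that each exits $L$ exactly once, on $\ell_a$ and on $\ell_b$ respectively, at the $B$-facing endpoints of $a$ and $b$; if both exits occur strictly before $B$, the overlap of the caps cannot extend past them to reach the common $\calP$-point. This yields $\alpha^{+} + \beta^{+} \geq \phi$, and combining with $\alpha^{+} \leq \theta_a$ and $\beta^{+} \leq \theta_b$ gives $\phi \leq \theta_a + \theta_b$ as required.
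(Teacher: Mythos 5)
Your opening reduction is fine: since both disks are unit disks, the two isoceles triangles $A\dcenter_aB$ and $A\dcenter_bB$ are congruent, so $\angle A\dcenter_aB=\angle A\dcenter_bB=:\phi$, and the case where $A$ and $B$ both lie on a single arc immediately gives $\phi\leq\max(\theta_a,\theta_b)$ because $\theta_a,\theta_b<\pi$ forces that arc to pass through the lens side. This part essentially matches the paper's cases (a)--(b).

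The gap is in the step where you assert that arcs $a$ and $b$ must themselves intersect. Your topological argument says the two cap boundaries must cross and that the only available crossing type is arc--arc, ``since neither arc can cross the other's baseline.'' That last claim is false. An uncovered arc runs from the \emph{first} to the \emph{last} intersection of $\partial\Disk$ with $\partial\calH$, and between those endpoints it may pass through $\partial\calH$ repeatedly; moreover, a point $P\in\partial\calH$ lying geometrically on arc $a$ is not thereby ``covered'' by $a$ (coverage is defined by lying between $a$'s endpoints \emph{along} $\partial\calH$, possibly on a different connected component). So arc $a$ can perfectly well cut through $\baseline_2$, and arc $b$ through $\baseline_1$; this is exactly the paper's case (d), where one intersection point lies only on $a$ and the other only on $b$, the arcs are disjoint, and yet the caps $\Reg(a)$ and $\Reg(b)$ share a point in the lens. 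Your WLOG ``$A\in a\cap b$'' therefore silently excludes a genuine configuration. Compounding this, the case you then expend effort on --- $A\in a\cap b$ with $B\notin a\cup b$ --- is precisely the paper's case (c), which the paper shows \emph{cannot occur} (both endpoints of $a$ would land in $\Disk(b)$, forcing $\baseline_1$ to meet $\baseline_2$). So your ``main obstacle'' case is vacuous, while the real additional case is missing.

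To repair this, drop the claim that the arcs meet and instead split on how $A,B$ sit on $a,b$ as the paper does: if both lie on a single arc, you are done as above; if one lies on neither, rule it out via the baseline-intersection contradiction; in the remaining case ($A$ on $a$ only, $B$ on $b$ only), the endpoint of $a$ facing $B$ lies inside the lens and, because the baselines are disjoint, inside $\Reg(b)$, and symmetrically for $b$; the sub-arc of $a$ from $A$ to that endpoint together with the angular span of $b$ then covers the lens-side arc from $A$ to $B$, giving $\phi\leq\theta_a+\theta_b$. Your $\alpha^{+}+\beta^{+}\geq\phi$ intuition is actually the right engine for that last case, but it must be set up with $\alpha^{+}$ measured from $A$ along $a$ and $\beta^{+}$ measured from $B$ along $b$ (not both from $A$), and it must not be predicated on the arcs intersecting.
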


\begin{proof}
  We distinguish a few cases depending on
  whether the intersection points $A$ and $B$ lie on
  $a$ or $b$ or none of them.
  All cases are depicted in Figure~\ref{fig:interangle}.
\begin{itemize}
\item Both $A$ and $B$ lie on one arc (see Figure (a)(b)).
  W.l.o.g., suppose they lie on arc $a$.
  Obviously, $\angle A\dcenter_aB$ is no more than $\theta_a$ (or $\theta_b$) .
\item If one intersection is on neither $a$ nor $b$ (see Figure (c)),
  we prove that the case
  cannot happen. Suppose $A$ is on neither of $a$ and $b$.
  The endpoint $A_1$ of $a$ is covered by
  disk $\Disk(b)$ and the endpoint $A_2$ of $a$ is covered by disk $\Disk(a)$.
  Thus the baseline $\baseline_1$
  must intersect with $\baseline_2$ which contradicts the fact that $\substructure_1$ and
  $\substructure_2$ are two different substructures.
\item If one intersection point is on $a$ but not on $b$ and
  the other intersection point is on $b$ but not on
  $a$ (see Figure (d)), it is easy to see that $B_1$ is covered by arc $b$,
  and $A_2$ is covered by $a$ since their baselines do not intersect.
  Thus, the length of arc $AB$ (w.r.t. $\Disk(a)$)
  is no more than the sum of lengths of $a$ and $b$.
  So $\angle A\dcenter_a B$ is at most $\theta_a + \theta_b$.
  \end{itemize}
The above cases are exhaustive, thus our proof is completed.
\end{proof}

\begin{figure}[t]
  \centering
  \includegraphics[width=0.5\textwidth]{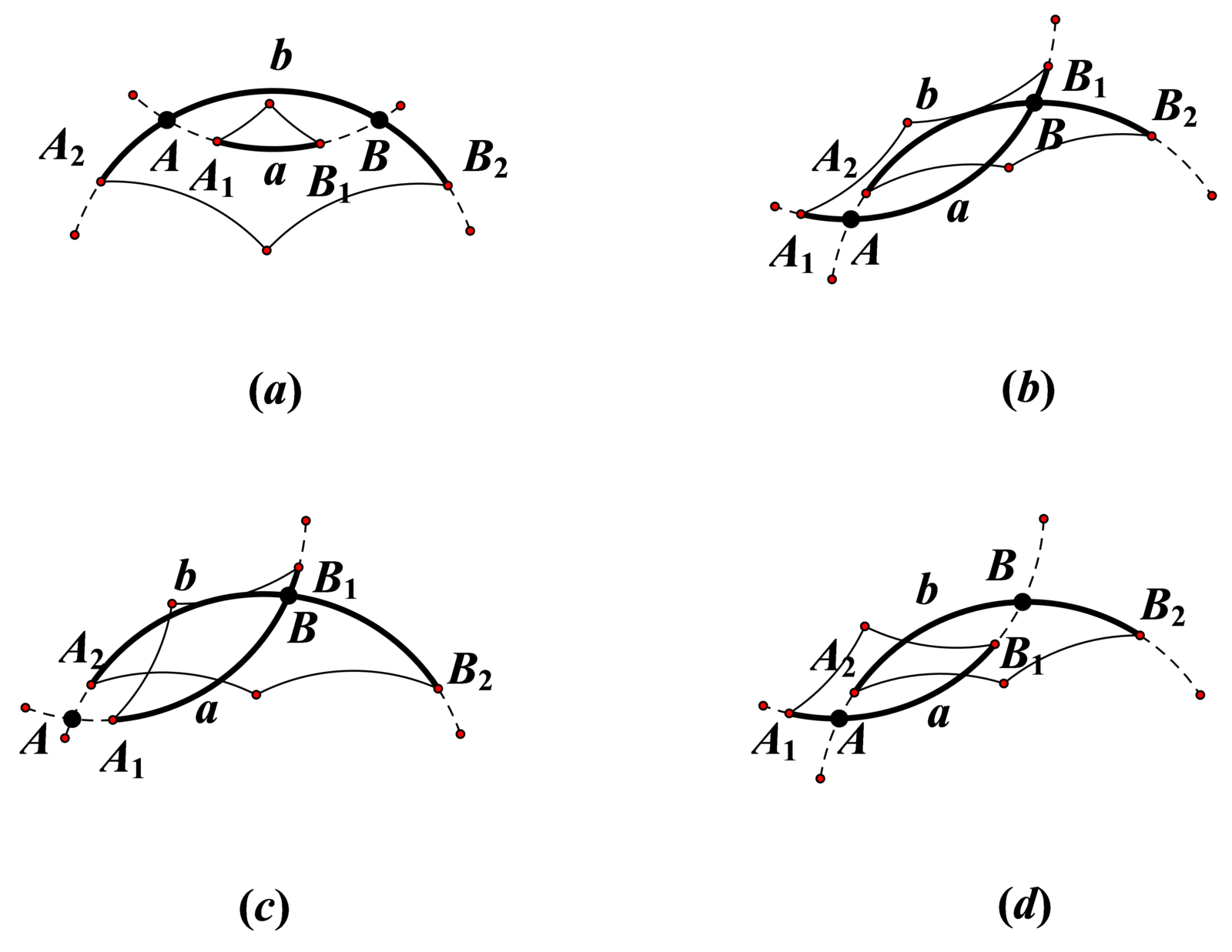}
  \caption{The four kinds of cases for two overlapping arcs.}
  \label{fig:interangle}
\end{figure}

Based on Lemma~\ref{smallangle},  Lemma~\ref{interangle}, we can prove Lemma~\ref{lm:Nocycle}
below:

\begin{proofoflm}{\ref{lm:Nocycle}}
Consider two substructures $\substructure_1$ and $\substructure_2$.
Suppose $\substructure_1$
contains the active region $\activeR_1$ ,
and $\substructure_2$ contains the active region
$\activeR_2$.
The centers of arcs of $\activeR_1$ and $\activeR_2$ locate in
small squares
$\ssquare_1$ and
$\ssquare_2$ respectively.
The square gadgets of the two
squares are $\gadget (\Disk_{s_1}, \Disk_{t_1})$ and $\gadget (\Disk_{s_2}, \Disk_{t_2})$ respectively.
If $\substructure_1$ and $\substructure_2$ overlap, their must exist an arc $a$ in
$\substructure_1$ and an arc $b$ in $\substructure_2$ such that $a$ and $b$ intersect.
Suppose $\Disk(a)$ and $\Disk(b)$ intersect at points $A$ and $B$.
The center of $\Disk(a)$ and $\Disk(b)$ are $\dcenter_a$
and $\dcenter_b$.

Based on Lemma~\ref{smallangle}, we know both central angle of $a$ and $b$ are no
more than $O(\epsilon)$.
According to Lemma~\ref{interangle},
the central angle of the arc $AB$ is at most $O(\epsilon)$.
It means angle between the tangent lines of $\Disk(a)$  and $\Disk(b)$ at point $A$
is no more than $O(\epsilon)$.
Thus, $\angle \dcenter_a A \dcenter_b$ is at least $\pi - O(\epsilon)$.

We know that all disks in the same active region are centered in one
small square or two adjacent small squares.
Moreover Lemma~\ref{smallangle} implies all disk centers should lies
in one or two squares.
Hence, the distance between (any point in) the square and (any point in) its active region
is at least $1-O(\epsilon)$.
Construct the aforementioned polygon.
Consider two adjacent edges $XY$ and $YZ$ in the polygon.
We consider two cases:
\begin{enumerate}
\item
$Y$ is in the intersection of
two substructures $\substructure_1$ and $\substructure_2$.
We can easily see that
(1) $|YA|=O(\epsilon)$;
(2) $|X\dcenter_a|=O(\epsilon)$;
(3) $|Z\dcenter_b|=O(\epsilon)$.
Hence, we can see $\angle XYZ$ is at least $\pi - O(\epsilon)$.
\item
$Y$ is in a small square $\ssquare$ and $X$ and $Z$ are in the the two substructures
associated with $\ssquare$. Since both substructures are bound in
an $O(\epsilon)$ size region (by Lemma~\ref{smallangle}), we can see that
$\angle XYZ$ is at least $\pi - O(\epsilon)$ as well.
\end{enumerate}
Hence, we can see the polygon
contains at least $\Omega(2\pi/\epsilon)$ nodes
and the diameter of the polygon is at least $\Omega(1/\epsilon)$.
But this cannot be larger than the diameter of a block, rendering a contradiction.

\hfill $\square$
%If we add a segment
%between centers of any two adjacent square in the graph, the area in the polygon is no less than
%$O(\frac{1}{(C'\epsilon)^2})$. We know that the area of the block is
%$\frac{C_0^2}{\epsilon^2}$. Thus, if $C_0^2 < \frac{1}{C'^2}$. There cannot exist any cycle in %which
%red and blue edges appear alternately.
\end{proofoflm}

To summarize,
we have ensured that $\auxgraph$ is an acyclic 2-matching ((P3) in Lemma~\ref{lm:calH}).

\subsection{Ensuring Point Order Consistence}
\label{subsec:POC}

We have ensured that the set of red edges is a matching in $\auxgraph$.
Hence, one substructure can overlap
with at most one other substructure.
Therefore, if we can guarantee that the points which are covered by any pair of overlapping
substructure are order consistent, then all points in $\calP$ are order consistence
(after all, point-order consistency is defined over a pair of substructures).

Consider two overlapping substructures
$\substructure_1(\baseline_1, \calA_1)$ and $\substructure_2(\baseline_2,\calA_2)$
and a set  $\calPco$ of points covered by $\calA_1 \cup \calA_2$.
Suppose $ P_1, P_2 \in \calPco$ and $a_1, a_2 \in \calA_1, b_1, b_2 \in \calA_2$.
Recall point-order consistency requires that the following conditions are met
\begin{itemize}
\item $P_1 \in \Region(a_1) \cap \Region(b_1) $ and $P_2 \in \Region(a_2) \cap \Region(b_2)$
\item  $P_1\notin \Region(a_2) \cup \Region(b_2)$ and $P_2 \notin \Region(a_1) \cup \Region(b_1)$.
\item  $(a_1 \prec a_2) \Leftrightarrow  (b_1 \prec b_2)$.
\end{itemize}

\begin{figure}[t]
	\centering
	\includegraphics[width=0.35\textwidth]{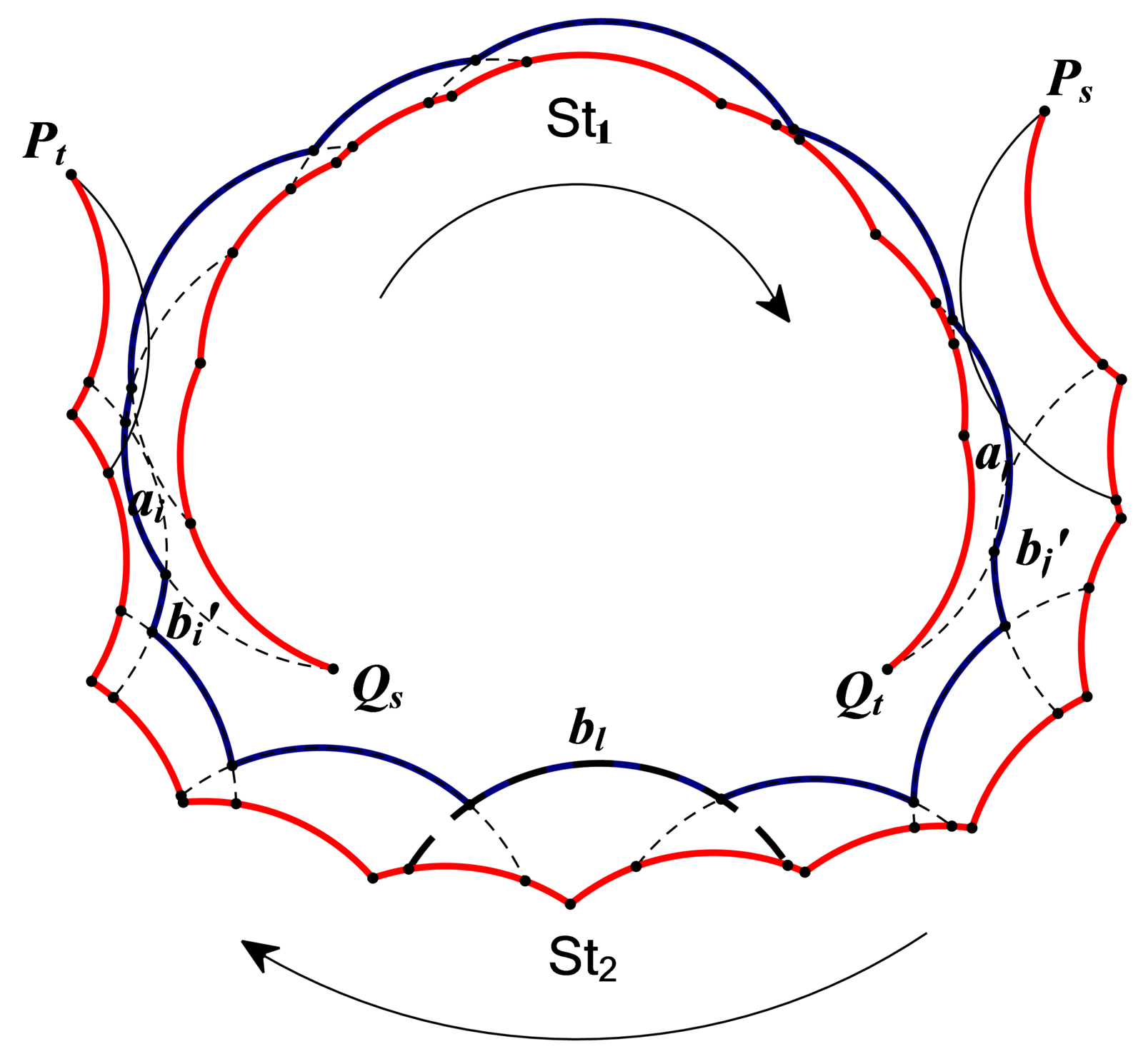}
	\caption{Substructure $\substructure_1(\baseline_1, \calA_1)$ and $\substructure_2(\baseline_2,
\calA_2)$ are overlapping.  $\baseline_1$ starts from $Q_s$ and ends up with
$Q_t$ and $\baseline_2$ starts from $P_s$ and ends up with $P_t$. There are two
paths forming a cycle.}
	\label{fig:pointorder}
\end{figure}

It is helpful to consider the following directed planar graph $\planargraph$
induced by all arcs in $\calA_1\cup\calA_2$ in the uncovered region $\Uncover$.
Regard each intersection point of arcs in $\calA_1 \cup \calA_2$ as a node.
Each subarc is an directed edge with a direction consistent with its baseline.
We use $A \to B$ to denote that there is a directed edge between nodes $A$ and $B$
in $\planargraph$.
If there is no directed cycle in $\planargraph$,
we can verify that all conditions listed above hold.
Indeed, suppose the condition is not satisfied,
which means $a_1 \prec a_2$, $b_1 \succ b_2$, $a_1, a_2 \in \calA_1$
and $b_1, b_2 \in \calA_2$. Suppose
$a_1, b_1$ intersect at $(A_1, B_1)$ and $A_1 \to B_1$, meanwhile $a_2, b_2$ intersect at $(A_2, B_2)$ and $A_2
\to B_2$.
Since $a_1 \prec a_2$,
there exists a path in $\substructure_1$ which goes from $B_1 $ to
$A_2$.
Similarly, there exists a path in $\substructure_2$ from $B_2$ to $A_1$. Thus, the
two paths and $a_1, a_2$ form a directed cycle.

So all we have to do is to break all cycles in $\planargraph$.
When $\planargraph$ contains a cycle,
we can cut the cycle through adding an arc on the envelope
into $\calH$.
See Figure~\ref{fig:pointorder} for an example.
Only arcs $\calA_1$ (or in $\calA_2$) cannot form a cycle.
So if there is a cycle, the cycle must pass through the envelope of $\calA_1$ and $\calA_2$.
Moreover, based on Lemma~\ref{smallangle}, if $\substructure_1$ and $\substructure_2$ form a cycle,
either $\substructure_1$ or $\substructure_2$ does not contains any active region. W.l.o.g., suppose
it is $\substructure_2$.  Based on these observations, we have our algorithms as follows:

Suppose the envelope of $\substructure_1$ is
$\Path_1 = \{a_1, a_2, \ldots, a_k\}$.
$a_i$ and $a_j$ are the first and last arcs respectively which intersect $\substructure_2$.
Suppose the arc $b_{i'} \in \calA_2$ overlaps with $a_i$,
(if there is more than one such arc,
we select a minimal one, w.r.t. the arc ordering)
and the arc $b_{j'} \in \calA_2$ overlaps
with $a_j$ (if there is more than one such arc, we select a maximal one).
Since $a_i \prec a_j$ and they do not satisfy point-order consistency,
we have $b_{i'} \succ b_{j'}$.
We can see that all arcs between $b_{i'}$ and $b_{j'}$ cannot intersect with $\Path_1$.
So we can select
one arc between $b_{i'}$ and $b_{j'}$ to add in $\calH$
for cutting  $\substructure_2$ into two.
After the cut, any cycle in $\planargraph$ can be broken.

After cutting  $\substructure_2$, $\substructure_1$  overlaps with both of the new substructures obtained from $\substructure_2$.
Then, we encounter the same situation as in
Section~\ref{subsec:GOC} (a node in $\auxgraph$ has
two incident red edges).
We can apply the operation in Section~\ref{subsec:GOC}
to cut $\substructure_1$ such that the set of red edges in $\auxgraph$ is still a matching.

\subsection{The number of disks in $\calH$}
\label{subsec:number}

Finally, we count collectively the total number of disks
that we have added in $\calH$.
First, we add the square gadget for each nonempty small square in $\calH$.
The number of the disks is $O(K^2)$, where $K=L/\mu=O(1/\epsilon^2)$.
In order to cut overlapping active regions,
the number of disks we add in $\calH$ is
bounded by the number of active regions.
Since there are $O(K^2)$ gadgets, we add $O(K^2)$ disks in
Section~\ref{subsec:merge}.
In Section~\ref{subsec:cutoff}, to ensure that each substructure is
non-self-intersecting and contains at most one active region,
we design a greedy algorithm.
We can see that
each arc we added in $\calH$ covers at least one intersection point of two disks in $\calH'$,
where $\calH'$ is the set $\calH$ before this step.
The algorithm guarantees that
each arc which we add in $\calH$ do not cover the same intersection point
on the boundary of $\calH'$.
Since the union complexity of unit disks is linear~\cite{Kedem86}
and  $\calH'$ contains at most $O(K^2)$ disks,
there are at most $O(K^2)$  intersection points on $\partial\calH'$.
So, we add at most $O(K^2)$ in this step.
In Section~\ref{subsec:GOC}, we break the cycles in $\auxgraph$.
The number of disks we add $\calH$ is  proportional to the number of substructures.
So, again, we add at most $O(K^2)$ disks.
Similarly, in Section~\ref{subsec:POC}, we also add at most $O(K^2)$ disks.
To summarize, we have added at most $O(K^2)$ disks in $\calH$.

\section{Time Complexity}
\label{sec:time}

The time complexity contains three parts. 
The first part is to enumerate all combination of $\calG$. 
We set $C = O(\frac{K^2}{\epsilon})$ (since we need $C>|\calH|/\epsilon$). 
Since $K = O(\frac{1}{\epsilon^2})$, the number of combinations is bounded by
$n^{C} =n^{O(1/\epsilon^5)}$. 
The second part is the construction of set $\calH$. 
It is easy to see that the time cost for each operation (i.e., label-cut) is no more than $O(n^2)$. Thus, the time cost is $O(K^2n^2) = O(n^2/\epsilon^4)$. 
The last part is the dynamic program. 
There are at most $O(K^2)$ substructures and at most $O(n^2)$ intersection points
in each substructure. 
Thus, the number of total states is at most $O((n^2)^{K^2})$. 
For each recursion, the time cost is at most $O(n)$. 
Thus, the overall time complexity of the dynamic program is
$O(n^{2K^2+1}) = n^{O(1/\epsilon^4)}$. 

Overall, the total time cost is 
$n^{O(1/\epsilon^5)}\cdot \max\{n^2/\epsilon^4,n^{O(1/\epsilon^4)} \}) =
  n^{O(1/\epsilon^9)}$.  
This finishes the proof of Lemma~\ref{thm:02}.

\section{Applications}
\label{sec:app}
The weighted dominating set problem (\MWDS) in unit disk graphs has numerous applications in the areas of wireless sensor networks~\cite{du2012connected}.
In this section, we show that our PTAS for \UDC\ can be used to obtain
PTASs for two important problems in this domain.

\subsection{Connected Dominating Set in UDG}
The goal for the \emph{minimum-weighted connected  dominating set} problem (\MWCDS)
is to find a dominating set which induces a connected subgraph and has the minimum
total weight.
Clark et al.~\cite{Clark91} proved that \MWCDS\  in unit disk graphs is
NP-hard.
Amb\"{u}hl et al.~\cite{Ambuhl06} obtained the first constant factor approximation algorithm for \MWCDS (the constant is 94).
The ratio was subsequently improved in a series of
papers~\cite{Huang09,dai20095,Erlebach10}.
The best ratio known is 7.105~\cite[pp.78]{du2012connected}.

One way to compute an approximation solution for \MWCDS\ is to first compute
minimum weighted dominating set (\MWDS) and then connect the dominating set
using a \emph{node-weighted
steiner tree} (\NWST) \cite{Huang09, Zou2009optimization}.
%Huang et al. \cite{Huang09} give a 4-approximation
%algorithm for \NWST.
The optimal \MWDS\ value is no more than the optimal \MWCDS\ value.
After zeroing out the weight of all terminals, 
the optimal \NWST\ value (for any set of terminals) is also no more than the optimal \MWCDS\ value.
Hence, if there is an $\alpha$-approximation for \MWDS (or equivalently \UDC)
and a $\beta$-approximation for \NWST, then there is an
$\alpha+\beta$ factor approximation algorithm for \MWCDS.

Zou et al.~\cite{Zou2009optimization} show that there exists a $2.5\rho$-approximation for \NWST\
if there exists  a
$\rho$-approximation for the classical edge-weighted \emph{minimum steiner tree
problem}. The current best ratio for minimum steiner
tree is $1.39$~\cite{byrka2010improved}.
Thus, there exists a $3.475$-approximation for \NWST.
Combining with our PTAS for
\UDC, we obtain the following improved result for \MWCDS.
\begin{theorem}
	\label{thm:cds}
  There exists a polynomial-time $(4.475+\epsilon)$-approximation for \MWCDS\  for any fixed constant $\epsilon>0$.
\end{theorem}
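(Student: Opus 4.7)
The plan is to follow the standard two-phase template used in the prior literature on \MWCDS: first compute an approximately optimal dominating set, then connect it with an approximately optimal node-weighted Steiner tree. Specifically, I would invoke Theorem~\ref{thm:UDC} (via Corollary~\ref{cor:DS}) to produce, in polynomial time, a dominating set $D$ of the input unit disk graph whose weight satisfies $w(D) \leq (1+\epsilon') \OPT_{\MWDS}$ for a parameter $\epsilon'$ to be tuned at the end. Since any optimal connected dominating set is also a dominating set, $\OPT_{\MWDS} \leq \OPT_{\MWCDS}$.

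Next I would connect $D$ by solving an instance of \NWST\ on the full unit disk graph with terminal set $D$. Following the reduction pointed out in the preceding discussion, I zero out the weight of every node in $D$ before solving \NWST; this does not change the optimum of \NWST\ beyond a harmless additive $w(D)$ that we already account for, and it ensures that the optimal \NWST\ cost (on the modified weights) is at most $\OPT_{\MWCDS}$, because the Steiner nodes of any optimal \MWCDS\ solution form a feasible Steiner tree connecting $D$. Applying the $2.5\rho$-approximation of Zou et al.~\cite{Zou2009optimization} with the Byrka et al.~\cite{byrka2010improved} edge-weighted Steiner tree approximation ($\rho = 1.39$) yields a tree $T$ with $w(T) \leq 3.475 \cdot \OPT_{\MWCDS}$ (on the zeroed weights).

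Returning $D \cup V(T)$ gives a feasible connected dominating set. Its total weight is bounded by $w(D) + w(T) \leq (1+\epsilon')\OPT_{\MWCDS} + 3.475 \cdot \OPT_{\MWCDS} = (4.475 + \epsilon')\OPT_{\MWCDS}$. Choosing $\epsilon' = \epsilon$ and noting that both subroutines run in polynomial time for any fixed $\epsilon > 0$ yields the claimed $(4.475+\epsilon)$-approximation in polynomial time.

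The only potentially subtle point — and what I would consider the main bookkeeping obstacle — is justifying cleanly that $\OPT_{\NWST}(D) \leq \OPT_{\MWCDS}$ after zeroing the weights on $D$. This requires observing that an optimal \MWCDS\ solution $S^\star$ contains $D^\star$ (its dominating set) and $S^\star \setminus D^\star$ (its Steiner/connector nodes), and that together with $D$ the nodes $S^\star \setminus D^\star$ suffice to connect $D$ in the unit disk graph (because $S^\star$ is connected and every vertex of $D$ is dominated by, hence adjacent to, some vertex of $D^\star \subseteq S^\star$, so adding $D$'s vertices to the connected graph induced by $S^\star$ preserves connectivity). The weight of these connector nodes is at most $\OPT_{\MWCDS}$, giving the required bound. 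With that lemma in hand, the theorem follows by the additive composition outlined above.
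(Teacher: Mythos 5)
Your proposal is correct and follows the same route the paper takes: run the \UDC\ PTAS (Corollary~\ref{cor:DS}) to get a near-optimal dominating set, zero out its weights, connect it with Zou et al.'s $2.5\rho$-approximation for \NWST\ instantiated with Byrka et al.'s $\rho = 1.39$, and add the two bounds. The connectivity justification you flag as the subtle point is exactly what makes $\OPT_{\NWST} \leq \OPT_{\MWCDS}$ go through (take the entire optimal connected dominating set as the Steiner node set), and your argument for it is sound.
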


\subsection{Maximum Lifetime Coverage in UDG}
The {\em maximum lifetime coverage problem} (\MLC) is a classical problem in wireless
sensor networks: Given $n$ targets $t_1 , \ldots, t_n$ and $m$ sensors $s_1, ...s_m$, each covering a subset of targets, find a family of sensor cover $S_1, \ldots, S_p$ with time lengths $\tau_1, \ldots, \tau_p$ in $[0,1]$, respectively, to maximize $\tau_1 + \ldots + \tau_p$ subject to that the total active time of every sensor is at most $1$.
\MLC\ is known to be NP-hard~\cite{Cardei05}.
Berman et al.~\cite{Berman05} reduced \MLC\  to the \emph{minimum weight sensor cover} (\MSC) problem through
Garg-K\"{o}nemann technique~\cite{Garg07}.
In particular, they proved that
if \MSC\ has a $\rho$-approximation,
then \MLC\ has a $(1+\epsilon)\rho$-approximation for any $\epsilon >
0$. Ding et al.~\cite{ding2012constant} noted that,
if all sensors and targets lie in the Euclidean plane and all sensors have the same covering radius, any approximation result for \UDC\ can be converted to almost the same approximation
for \MLC. Hence, the current best known result for \MLC\ is a $3.63$-approximation.
Using our PTAS, we obtained the first PTAS for \MLC.

\begin{theorem}
	\label{thm:mlc}
  There exists a PTAS for \MLC\  when  all sensors and targets lie in the Euclidean plane and all sensors have the same covering radius.
\end{theorem}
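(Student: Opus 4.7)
The plan is to obtain the PTAS for \MLC\ by composing two known reductions with our new PTAS for \UDC, so very little new work is needed. First, I would invoke the reduction of Berman et al.~\cite{Berman05} based on the Garg--K\"{o}nemann technique~\cite{Garg07}. That reduction is a black-box statement: given any $\rho$-approximation for the minimum weight sensor cover problem (\MSC), one obtains a $(1+\epsilon)\rho$-approximation for \MLC, for any constant $\epsilon>0$. The reduction works by solving the LP-relaxation of \MLC\ via a primal-dual / multiplicative-weights routine, where in each iteration one calls the \MSC\ oracle on the current dual weights (which play the role of sensor weights).

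Second, I would invoke the observation of Ding et al.~\cite{ding2012constant} that, when all sensors lie in the Euclidean plane and share a common covering radius, an instance of \MSC\ is exactly an instance of \UDC: the sensors become weighted unit disks (after rescaling so the common radius equals $1$) and the targets become the points to be covered. Hence any approximation algorithm for \UDC\ yields the same approximation for \MSC\ in this geometric setting. In particular, our Theorem~\ref{thm:UDC} provides, for any constant $\epsilon'>0$, a $(1+\epsilon')$-approximation for \MSC\ in this regime.

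Combining the two reductions, for any constant $\epsilon>0$ we choose $\epsilon_1,\epsilon_2>0$ with $(1+\epsilon_1)(1+\epsilon_2)\le 1+\epsilon$, apply Theorem~\ref{thm:UDC} with error parameter $\epsilon_2$ inside the Garg--K\"{o}nemann routine of Berman et al.\ with error parameter $\epsilon_1$, and obtain a $(1+\epsilon)$-approximation for \MLC. Each oracle call to the \UDC\ PTAS runs in time $n^{O(1/\epsilon_2^9)}$, and the Garg--K\"{o}nemann routine calls it only polynomially many times, so the overall running time is polynomial in $n$ for every fixed $\epsilon$, which is exactly what the definition of a PTAS requires.

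There is essentially no technical obstacle here, since both component reductions are previously established; the only thing to check is that the error parameters compose correctly and that the oracle model used by Berman et al.\ is consistent with what our PTAS provides (namely a polynomial-time routine that, for any weight vector produced during the multiplicative-weights updates, returns a $(1+\epsilon_2)$-approximate minimum weight cover). Since our PTAS makes no assumption on the disk weights beyond nonnegativity, this compatibility is immediate, completing the proof of Theorem~\ref{thm:mlc}.
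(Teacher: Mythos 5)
Your proof is correct and takes essentially the same route as the paper: invoke the Berman et al.\ Garg--K\"{o}nemann reduction from \MLC\ to \MSC, observe with Ding et al.\ that \MSC\ with equal covering radii in the plane is \UDC, and plug in the PTAS of Theorem~\ref{thm:UDC}. The only addition in your write-up is the explicit budgeting of the two error parameters $\epsilon_1,\epsilon_2$ and the running-time accounting, which the paper leaves implicit.
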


Let us mention one more variant of \MLC,
called maximum lifetime connected coverage problem,  studied by Du et al.\cite{DuHongwei13}.
The problem setting is the same as \MLC, except that
each sensor cover $S_i$ should induce a connected subgraph.
They obtained a $(7.105+\epsilon)$-approximation when
the communication ratios $R_c$ is no less two times
the sensing radius $R_s$.
Essentially, they showed that
an $\alpha$-approximation for \UDC\
and a $\beta$-approximation for \NWST\ imply
an $\alpha+\beta$-approximation algorithm for the connected \MLC\ problem.
Using our PTAS, we can improve the approximation ratio to $(4.475+\epsilon)$.

% Essentially, they showed that
% an $\alpha$-approximation for \UDC\
% and a $\beta$-approximation for node weighted steiner tree(\NWST) imply
% an $\alpha+\beta$-approximation algorithm for the connected \MLC\ problem.
% Note that when the steiner nodes dominate the whole graph, \NWST\ is the equivalent to  \MWCDS.
% Using our PTAS, we can improve the approximation ratio to $(1+\epsilon)$.

\section{Final Remarks}

Much of the technicality comes from the fact that the substructures interact 
each other in a complicated way
and it is not easy to ensure a globally consistent order.
The reader may wonder what if we choose more than two disks (but still a constant) in a small square,
hoping that the uncovered regions become separated and more manageable.
We have tried several other ways, like choosing a constant number of disks in the convex hull of the centers in
a small square. However, these seemed to only complicate, not to simplify, the matter.

We believe our result and insight are useful to tackle other problems 
involving unit disks or unit disk graphs.
On the other hand, our approach strongly relies on the special properties of
unit disks and does not seem to generalize to arbitrary disks with disparate radius.
Obtaining a PTAS for the weighted disk cover problem with arbitrary disks is still a central open problem in this domain.
An interesting intermediate step would be to consider the special case where the ratio between the longest radius and the shortest radius is bounded.

\bibliographystyle{acm}
\bibliography{WUDS}

\newpage
\appendix
\section{Missing Proofs}
\label{appendix}

\subsection{Missing Proofs in Section~\ref{sec:substructure}}
\label{apd:substructure}

\begin{lemma}
  \label{lm:pi}
  The central angle of any uncovered arc is less than $\pi$.
\end{lemma}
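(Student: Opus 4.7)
The plan is to reduce, via a monotonicity argument in $\calH$, to the case where $\calH$ is just the gadget $\Disk_s\cup\Disk_t$ of $\Disk$'s own small square, and then prove the bound by a short triangle-geometric angle-chase at $\dcenter$.

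First I would observe that if $\calH\subseteq\calH'$ then the uncovered arc of $\Disk$ with respect to $\calH'$ lies within the angular range of the one with respect to $\calH$: starting from the right crossing $R$ of $\partial\Disk$ with $\ell=\dcenter_s\dcenter_t$ and traversing $\partial\Disk\cap H^+$ counterclockwise, we are in $\calH\subseteq\calH'$ at $R$, so the first exit from $\calH'$ cannot occur before the first exit from $\calH$, and symmetrically for the last exit. So it suffices to handle $\calH=\Disk_s\cup\Disk_t$. Since $\dcenter\in\Center\subseteq \Disk_s\cap\Disk_t$ (using $r_{st}\le\sqrt{2}\mu<1$), both $\ell$-crossings of $\partial\Disk$ lie strictly inside $\calH$, so the endpoints of the uncovered arc lie strictly in the open half-plane $H^+$.

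Let $d_y$ be the signed perpendicular distance from $\dcenter$ to $\ell$ (positive in $H^+$). If $d_y\le 0$ then $\partial\Disk\cap H^+$ itself has central angle at most $\pi$, and strict inequality comes from the endpoints being strictly interior to $H^+$. In the main case $d_y>0$, let $E_s,E_t\in H^+$ be the intersections of $\partial\Disk$ with $\partial\Disk_s,\partial\Disk_t$; these are the endpoints of the uncovered arc after the reduction. The isoceles triangles $\triangle \dcenter E_s\dcenter_s$ and $\triangle \dcenter E_t\dcenter_t$ (each with two sides of length $1$) give $\angle E_s\dcenter\dcenter_s=\alpha_s:=\arccos(d_s/2)$ and $\angle E_t\dcenter\dcenter_t=\alpha_t:=\arccos(d_t/2)$, where $d_s=|\dcenter\dcenter_s|$, $d_t=|\dcenter\dcenter_t|$. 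Let $\theta,\beta_s,\beta_t$ be the interior angles of $\triangle \dcenter\dcenter_s\dcenter_t$ at the corresponding vertices, so $\theta+\beta_s+\beta_t=\pi$. Walking the full angle around $\dcenter$ through $P_t\to E_t\to E_s\to P_s\to P_t$, where $P_s,P_t$ are where the rays from $\dcenter$ through $\dcenter_s,\dcenter_t$ hit $\partial\Disk$, yields $2\pi=\theta+\alpha_s+U+\alpha_t$, where $U$ is the uncovered arc's central angle. Hence $U<\pi$ is equivalent to $\alpha_s+\alpha_t>\beta_s+\beta_t$.

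I would establish this termwise by showing $\alpha_s>\beta_s$ (and symmetrically $\alpha_t>\beta_t$). Using $\cos\alpha_s=d_s/2$ and the law of cosines $\cos\beta_s=(d_s^2+r_{st}^2-d_t^2)/(2 d_s r_{st})$, the inequality $\alpha_s>\beta_s$ rearranges to $d_s^2(r_{st}-1)<r_{st}^2-d_t^2$; the left side is negative because $r_{st}<1$, and the right side is nonnegative because $d_t\le r_{st}$ (from $\dcenter\in\Center$), so strict inequality holds. The main obstacle will be the angle chase in the main case: one must verify which side of $\vec{\dcenter\dcenter_s}$ the intersection $E_s$ lies on (this relies on $\dcenter\in H^+$) and confirm that the CCW order around $\dcenter$ really is $P_t,E_t,E_s,P_s$, so that the decomposition $2\pi=\theta+\alpha_s+U+\alpha_t$ is valid. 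Once that is in place, the final cosine-rule comparison is immediate.
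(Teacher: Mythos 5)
Your approach is sound and is a \emph{different} route from the paper's. The paper argues by contradiction: if $\angle(a)\geq\pi$ then $\dcenter$ must lie in the cap cut off by the chord $MM'$; it then draws an auxiliary line through the lower endpoint $M$ parallel to $\dcenter_s\dcenter_t$, computes the length of the chord of the central area $\Center$ on that line, claims it exceeds $1$, and concludes $\dcenter$ lies on the far side of that chord, outside the cap. Your version instead proves an exact identity $2\pi=\theta+\alpha_s+U+\alpha_t$, reduces $U<\pi$ to $\alpha_s+\alpha_t>\beta_s+\beta_t$, and disposes of this termwise by the law of cosines: the rearrangement $d_s^2(r_{st}-1)<r_{st}^2-d_t^2$ holds strictly because $r_{st}<1$ and $d_t\leq r_{st}$ (from $\dcenter\in\Center$). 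The monotonicity reduction to $\calH=\Disk_s\cup\Disk_t$ is correct, is clean, and corresponds to the paper's one-line remark that adding disks only shrinks arcs. On the whole your argument is more systematic and quantitative than the paper's somewhat hand-wavy chord-length claim.

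The gap you flag is a real one and is not trivial to discharge. To get the decomposition $2\pi=\theta+\alpha_s+U+\alpha_t$ you need (i) the two endpoints of the gadget's uncovered arc to lie one on $\partial\Disk_s$ and one on $\partial\Disk_t$ (rather than both on the same circle), and (ii) each to be the ``upper'' intersection relative to the ray $\dcenter\dcenter_s$ (resp.\ $\dcenter\dcenter_t$). Both reduce to the claim that the two $\ell$-crossings of $\partial\Disk$ satisfy $R\in\Disk_t\setminus\Disk_s$ and $L\in\Disk_s\setminus\Disk_t$. That in turn is a genuine (if short) computation: e.g.\ $R\notin\Disk_s$ because $x_0+\sqrt{1-y_0^2}>1$, which follows from $x_0\geq r_{st}-\sqrt{r_{st}^2-y_0^2}$ and the strict monotonicity of $r\mapsto r-\sqrt{r^2-y_0^2}$ on $r\in[r_{st},1]$. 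Once $R\in A_t\setminus A_s$ is established, the arc $A_t=\partial\Disk\cap\Disk_t$ (symmetric about $P_t$ with half-angle $\alpha_t$) contains $R$, so walking ccw from $R$ exits $A_t$ at $P_t+\alpha_t=E_t$; similarly $E_s=P_s-\alpha_s$; and $\dcenter\in H^+$ puts $P_t,P_s$ below $\ell$ so the ccw sweep from $P_t$ to $P_s$ is $2\pi-\theta$. So the plan goes through, but this supporting lemma needs to be written out; as stated, ``the intersections of $\partial\Disk$ with $\partial\Disk_s,\partial\Disk_t$ in $H^+$'' need not even be unique, and identifying them with the uncovered-arc endpoints is exactly the content that must be proved.
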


\begin{proof}
 We only need prove the arc of a gadget is less than $\pi$.
 As we union all gadgets and add more and more disks in $\calH$, the central angle of an arc only
 becomes smaller.
 So, now we fix a square gadget $\gadget(\ssquare)$.
 Consider the substructure above line $\dcenter_s\dcenter_t$.
 See the right hand side of Figure~\ref{fig:gadget} for an example.
 Suppose an arc $a$ with endpoints  $M$ and $M'$ on the boundary of $\gadget$.
 The center of the arc is in the central area of $\gadget$.
 If the central angle $\angle(a)\geq\pi$, its center should lie in the cap region bounded
 by $a$ and the chord $MM'$.
 Without loss of  generality, we suppose $M$ is closer to the line $\dcenter_s\dcenter_t$ than $M'$. Draw a auxiliary line at $M$ which is parallel to $\dcenter_s\dcenter_t$.
 If the line does not intersect  the central area, the center $a$ locates
 below the line $MM'$ (Hence, outside the cap region),
 thus the central angle is less than $\pi$.
 If the auxiliary line intersects the boundary of central area at point $N$.
 Then, we can see that
 $$
 |MN|=\sqrt{|MD_s|^2 - x^2}+|\dcenter_s\dcenter_t| - \sqrt{|ND_t|^2- x^2},
 $$
 where
 $x$ is the vertical distance between point $M$ and line
 $\dcenter_s\dcenter_t$.
 It is not difficult to see that $|MN| > 1$.
 It means that the
 center of the arc locates below the line $MN$ (Otherwise,
 the distance from the center to $M$ is larger than $|NM|$, which is larger than $1$,
 rendering a  contradiction).
 So, the central angle of any arc
 is  less than $\pi$.
\end{proof}

\subsection{Missing Proofs in Section~\ref{subsec:merge}}
\label{apd:merge}

{\bf Lemma~\ref{double-mix}}
{\em
Consider two non-empty small squares $\ssquare$, $\ssquare'$.
Suppose the square gadgets in the two
squares are $\gadget(\ssquare)=(\Disk_s, \Disk_t)$
and $\gadget'(\ssquare')=(\Disk_s', \Disk_t')$.
The active region pairs
$(\activeR_1, \activeR_2)$ and $( \activeR_1', \activeR_2' )$ are associated with
gadget $\gadget(\ssquare)$
and $\gadget'(\ssquare')$ respectively.
$(\activeR_1, \activeR_2)$ and $(\activeR_1',
\activeR_2')$ form a double-mixture.
Then, the following statements hold:
\begin{enumerate}
\item Their corresponding squares $\ssquare, \ssquare'$ are adjacent;
\item The core-central areas of $\gadget(\ssquare)$ and $\gadget'(\ssquare')$ overlap;
\item The angle between $\dcenter_s\dcenter_t$ and $\dcenter_s'\dcenter_t'$ is $O(\epsilon)$
\item None of the two core-central areas can overlap
with any small squares other than $\ssquare$ and $\ssquare'$.
\end{enumerate}
}

\begin{proof}

Suppose the core-central areas of $\gadget$ and $\gadget'$ are
$\coreCenter$ and $\coreCenter'$ respectively. Because  $( \activeR_1, \activeR_2)$ and
$(\activeR_1', \activeR_2') $ form a double-mixture.
There exists at
least one disk centered in $\ssquare'$ which appears in both $\activeR_1$ and $\activeR_2$.
Thus, the disk is centered in $\coreCenter$.
It means that $\coreCenter$ and $\coreCenter'$ overlap. Hence,
$\ssquare$ and $\ssquare'$ are adjacent. See Figure~\ref{fig:merge}.

 It is easy to see that any core-central area can overlap with at most two squares. Since
 $\coreCenter$ and $\coreCenter'$  overlap, at least one of them overlaps with both
 $\ssquare$ and $\ssquare'$. Without loss of generality,  suppose  $\coreCenter$ overlap with both
 $\ssquare$ and $\ssquare'$. We only need to prove $\coreCenter'$ cannot overlap with
 other squares other than $\ssquare$ and $\ssquare'$.

 Our proof needs a useful notion, called \emph{apex angle}.  Consider
 a square gadget. See the left one of Figure~\ref{fig:gadget} (and we use the notations there).
 The line $D_sE$ and
 $D_sF$ are the tangent lines to the boundary of core-central area at point $D_s$.
 We define
 \emph{apex angle} $\apex$  to be $\angle E\dcenter_sF$.
 We notice that $\apex$ only depends on the distance
  $|\dcenter_s\dcenter_t|$. When the side length of a small square is $\epsilon$,
  $\apex$ is $O(\epsilon)$.
  In fact,  even the core-central area is completely determined by $\dcenter_s$ and
  $\dcenter_t$.
  Thus, we can generalize the concept \emph{core-central area} by using any two overlapping
  disks in $\calH$.

Now go back to our proof.
See Figure~\ref{fig:merge}.
The segments $\dcenter_{s}\dcenter_{t}$ and $\dcenter_{s}'\dcenter_{t}'$ cannot intersect because they belong to
different small squares. Suppose the squares $\ssquare$ and $\ssquare'$ are
$A_1A_2A_5A_6$ and $A_2A_3A_4A_5$, respectively.
$A_2A_5$ is the common side.
W.l.o.g, suppose $\dcenter_s'$ is closer to line $A_2A_5$ than $\dcenter_t'$.
If any disk
centered in $\coreCenter'$ can appear in  $\activeR_1$ but outside the disk $\Disk_{s}'$
(i.e., outside the disks $\Disk_{s} \cup \Disk_{t} \cup \Disk_{s}'$),
the core-central areas defined by
$\dcenter_{s}'\dcenter_{t}$
	\footnote{$\dcenter_{s}'$ and $\dcenter_{t}$ do not locate in
	the same square. This is core-central area in the generalized sense..
	}
and $\coreCenter'$ should overlap nontrivial (not only touch at
point $\dcenter_{s}'$).
Thus, the angle $\angle\dcenter_{t}\dcenter_{s}'\dcenter_{t}'$ should
less than $O(\epsilon)$. It means  angle between $\dcenter_s\dcenter_t$ and $\dcenter_s'\dcenter_t'$
is at most $O(\epsilon)$. Moreover, angle between line $\dcenter_{s}'\dcenter_{t}'$
and $A_5A_4$ is no
less than  $\frac{\pi}{2} - O(\epsilon) > O(\epsilon)$. Thus, $\coreCenter'$ cannot intersect
$A_5A_4$. Hence, $\coreCenter'$ cannot overlap other squares.
\end{proof}

\begin{figure}[t]
  \centering
  \includegraphics[width=0.4\textwidth]{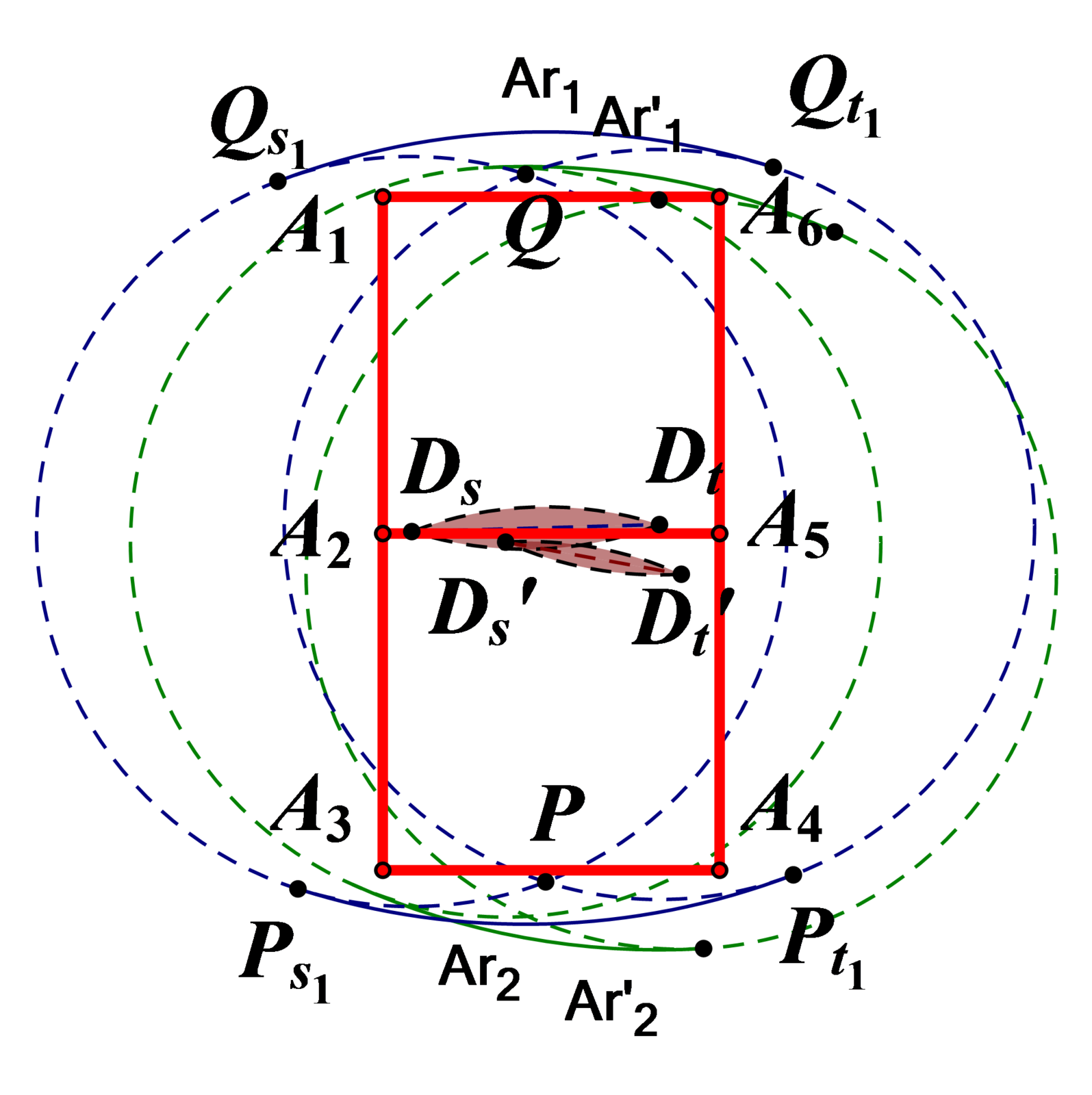}
  \caption{The case that arcs in two different active regions are not order-separable.
  	The two adjacent squares are $\ssquare=A_1A_2A_5A_6$ and $\ssquare'=A_2A_3A_4A_5$.
  	$(\Disk_s, \Disk_t)$ is the square gadget in $\ssquare$
  	and $(\Disk_s', \Disk_t')$ is the square gadget in $\ssquare'$.
  	The active
    regions $\activeR_1$ and $\activeR_2$ belong to the $\gadget(\ssquare)$,
    while the $\activeR_1'$ and $\activeR_2'$ belong to the $\gadget'(\ssquare')$.
    %The arcs in
    %$\activeR_1$ and $\activeR_1'$ are not order-separable. We can prove the arcs in %$\activeR_2$ and
    %$\activeR_2'$ are not order-separable either.
    }
  \label{fig:merge}
\end{figure}

\noindent
{\bf Lemma~\ref{lm:no single mixture}}
{\em
Suppose active region pairs $(\activeR_1(\calA_1)$, $ \activeR_2(\calA_2) )$ and $(
\activeR_1'( \calA_1')$, $\activeR_2'(\calA_2'))$ are associated with gadget $\gadget$ and
$\gadget'$ respectively.
If $\calA_1$ and  $\calA_1'$ are
order-separable, then  $\calA_2$ and $\calA_2'$  are also order-separable.
}
\begin{proof}
We prove the lemma by contradiction. We only consider the arcs which have
siblings. Suppose $\calA_1$ and  $\calA_1'$ are order-separable but $\calA_2$
and $\calA_2'$  are not order separable. W.l.o.g., assume $a \prec a', \forall a
\in \calA_1, \forall a' \in \calA_1'$. Since $\calA_2$ and $\calA_2' $ are not
order-separable, there exist arcs  $b_1, b_2 \in \calA_2$ and $b' \in \calA_2'$
such that $ b_1 \prec b' \prec b_2$. (If not,  there exist arcs $b \in \calA_2$ and $b'_1,
b'_2 \in \calA_2'$ such that $b'_1 \prec b \prec b'_2$.)  Suppose $a_2$ is the
sibling of $b_2$, $a'$ is the sibling of $b'$. Thus, $a' \prec a_2$ based on the
same proof to Lemma~\ref{lemma:04:03}. It yields a contradiction to the assumption that $a \prec a',
\forall a \in \calA_1, \forall a' \in \calA_1'$.
\end{proof}

\noindent
{\bf Lemma~\ref{smallregion}}
{\em
  Consider the substructure $\substructure$ which contains an active region of square
  $\ssquare$.  The substructure can be  cut  into at most three smaller substructures, by
  doing label-cut twice.
  At most one of them
  contains the active region. Moreover,  this new substructure (if any) is
  bounded by the region $\Domain(\ssquare^+)$ associated with $\ssquare$.
%$(\Disk(D,1) - \Disk_s - \Disk_t )\cap H^{+}$.
}

\begin{proof}
 % For simplicity, we use $\Disk$ to represent $\Region(\Disk)$ in this proof.
See Figure~\ref{fig:smallActiveRegion}. We use the same notations defined 
in Section~\ref{subsec:merge}.
  The entire active region $[\bigcup_{i:\dcenter \in \coreCenter} \Disk_i - ( \Disk_s \cup
\Disk_t)] \cap H^{+}$ is a sub-region of  $(\Disk(P, 2) - \Disk_s - \Disk_t) \cap
H^{+}$. Thus,  any disk centered in $\Disk(Q_s, 1) \cap \Disk(Q_t,1) \cap H^{+}$ can
cover the active region. So, if there is a disk centered in $\Disk(Q_s, 1) \cap \Disk(Q_t,1) \cap
H^{+}$, we add it into $\calH$ such that the entire active region is covered.
If not,  we prove we can cut  $\substructure$ such that the new substructure contains the active
region is  bounded by the region   $(\Disk(D,1) - \Disk_s - \Disk_t )\cap H^{+}$.  The disks
centered in $\Region(\ssquare) \cap \Disk(Q_s,1) - \Disk(Q_s, 1) \cap \Disk(Q_t,1))$  cover
point $Q_s$. Thus, the arcs of these disks  are order-separable to the arcs in active
region. Similarly, the arcs of disks centered in $\Region(\ssquare) \cap \Disk(Q_t,1) - \Disk(Q_t,
1)\cap \Disk(Q_s,1)$ are also order-separable to the arcs in active region. Thus, we can add two
disks in $\calH$ to label-cut  $\substructure$ into
at most three new substructures. Obviously, there is only one of them containing arcs centered in
$\ssquare - \Disk(Q_s, 1) \cup \Disk(Q_t,1)$.
Hence, this new substructure is bounded in the region
$(\Disk(D,1) - \Disk_s - \Disk_t )\cap H^{+}$.
\end{proof}

\subsection{Missing proofs in Section~\ref{subsec:cutoff}}
\label{apd:cutoff}

{\bf Lemma~\ref{arcuniqueness}}
{\em
  In each of the above iterations, one substructure $\substructure(\baseline, \calA)$ is cut into
  at most two new substructures. Any original arc in
  $\calA$ cannot be cut into two pieces, each of which belongs to a different new substructure.
}

\begin{proof}
Figure~\ref{fig:point-cut-off} gives an explanation about the change of
substructures before and after the process.
Suppose the envelope is $\{a_1[Q_s, Q_1], \ldots, a_k[Q_{k-1},Q_t] \}$.
After we add the disk $\Disk(a_i)$ in
$\calH$, the baselines of the two new substructures are
$\baseline_1 = \baseline[Q_s, P_{i+1}] \cup
a_{i+1}[P_{i+1}, Q_i]$ and
$\baseline_2 = a_{i+1}[Q_{i+1},P_{i+1}'] \cup \baseline[P_{i+1}', Q_t]$.
Since $a_{i+1}$ is an arc lying on the envelope, there does not exist an arc $b$ with endpoints
$(A, B)$ such that $A \prec P_{i+1} \prec P_{i+1}' \prec B$.
So the endpoints of any arc in
$\substructure_1$ cannot be in $\Region(\substructure_2)$. Thus, any arc cannot be separated
into two substructures.
\end{proof}

\subsection{Missing proofs in Section~\ref{subsec:GOC}}
\label{apd:GOC}

\begin{lemma}
  \label{pi/4}
  Consider two disks $\Disk_u$ and $\Disk_l$ in squares $\ssquare_u, \ssquare_l$
respectively. Suppose $\Disk_u \notin \gadget(\ssquare_u) , \Disk_l \notin \gadget(\ssquare_l)$ and $\Disk_u$ and
$\Disk_l$ intersect at points $A$ and $B$. Suppose the side length of square is $\mu$. If
$\angle A\dcenter_uB$ (or $\angle A\dcenter_lB$) is more than $2\sqrt{2\mu}$, $\gadget(\ssquare_u)$ and
$\gadget(\ssquare_l)$ should overlap. Moreover, if intersection $A$ (resp. $B$) is in $\Uncover$, the two
arcs which intersect at point A (resp. $B$) are in the same substructure.
\end{lemma}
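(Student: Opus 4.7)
}
My plan is to convert the central-angle hypothesis into a bound on $|\dcenter_u\dcenter_l|$, and then chain this bound with the fact that $\dcenter_u$ and $\dcenter_l$ sit in small squares that also contain the gadget centers, to force the gadget disks $\Disk_s$ and $\Disk_s'$ to overlap. The ``moreover'' statement will then follow by locating a common boundary point of $\Disk_s$ and $\Disk_s'$ and verifying that it lies on the baseline covered by the uncovered arcs of $\Disk_u$ and $\Disk_l$.

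First I would compute $|\dcenter_u\dcenter_l|$ from the central angle. Because $\Disk_u$ and $\Disk_l$ are unit disks intersecting in the chord $AB$, dropping the perpendicular from $\dcenter_u$ to $AB$ and using the right triangle formed with $A$ gives the standard identity $|\dcenter_u\dcenter_l|=2\cos(\theta/2)$, where $\theta=\angle A\dcenter_u B=\angle A\dcenter_l B$. The hypothesis $\theta>2\sqrt{2\mu}$ therefore yields $|\dcenter_u\dcenter_l|<2\cos(\sqrt{2\mu})\le 2-2\mu+O(\mu^2)$ for small $\mu$.

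Next I would argue that $\gadget(\ssquare_u)$ and $\gadget(\ssquare_l)$ share a point by examining the line through $\dcenter_u$ and $\dcenter_l$. Among the two disks forming $\gadget(\ssquare_u)$, pick $\Disk_s$ to be the one whose center lies on the $\dcenter_l$-side of $\dcenter_u$ in the direction of the segment (and symmetrically pick $\Disk_s'\in\gadget(\ssquare_l)$). Let $E$ be the intersection of ray $\overrightarrow{\dcenter_u\dcenter_l}$ with $\partial\Disk_s$, and $F$ the intersection of ray $\overrightarrow{\dcenter_l\dcenter_u}$ with $\partial\Disk_s'$. Because $\dcenter_u,\dcenter_s$ lie in the same small square of side $\mu$, the triangle inequality gives $|\dcenter_u E|\ge 1-|\dcenter_u\dcenter_s|\ge 1-\mu$ after the projection onto the line (and symmetrically $|\dcenter_l F|\ge 1-\mu$). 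Combining with the upper bound on $|\dcenter_u\dcenter_l|$ from the previous step, we get $|\dcenter_u E|+|\dcenter_l F|\ge 2-2\mu>|\dcenter_u\dcenter_l|$, so the two sub-segments $[\dcenter_u,E]$ and $[F,\dcenter_l]$ on the line must overlap. Any point in the overlap lies in both $\Disk_s$ and $\Disk_s'$, so $\gadget(\ssquare_u)\cap\gadget(\ssquare_l)\ne\emptyset$.

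For the ``moreover'' part I would take an intersection point $I$ of $\partial\Disk_s$ and $\partial\Disk_s'$ (guaranteed to exist by the previous step). The point $I$ lies on $\partial\calH$ on the same side as the uncovered intersection $A$ (if some third disk of $\calH$ happened to engulf it, one passes to the nearby intersection of the boundary of the offending disk with $\partial(\Disk_s\cup\Disk_s')$, which again lies on $\partial\calH$). I would then verify directly that $I\in\Disk_u\cap\Disk_l$: since $I$ is at distance at most $1$ from both $\dcenter_s$ and $\dcenter_s'$, and $|\dcenter_u\dcenter_s|,|\dcenter_l\dcenter_s'|\le\sqrt{2}\mu$, a short triangle-inequality check together with the fact that $I$ lies close to the segment $\dcenter_u\dcenter_l$ (hence at distance $\le 1$ from each endpoint for small $\mu$) gives $|\dcenter_u I|\le 1$ and $|\dcenter_l I|\le 1$. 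By the definition of uncovered arc, this means the arcs of $\Disk_u$ and $\Disk_l$ passing through $A\in\Uncover$ both cover $I$ on the baseline, so they belong to the same substructure.

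The main obstacle will be the ``moreover'' step: one must rule out the degenerate possibility that the common boundary point of $\Disk_s$ and $\Disk_s'$ is buried inside yet another gadget disk (so that $I\notin\partial\calH$), and one must pick the correct gadget representatives $\Disk_s,\Disk_s'$ so that the triangle-inequality margins actually leave room for the numerical constant $2\sqrt{2\mu}$ in the hypothesis. The first part of the proof, being a routine calculation once the right representatives are fixed, should be straightforward.
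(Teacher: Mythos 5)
Your proof follows essentially the same route as the paper's: you convert the central angle $\theta$ to $|\dcenter_u\dcenter_l| = 2\cos(\theta/2)$, take the intersection points $E,F$ of the segment $\dcenter_u\dcenter_l$ with $\partial\Disk_s,\partial\Disk_s'$, invoke the triangle inequality to bound $|\dcenter_uE|,|\dcenter_lF|$ from below by roughly $1-\mu$ (using that $\dcenter_u,\dcenter_s$ and $\dcenter_l,\dcenter_s'$ each share a small square), and conclude that the gadget disks must overlap; then for the ``moreover'' claim you locate an intersection point $I$ of $\partial\Disk_s$ and $\partial\Disk_s'$ on $\partial\calH$ and argue that both uncovered arcs reach it, placing them in the same substructure. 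This is exactly the paper's argument, down to the choice of auxiliary points $E,F$. Two small remarks: the explicit ``pick $\Disk_s$ on the $\dcenter_l$-side'' is unnecessary because both gadget disks are centered in the same square so the triangle-inequality bound applies to either; and your worry that the common boundary point of $\Disk_s,\Disk_s'$ might be swallowed by some other $\calH$-disk is a legitimate subtlety that the paper glosses over with ``it is obvious,'' so flagging it is a welcome precaution rather than a weakness.
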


\begin{proof}
 Suppose $\Disk_u$ and $\Disk_l$ intersect at point $A$ and $B$. $\Disk_s \in \gadget(\ssquare_u)$, $\Disk_s' \in
 \gadget(\ssquare_l)$. Thus, $\Disk_u, \Disk_s$ locate in the same square $\ssquare_u$ and $\Disk_l, \Disk_s'$
 locate in the same square. The line $\dcenter_u\dcenter_l$  intersect disk $\Disk_s$  at point $E$ and
 intersect disk $\Disk_s'$ at point $F$. $|\dcenter_uE|$ and $|\dcenter_lF|$ are no less
than $1-\mu$ according to the triangle inequality. Suppose $\angle A\dcenter_uB$ is
$\theta$. Then the length
$|\dcenter_u\dcenter_l|$ equals $2 \cos(\frac{\theta}{2})$. Since $\theta > 2\sqrt{2\mu}$, we have
$|\dcenter_uE| + |\dcenter_lF| > |\dcenter_u\dcenter_l|$, i.e. $\Disk_s$ and $\Disk_s'$
overlap. Figure~\ref{fig: pi/4} illustrates the situation. If $A$ (or $B$) is in $\Uncover$, it
is obvious that the two arcs which intersect at $A$ (or $B$) can cover the same point on the baseline (i.e., the
intersection point of $\Disk_s$ and $\Disk_s'$) which means they are in the same substructure.
\end{proof}

\begin{figure}[h]
  \centering
  \includegraphics[width=0.3\textwidth]{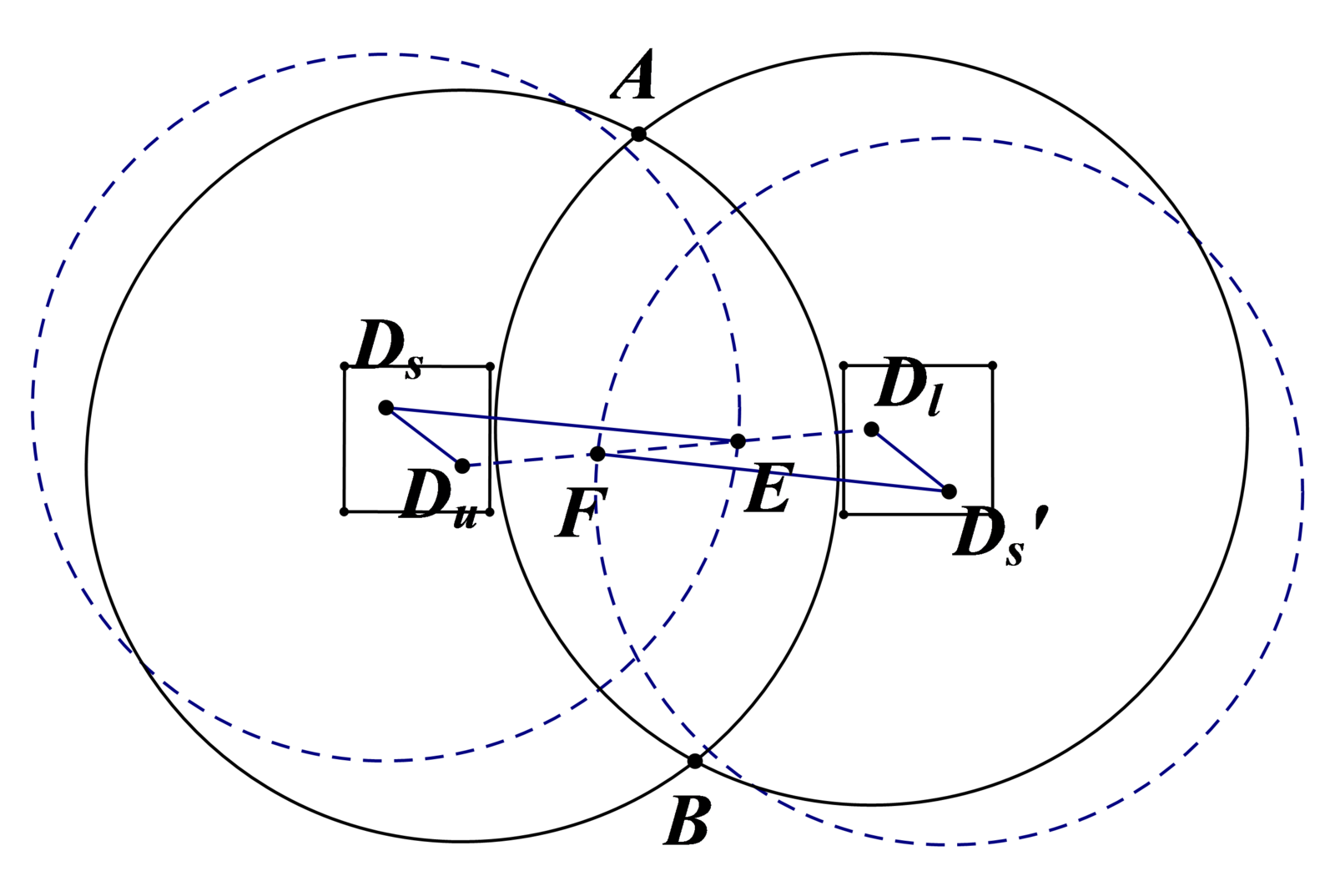}
  \caption{$\Disk_u$ and $\Disk_l$ intersect at point $A$ and  $B$. Suppose the side length of
    square is $\mu$. If the central angle of arc
    $a[A,B]$ is more than $2\sqrt{2\mu}$.   $\Disk_s, \Disk_s'$ should
    overlap, where $\Disk_s \in \gadget(\ssquare_u), \Disk_s' \in \gadget(\ssquare_l)$.}
  \label{fig: pi/4}
\end{figure}

Based on the Lemma~\ref{pi/4},  we prove  Lemma~\ref{oneByone} as follows.

\noindent
{\bf Lemma~\ref{oneByone}}
{\em
  We are given a substructure $\substructure(\baseline,\calA)$.
  $a \in \calA$ and two arcs $b_1, b_2 \notin \calA$.
  If $b_1, b_2$ cover the same point on
  $a$,  $b_1, b_2$ should belong to the same substructure.
}

\begin{proof}
  We prove the lemma by contradiction.  Suppose the arcs $a, b_1, b_2$ are parts of $\Disk, \Disk_1, \Disk_2$
respectively. Consider a point $P$ on $\Disk$ that can be covered by all of $a, b_1, b_2$.  $\Disk$ and
$\Disk_1$ intersect at $A_1$ and $B_1$, meanwhile $\Disk$ and $\Disk_2$ intersect at $A_2$ and
$B_2$. See Figure~\ref{fig:onebyone}. Since $\Disk$ and $\Disk_1$ belong to
different substructures, we know  $\angle A_1DB_1$ is less than $2\sqrt{2\mu}$ based on
Lemma~\ref{pi/4}. Similarly, $\angle A_2DB_2$ is less than $2\sqrt{2\mu}$. It is easy to see that
$\angle D_1DD_2$ is less than $2\sqrt{2\mu}$. When $\angle D_1DD_2$ is $O(\sqrt{\epsilon})$ when $\mu = O(\epsilon)$. $|D_1D_2|$ is less than $1$ as
$|DD_1|$ and $|DD_2|$ are more than $1$ and less than 2.
Suppose $A_3$ and $B_3$ are the two intersection points of $\Disk_1$ and $\Disk_2$.
Thus, we have $\angle A_3 D_1 B_3 > \pi/3> 2\sqrt{2\mu}$.
Moreover, $P$ is in $\Uncover$.
By Lemma~\ref{pi/4}, we can see that it contradicts the fact that $\Disk_1$ and $\Disk_2$ belong to
two different substructures. 
\end{proof}

\begin{figure}[h]
  \centering
  \includegraphics[width=0.3\textwidth]{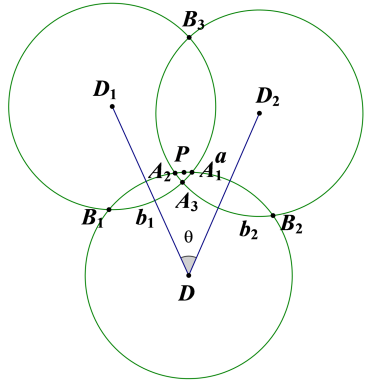}
  \caption{ Consider a point $P$ on $\Disk$. $a, b_1, b_2$ which belong to disks $\Disk, \Disk_1,
\Disk_2$ respectively, can cover the point $P$. $\Disk$ and $\Disk_1$ intersect at $A_1$ and $B_1$,
meanwhile $\Disk$ and $\Disk_2$ intersect at $A_2$ and $B_2$. Then, $\Disk_1$ and $\Disk_2$
belong to the same substructure.}
  \label{fig:onebyone}
\end{figure}

\end{document}